\documentclass{ws-ijfcs}
\usepackage{enumerate}
\usepackage{url}
\usepackage{fixltx2e}

\usepackage{amssymb}
\usepackage[figuresright]{rotating}
\usepackage{subcaption}

\usepackage[utf8]{inputenc}
\usepackage{complexity}
\usepackage{endnotes}
\usepackage{listings}
\usepackage{mathtools}
\usepackage{verbatim}
\usepackage{mathrsfs}
\usepackage{longtable}
\usepackage{enumitem}
\usepackage{etoolbox}
\usepackage{float}
\usepackage{color}
\usepackage{hyperref}

\usepackage{linenofb}

\usepackage{complexity}
\usepackage{comment}
\usepackage{longtable}

\newenvironment{code}
{\par%\runninglinenumbers
\modulolinenumbers[1]%
\linenumbersep-1em
\def\linenumberfont
{\normalfont\itshape}}

\renewcommand{\t}{\hspace*{3mm}}
\renewcommand{\tt}{\hspace*{3mm}}

\usepackage{mathtools}

\newcommand*{\DEBUG}{}%

\ifdefined\DEBUG
\newcommand{\fixme}[1]{{\textcolor{red}{\bf{\textsf{FIXME: #1}}}}}
\newcommand{\bug}[1]{{\textcolor{blue}{\bf{\textsf{BUG: #1}}}}}
\newcommand{\idea}[1]{{\textcolor{blue}{\bf{\textsf{IDEA: #1}}}}}

\newcommand{\TODO}[1]{{\textcolor{red}{\bf{\textsf{ 
TODO: #1
}}}}}
\else
\newcommand{\fixme}[1]{}
\newcommand{\bug}[1]{}
\newcommand{\TODO}[1]{}
\newcommand{\idea}[1]{}
\fi

\newclass{\COMSLIP}{COM\mbox{-}SLIP}
\newclass{\COMSLIPCUP}{COM\mbox{-}SLIP^{\cup}}

\newclass{\DCM}{DCM}
\newclass{\eDCM}{eDCM}
\newclass{\eNPDA}{eNPDA}
\newclass{\DPDA}{DPDA}
\newclass{\RDPDA}{RDPDA}
\newclass{\PDA}{PDA}
\newclass{\DCMNE}{DCM_{NE}}
\newclass{\TwoDCM}{2DCM}
\newclass{\NCM}{NCM}
\newclass{\sNCM}{sNCM}
\newclass{\eNCM}{eNCM}
\newclass{\eNQA}{eNQA}
\newclass{\eNSA}{eNSA}
\newclass{\eNPCM}{eNPCM}
\newclass{\mNPCM}{mNPCM}
\newclass{\eNQCM}{eNQCM}
\newclass{\eNSCM}{eNSCM}
\newclass{\DPCM}{DPCM}
\newclass{\NPCM}{NPCM}
\newclass{\NQCM}{NQCM}
\newclass{\NSCM}{NSCM}
\newclass{\NPDA}{NPDA}
\newclass{\TRE}{TRE}
\newclass{\NFA}{NFA}
\newclass{\DFA}{DFA}
\newclass{\NCA}{NCA}
\newclass{\DCA}{DCA}
\newclass{\DTM}{DTM}
\newclass{\NTM}{NTM}
\newclass{\DLOG}{DLOG}
\newclass{\CFG}{CFG}
\newclass{\ULGC}{ULGC}
\newclass{\CFGC}{CFGC}

\newclass{\ULG}{ULG}
\newclass{\ETOL}{ET0L}
\newclass{\EDTOL}{EDT0L}
\newclass{\CFP}{CFP}
\newclass{\ORDER}{O}
\newclass{\MATRIX}{M}
\newclass{\BD}{BD}
\newclass{\LB}{LB}
\newclass{\ALL}{ALL}
\newclass{\decLBD}{decLBD}
\newclass{\StLB}{StLB}
\newclass{\SBD}{SBD}
\newclass{\TCA}{TCA}
\newclass{\RNCSA}{RNCSA}
\newclass{\RDCSA}{RDCSA}

\newclass{\DCSA}{DCSA}
\newclass{\NCSA}{NCSA}
\newclass{\DCSACM}{DCSACM}
\newclass{\NCSACM}{NCSACM}
\newclass{\NTMCM}{NTMCM}

\newclass{\SMG}{SMG}
\newclass{\RLSMG}{RLSMG}
\newclass{\LSMG}{LSMG}

\newclass{\NTPCM}{NTPCM}
\newclass{\NCSPCM}{NCSPCM}

\newclass{\PBCM}{PBCM}
\newclass{\NTPBCM}{NTPBCM}

\newclass{\NTCM}{NTCM}

\newclass{\UFIN}{\LL(IND_{UFIN})}
\newclass{\UFINONE}{\LL(IND_{{UFIN}_1})}
\newclass{\FIN}{\LL(IND_{FIN})}
\newclass{\ILIN}{\LL(IND_{LIN})}
\newclass{\ETOLfin}{\LL(ET0L_{FIN})}

\newclass{\PTIME}{PTIME}

\newclass{\GSM}{GSM}

\newclass{\CNF}{CNF}

\newsavebox{\spacebox}
\begin{lrbox}{\spacebox}
\verb*! !
\end{lrbox}
\newcommand{\LL}{{\cal L}}
\newcommand{\MM}{{\cal M}}

\DeclareMathOperator{\yd}{yd}

\begin{document}%\def\docID{38} \input{../preamble.tex}

\markboth{Oscar H. Ibarra, Ian McQuillan}
{Language Acceptors with a Pushdown: Characterizations
and Complexity}

%%%%%%%%%%%%%%%%%%%%% Publisher's Area please ignore %%%%%%%%%%%%%%%
%
\catchline{}{}{}{}{}
%
%%%%%%%%%%%%%%%%%%%%%%%%%%%%%%%%%%%%%%%%%%%%%%%%%%%%%%%%%%%%%%%%%%%%

\title{Language Acceptors with a Pushdown: Characterizations
and Complexity\thanks{Electronic version of an article published as \href{https://doi.org/10.1142/S0129054124430044}{{\em International Journal of Foundations of Computer Science}, Vol. 36, No. 03, 2025, pp. 345--370} \copyright\ \href{https://www.worldscientific.com/worldscinet/ijfcs}{copyright World Scientific Publishing Company}.}
}

\author{Oscar H. Ibarra}

\address{Department of Computer Science\\ University of California, Santa Barbara, CA 93106, USA\\ \email{{ibarra@cs.ucsb.edu}}}

\author{Ian McQuillan}

\address{Department of Computer Science, University of Saskatchewan\\ Saskatoon, SK S7N 5C9, Canada\\ \email{mcquillan@cs.usask.ca}}

\maketitle

\begin{history}
\received{(Day Month Year)}
%\revised{(Day Month Year)}
\accepted{(Day Month Year)}
\comby{(xxxxxxxxxx)}
\end{history}

\begin{abstract}
We study one-way nondeterministic pushdown automata ($\NPDA$),
optionally with reversal-bounded counters.
Finite-turn pushdown automata are pushdown automata with a bound on the number of switches between pushing and popping.
We give new characterizations for 
finite-turn pushdown automata, and for finite-turn pushdown automata augmented
with reversal-bounded counters. 
The first is in terms of multi-tape nondeterministic finite automata ($\NFA$),
and the second is in terms of
multi-tape $\NFA$ with reversal-bounded counters.  We then use the
characterizations to determine the complexity of the languages defined
by these automata.  In particular, we show that languages
accepted by finite-turn $\NPDA$ augmented with reversal-bounded counters
are in $\NLOG$.  For the non-finite-turn case, the languages
are in $\DSPACE(\log^2 n)$ and in $\P$.  We also look at the
space complexity of languages accepted by two-way machines. In particular,
we show that every language accepted by a two-way $\NPDA$ with reversal-bounded
counters that makes a polynomial
(resp., exponential) number of input head reversals
is in $\DSPACE(\log^2n)$ (resp., $\DSPACE(n^2)$). This remains true if the pushdown can flip its contents a bounded number of times.
% and in $\DTIME(n^4\log^2 n)$.
\end{abstract}

\keywords{Finite-turn Acceptors; Polynomial Time; Logarithmic Space; Counter Machines; Pushdown Automata}

\section{Introduction}
\label{sec:intro}

One-way nondeterministic pushdown automata, denoted by $\NPDA$, are one of the most well-studied classes of machines. 
This class of automata is quite practical to use. For example, it is well known that all languages accepted by pushdown automata are in $\P$ \cite{HU} and also in $\DSPACE(\log^2 n)$ (they are all accepted by $\log^2 n$ space-bounded deterministic Turing machines)
\cite{CFLslogsquared,HU1969}. Authors have also studied restrictions of $\NPDA$, such as $t$-turn $\NPDA$ (respectively finite-turn \NPDA), where in every accepting computation, the pushdown makes at most $t$ (respectively a finite number of) changes between non-decreasing and non-increasing the stack size \cite{GS2}.  
It is known that all languages accepted by finite-turn pushdown automata are in $\NSPACE( \log n) = \NLOG$ \cite{finiteturn} (they can be accepted in $\log n$ space by a nondeterministic Turing machine, which are all in $\P$).

Another type of machine model that has been extensively studied are one-way nondeterministic reversal-bounded multicounter machines. A one-way $k$-counter machine has a one-way read-only input tape, with $k$ counters as data store that can each hold a non-negative integer. At each step, a transition can detect whether a counter is positive or zero, and based on this status, either add one, subtract one, or keep it the same. While $2$-counter machines already have the same power as Turing machines \cite{HU}, we can restrict these counters to limit the power. The condition of being $r$-reversal-bounded (respectively reversal-bounded) enforces that in each accepting computation, the number of changes between sequences of non-decreasing transitions (adding one or zero at each step) and sequences of non-increasing transitions (subtracting one or zero at each step) on each counter is at most $r$ (respectively a finite number). It is also possible to combine different types of stores. The class of one-way reversal-bounded multicounter machines is denoted by $\NCM$, which were studied in \cite{Baker1974,Ibarra1978}. In \cite{Ibarra1978},
one-way nondeterministic  pushdown automata augmented with reversal-bounded counters were also studied, denoted here by $\NPCM$. 
$\NPCM$ is strictly more powerful than either $\NPDA$ or $\NCM$, and it has an $\NP$-complete non-emptiness problem (``given $M$, is $L(M) \neq \emptyset$?'')  \cite{HagueLin2011}. 
It was recently shown that $\NPCM$ has a decidable (in fact $\coNP$-complete) boundedness problem (``given $M \in \NPCM$, is $L(M)$ a bounded language?''  \cite{FoSSaCS}). 
Also recently, it was shown that the membership problem for two-way $\NPCM$ (where the input tape head can move in both directions) is $\NP$-complete \cite{decisionProblemsNCM}.

In this paper, we give new characterizations in terms of multi-tape acceptors.
We define a $\NPCM$ as being {\em finite-turn} if the pushdown is finite-turn.
The characterizations which are in terms of multi-tape
versions of nondeterministic finite automata ($\NFA$) and $\NCM$
are as follows:
\begin{enumerate}
\item finite-turn $\NPDA$ in terms of multi-tape $\NFA$,
\item finite-turn $\NPCM$ in terms of multi-tape $\NCM$.
\end{enumerate}
%These characterizations constrain usage patterns of the multi-tape input tapes. 

As by-product of the characterizations, we show that
all languages accepted by finite-turn $\NPCM$ are in $\NLOG$ and that
that all languages accepted by $\NPCM$ can be accepted by 
$\log^2(n)$ space-bounded deterministic Turing machines.
We also investigate the space complexity of languages accepted by
two-way machines.  In particular, we show that every language accepted by a
two-way $\NPCM$ that makes a polynomal (in the length of the input)
number of input head reversals is in $\DSPACE(\log^2n)$ (which generalizes to another machine model where the pushdown can flip its contents a bounded number of times).
This result  matches the space complexity of 
one-way $\NPDA$ languages (equal to the context-free languages) \cite{CFLslogsquared,HU1969}.
We also show that all $\NPCM$ languages
are in $\P$. For two-way $\NPCM$ languages, they are all in $\NP$ since the membership problem for two-way $\NPCM$ is $\NP$-complete \cite{decisionProblemsNCM}.
%; specifically, it is possible to generalize the CYK algorithm for context-free grammars so that we can decide if a given string is accepted in $O(n^4 \log^2 n)$ time.

The results in this paper significantly expand the knowledge regarding the time and space required to simulate generalizations and restrictions of one- and two-way pushdown automata with deterministic Turing machines.

\section{Preliminaries and Notation}

We assume introductory knowledge of automata and formal language theory \cite{HU}.

Let $\mathbb{Z}$ be the set of integers, $\mathbb{N}$  the positive integers, and $\mathbb{N}_0$ the non-negative integers. 
Given $x \in \mathbb{N}_0$, define $\pi(x) = 0$ if $x = 0$, and $\pi(x) = 1$ otherwise.
Given a set $X$ and 
$t \in \mathbb{N}$, let $\langle X \rangle^t$ be the set of all $t$-tuples over $X$, and given $x \in \langle X \rangle^t$, let
$x[j]$ be the $j$th component of $x$.
Given a finite alphabet $\Sigma$, let $\Sigma^*$ (respectively $\Sigma^+$) be the set of all words (respectively non-empty words) over $
\Sigma$. $\Sigma^*$ includes the empty word $\lambda$. A {\em language} $L$ is any subset of $\Sigma^*$, and a $t$-tuple
language $L$ is any subset of $\langle \Sigma^* \rangle^t$. 
Given a word $w = a_1 \cdots a_n, a_i \in \Sigma, 1 \le i \le n$, the {\em reversal}
of $w$, denoted $w^R$ is equal to $a_n a_{n-1} \cdots a_1$ with $\lambda^R = \lambda$. The
{\em length} of $w$, denoted by  $|w|$, is equal to the number of characters in $w$, and given $a\in \Sigma$, $|w|_a$ is the number of $a$'s in $w$.
%Given alphabet $\Sigma = \{a_1, \ldots, a_m \}$ and $w \in \Sigma^*$, the {\em Parikh image} of $w$, $\psi(w) = (|w|_{a_1}, \ldots, |w|_{a_m})$; and the Parikh image of a language $L \subseteq \Sigma^*$ is $\psi(L) = \{\psi(w) \mid w\in L\}$.
%Although we will not provide the formal definition of a language being semilinear, equivalently, a language is semilinear if and only if it has the same Parikh image as some regular language \cite{G78}.

Next, we will define automata with one pushdown plus some number of counters as stores. 
%All automata studied in this paper will be restrictions of this model.
%Multi-tape inputs 
%are somewhat less well studied than single tape variants, but 
%have been studied for $\NPDA$ \cite{multitapeNPDA}, $\NCM$ \cite{Ibarra1978}, and $\NPCM$ \cite{IbarraMcQuillanVerification}.
A {\em one-way nondeterministic pushdown $k$-counter machine} is a tuple $M = (Q,\Sigma,\Gamma, \delta,q_0,F)$
where $Q$ is a finite set of states, $\Sigma$ is the finite input alphabet, $\Gamma$ is the finite pushdown alphabet (that contains the fixed bottom-of-stack marker $Z_0$), $q_0 \in Q$ is the initial state, 
$F\subseteq Q$ is the set of final states, and $\delta$ is a relation that is a finite subset of $\Omega_1 \cup \Omega_2 \cup \cdots \cup \Omega_{k+1}$ where
\begin{eqnarray}
&&\Omega_1 = Q \times (\Sigma \cup \{\lambda\}) \times \{1\} \times \Gamma \times Q \times \Gamma^*,  \\
&& \Omega_i = Q \times (\Sigma \cup \{\lambda\}) \times \{i \} \times \{0,1\} \times Q \times \{ -1, 0, +1\} \mbox{~for~} 2 \le i \le k+1.
\end{eqnarray}
Essentially, this type of machine has $k+1$ stores, the first store is the pushdown, and the other $k$ stores are counters. The third component of each transition gives the store being read and changed, and each $\Omega_i$ gives all the possible transitions that can read and change the $i$th store. So, each transition applied only reads and changes one store at a time.
A {\em configuration} of $M$ is a tuple $(q,w,\theta_1 , \theta_2, \ldots, \theta_{k+1})$ where $q \in Q$ is the current state, 
$w \in \Sigma^*$  is the remainder of the input, $\theta_1 \in Z_0 (\Gamma-Z_0)^*$ is the pushdown contents, and $\theta_i \in \mathbb{N}_0$ is the current contents of counter $i -1$, for $2 \leq i \leq k+1$. 

We often associate labels from an alphabet $T$ bijectively to the transitions. 
Two configurations change as follows:
$$(q,aw, \theta_1 ,\theta_2,\ldots, \theta_{k+1}) \vdash^{\alpha} (q',w,\theta_1' ,\theta_2', \ldots, \theta_{k+1}'),$$
if there is a transition $\alpha$ in $\delta$ and $\Omega_i$ for some $i$ where $\theta_j = \theta_j' $ for all $j \ne i$, and 
if $i = 1$, $\alpha$ is $\delta(q,a,1,y, q', z), \theta_1 = \gamma y, \theta_1' = \gamma z$, and if
$i>1$, $\alpha$ is $\delta(q,a,i,\pi(\theta_i), q', z), \theta_i' = \theta_i + z$. 
Note that, because each counter contains a non-negative integer, it is not possible to subtract from a counter that contains zero.
We extend $\vdash^{\alpha}$ to $\vdash^{\alpha_1 \cdots \alpha_n}$ where $\alpha_1 \cdots \alpha_n \in T^*$ in the natural way to represent derivations of length zero or more. We can leave off the transition label to produce $\vdash$ (length one) and $\vdash^*$ (length zero or more).
%We also write $\vdash$ to represent $\vdash^{\alpha}$
%for some transition $\alpha$. We extend $\vdash^{\alpha}$ to words $\vdash^{\alpha_1 \cdots \alpha_n}$
%where $\alpha_1 \cdots \alpha_n \in T^*$ in the
%natural way to represent derivations of length zero or more.
%We usually denote an element 
%$q' \in \delta(q,a,i)$ by $\delta(q,a,i) \rightarrow q'$. 
An {\em accepting computation} on $w \in \Sigma^*$ is a sequence
\begin{equation} (q_0, w, Z_0, 0, \ldots, 0) \vdash \cdots \vdash (q_n, \lambda,\theta_1, \ldots, \theta_{k+1}), \label{computation} \end{equation} 
where $q_n \in F$. 
%The string ${\rm read}(\alpha_1 \cdots \alpha_n)$ is called the {\em tape order} of the computation (\ref{computation}).
The {\em language accepted} by $M$, $L(M) \subseteq \Sigma^*$ is the set of all $w \in \Sigma^*$ for which there is an accepting computation. %The {\em tape order language} $T(M)$ is the set of all tape orders of every 

Such a one-way nondeterministic pushdown $k$-counter machine $M = (Q,\Sigma,\Gamma, \delta,q_0,F)$ is 
{\it r-reversal-bounded} (respectively {\it reversal-bounded}) if in each accepting computation, the
number of changes between sequences of non-decreasing transitions and
sequences of non-increasing transitions (or vice versa) on each counter is at most $r$ (respectively a finite number).
Similarly, $M$ is $t$-turn (respectively finite-turn) if in each accepting computation, the number of changes between increasing the size and decreasing the size (or vice versa) of the pushdown is at most $t$ (respectively some number).

We will use several different classes of machines that are restrictions of one-way pushdown $k$-counter machines.
\begin{itemize}
\item $\NPDA$:  one pushdown but no counters,
\item finite-turn $\NPDA$: one finite-turn pushdown but no counters,
\item $\NCM$: no pushdown but some number of reversal-bounded counters,
\item $\NPCM$: a pushdown and some number of reversal-bounded counters,
\item finite-turn $\NPCM$: a finite-turn pushdown and some number of reversal-bounded counters.
\end{itemize}

It is known that for any machine with reversal-bounded counters, one can build an equivalent machine where all reversal-bounded counters are $1$-reversal-bounded \cite{Ibarra1978}, e.g.\ $\NPCM$ is the same as $\NPDA$ augmented by 1-reversal-bounded counters \cite{Ibarra1978}. 
It is also known that for any type of nondeterministic machine model with reversal-bounded counters, one can equivalently use
monotonic counters \cite{IbarraGrammars} instead of reversal-bounded counters. 
Such machines have an even number $k$
of counters that we identify by $C_1, D_1, \ldots, C_{k/2}, D_{k/2}$ that can only be
incremented but not decremented, transitions do not detect the counter status, 
and acceptance occurs when the machine 
enters an final state with counters $C_i$ and $D_i$
having the same value for each $i$.  Clearly, the monotonic counters
can be simulated by 1-reversal-bounded counters (at the end of the computation, $C_i$ and $D_i$ can be
decremented simultaneously and checked that they become zero at the
same time). For the converse, we can assume without loss of
generality that each 1-reversal-bounded counter $i$
is incremented and later decremented until it is zero.  Each such counter $i$ can be simulated
by two monotonic counters $C_i$ and $D_i$.  Increments in counter $i$ are done first
by incrementing $C_i$, then decrements on a non-zero counter $i$ are done by incrementing $D_i$,
and then finally it guesses that counter $i$ is zero and it simulates only transitions on counter $i$ being zero, and this guess is 
automatically verified at the end in order to accept. 
We will use the same notation as above
($\NPCM$, etc.) to also mean machines with monotonic counters.
Monotonic counters are often helpful in this paper because if we build a simulation of an accepting computation of a machine with another machine that applies the same counter changes but in a different order, then the resulting simulation will still have matching monotonic counters.

\vskip .25cm

\noindent
{\bf Convention}: For notational convenience, rather than calling
the monotonic counters $C_i$ and $D_i$, we would also just label
them as $C_1, C_2, \ldots, C_k$, where we assume that $k \ge 2$
is even, and acceptance requires that $C_i = C_{i+1}$ for odd $i$
at the end of the computation.

\vskip .25cm

Machines with a one-way input can be generalized to have multiple
one-way inputs. The input is a tuple $(w_1, \ldots, w_t) \in \langle \Sigma^* \rangle^t$, and transitions have an additional component $j \in \{1, \ldots, t\}$ where the input letter is read from the $j$th input tape. Configurations then have $t$ input words, and accepting computations finish with all $t$ input words being $\lambda$. In this case, $L(M) \subseteq \langle \Sigma^* \rangle^t$ which is a $t$-tuple language. For all classes of automata described so far, we preface the machine name with ``multi-tape'' to consider the multi-tape versions of these machines. Beyond multi-tape finite automata, multi-tape inputs 
%are somewhat less well studied than single tape variants, but 
have been previously studied for $\NPDA$ \cite{multitapeNPDA}, $\NCM$ \cite{Ibarra1978}, and $\NPCM$ \cite{IbarraMcQuillanVerification}.
We then define a homomorphism `${\rm read}$' from 
$T^*$ (an alphabet bijectively associated with the transitions) to $\{1, \ldots, t\}^*$ such that ${\rm read}(\alpha) = i$ when transition $\alpha \in T$  reads from input tape $i$.
The {\em tape order language} $T_{{\rm o}}(M)$ is the set of all tape orders of every 
accepting computation; so $T_{{\rm o}}(M) \subseteq \{1,\ldots, t\}^*$.

A context-free grammar (abbreviated $\CFG$) is a tuple $G = (V, \Sigma, P, S )$, where $V$ and $\Sigma$ are finite and disjoint alphabets of nonterminals, and terminals, respectively, $S \in V$ is the starting nonterminal, and $P$ is a finite set of productions of the form $A \rightarrow w, A \in V, w \in (V \cup \Sigma)^*$. 
We will define language generation in two ways. The first is with the derivation relation. We say 
$\alpha A \beta \Rightarrow \alpha w \beta$ if $\alpha,\beta \in (V\cup \Sigma)^*$ and $A\rightarrow w \in P$. We let
$\Rightarrow^*$ be the reflexive and transitive closure of $\Rightarrow$. Any string $\alpha$ such that $S \Rightarrow^* \alpha$ is called a sentential form, and the language generated by $G$, $L(G) = \{w \mid S \Rightarrow^* w \in \Sigma^*\}$.
It is well known that we can equivalently define language acceptance in terms of leftmost derivations where we always force the leftmost nonterminal to be rewritten, and also by complete derivation trees \cite{HU}. 
A complete derivation tree is a directed
tree where each vertex is labelled by an element of $V\cup\Sigma \cup \{\lambda\}$, the root is labelled by $S$, leaves are labelled by elements of $\Sigma \cup \{\lambda\}$, and if an inner-vertex is labelled by $A$ and its children are labelled by $u_1, \ldots, u_n$ from left-to-right, then $A \rightarrow u_1 \cdots u_n \in P$. The set of complete derivation trees is denoted by $T(G)$. The yield of a tree $t \in T(G)$, $\yd(t)$, is the concatenation of labels on leaves obtained by a preorder traversal. It is known that $ \{\yd(t) \mid t \in  T(G)\} = L(G)$ and thus derivations and the use of complete derivation trees are equivalent.
 Any language generated by a context-free grammar is a context-free language. 
It is well known that languages generated by $\CFG$s are equal to those accepted by $\NPDA$s \cite{HU}. 

A context-free grammar with $k \ge 0$ monotonic counters \cite{IbarraGrammars} is a tuple $G = (V,\Sigma,P,S)$ where $k$ is even, 
 $V$ and $\Sigma$ are the finite and disjoint alphabets of nonterminals and terminals, respectively, $S \in V$ is the starting nonterminal, and P is a finite set of productions of the form
$A \rightarrow (w, i_1, \ldots, i_k)$, where $A \in V, w \in (V \cup \Sigma)^*$, and $i_j \in \mathbb{N}_0$ for all $1 \le j \le k$.
Given $t \in T(G)$ where $T(G)$ is the set of normal context-free complete derivation trees, we say $t$ {\em satisfies counters} if
the sum of the counters of productions used in $t$ is $(v_1, \ldots, v_k)$ and $v_i = v_{i+1}$ for all $i$ odd, $1 \le i \le k-1$.
Let $T_{{\rm e}}(G)$ be the set of all $t \in T(G)$ that satisfies counters. Further, $L(G) = \{\yd(t) \mid t \in T_{{\rm e}}(G)\}$.
We denote the class of context-free grammars with monotonic counters by $\CFGC$.
Similarly to context-free grammars, we can equivalently define language acceptance using derivations where now configurations are a tuple $(\alpha ,v_1, \ldots, v_k),  \alpha \in (V \cup \Sigma)^*, v_i \in \mathbb{N}_0$; the derivation relation is defined as $(\alpha A \beta,v_1,\ldots, v_k) \Rightarrow (\alpha w \beta, x_1,\ldots, x_k)$ where $(A \rightarrow w, x_1-v_1, \ldots, x_k-v_k) \in P$; each configuration $(\alpha, v_1, \ldots, v_k)$ where $(S, 0, \ldots, 0) \Rightarrow^* (\alpha, v_1, \ldots, v_k)$ is a sentential form; and the language generated by $G$ is equal to $\{w \mid S \Rightarrow^* (w,v_1, \ldots, v_k), w  \in \Sigma^*, v_i = v_{i+1} \mbox{~for~} i \mbox{~odd}\}$ which coincides with the notion above using derivation trees.
Given a $\CFGC$ $G$, the {\em underlying context-free grammar} is the context-free grammar obtained from $G$ by leaving off the counter changes from each production. It is known that the languages generated by $\CFGC$ are equal to the language accepted by $\NPCM$ \cite{IbarraGrammars}.

Finally, let $\DTM$ (respectively $\NTM$) be the class of deterministic (respectively nondeterministic Turing machines with a two-way read-only input tape and multiple read/write worktapes.

\section{Characterization of Finite-Turn $\NPDA$ and Finite-Turn $\NPCM$}
\label{sec:characterizations}

In this section, we give a characterization
of finite-turn $\NPCM$ (respectively, finite-turn $\NPDA$) in terms of multi-tape $\NCM$
(respectively, multi-tape $\NFA$). Previous characterizations of
finite-turn $\NPDA$ are in terms of grammars
\cite{GS2} and not machines (although we use this characterization in terms of grammars to help prove our characterization). There has not previously
been any characterization of finite-turn $\NPCM$. %Using machine models instead of grammars allows
%the proofs and algorithms to be more easily adapted to other machine models, as we do in this paper.

A $\CFG$ (context-free grammar) $G = (V, \Sigma, P, S )$ is $n$-{\em ultralinear} (or simply ultralinear) \cite{GS2} if there is a partition
$\{V_0, V_1, \ldots,  V_n \}$ of $V$ such that $S$ is in $V_n$ and if $A$ is in 
$V_i, 0 \le i \le n$, then $A \rightarrow  w$ is in
$P$ implies either $w \in \Sigma^* V_i \Sigma^*$ or $w \in (\Sigma \cup V_0 \cup \cdots \cup V_{i-1})^*$.
The language generated by the grammar is called an
($n$-)ultralinear language.  If $n = 0$, then the grammar (language) 
is called a linear grammar (language). The class of ultralinear grammars is denoted by $\ULG$.

The following proposition was shown by Ginsburg and Spanier \cite{GS2}:
\begin{proposition} \cite{GS2} \label{propB1}
We can effectively construct, given a finite-turn $\NPDA$ $M$, an ultralinear grammar
$G$ equivalent to $M$, and conversely.
\end{proposition}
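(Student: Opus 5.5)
Both directions go through the standard $\CFG$/$\NPDA$ translations, refined to respect the level structure. For the direction from machines to grammars, I first normalize $M$. I may assume $M$ accepts with empty pushdown in a final state. I may also assume each transition pushes or pops at most one symbol, and that $M$ makes at most $t$ turns in every accepting computation. The last assumption is harmless: finite-turn means some bound $t$ exists on accepting computations, and a finite-state counter of turns, with rejection when it is exceeded, leaves $L(M)$ unchanged.

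\textbf{From $M$ to $G$.} I apply the triple construction, with nonterminals $[p,X,q]$, and augment each nonterminal with a turn budget $i \le t$. The budget records how many turns remain available for popping $X$ starting in state $p$ and ending in state $q$. The key observation concerns a computation segment that removes $X$: it consists of a push-phase, a pop-phase, and so on. I then restructure the grammar so that a nonterminal at level $i$ generates its segment in linear fashion, $\Sigma^* V_i \Sigma^*$, while the pushdown only grows. At the point where the segment switches from pushing to popping, I instead use a production into a sentential form over lower-budget nonterminals. This is the usual ultralinear construction.

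In more detail, consider a $1$-turn segment on $X$. Its increasing phase pairs each push with the later pop of the same symbol, and this yields productions $A \to u B v$, where $u$ is read during the push and $v$ during the pop. This is the linear-grammar simulation of a one-turn $\NPDA$. A general segment with more turns decomposes at the level where it first turns into a concatenation of subsegments, each having strictly fewer turns. The subsegments are generated by nonterminals in lower classes $V_j$, $j<i$. Induction on $t$ then gives the partition $V_0,\dots,V_n$ with $n$ bounded by a function of $t$.

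\textbf{From $G$ to $M$.} The converse is the standard top-down $\NPDA$ for $G$: expand the leftmost nonterminal on the stack and match terminals against the input. I argue the turn bound by induction on the level $n$. For a level-$i$ nonterminal $A$, the linear productions $A \to uBv$ push $v$ and continue with $B$, so the stack only grows during this chain. The chain ends with a production into $(\Sigma \cup V_0\cup\dots\cup V_{i-1})^*$ of bounded length $\ell$, and the stack then shrinks as those symbols are processed. Each lower-level symbol contributes at most $f(i-1)$ turns by induction, so $f(i) \le \ell\,(f(i-1)+2)+2$, which is finite.

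\textbf{Main obstacle.} The main difficulty is the first direction: making rigorous that the turn structure of an accepting computation yields the nested ``linear chain, then concatenation of lower-level pieces'' decomposition, and encoding this so that each nonterminal has a fixed level. I would handle it by indexing triples with the remaining turn count, so that each nonterminal's level is determined by that count, and by proving language equality by induction on the number of turns. Since the statement is cited from \cite{GS2}, in the paper I would ultimately just refer to Ginsburg and Spanier.
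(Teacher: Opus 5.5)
The paper gives no proof of this statement; it cites Ginsburg and Spanier \cite{GS2}, which is also what you ultimately intend. Your reconstruction has the right overall shape, but one step fails as written: in the machine-to-grammar direction the branching production is placed at the wrong point. You run the linear chain ``while the pushdown only grows'' and switch to lower-budget nonterminals ``at the point where the segment switches from pushing to popping.'' That is correct only for $1$-turn segments.

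Take a segment starting at height $h$ that pushes $X_1$, pushes $X_2$, pops $X_2$, pushes $X_3$, pops $X_3$, pops $X_1$ (three turns). Both pushes occur before the first turn. However, the push of $X_2$ is matched by the pop right after the peak, while the push of $X_1$ is matched by the final pop. The whole excursion on $X_3$ sits between those two pops, at height $h+1$. Let $A$ be the nonterminal for the segment above $X_1$. A chain step $A\to uBv$ pairing the push of $X_2$ with its pop would need $v$ to contain everything read in that excursion, and this is not a fixed terminal word in general. Nothing you introduce at the peak can generate it either, since everything derived from $B$ is read strictly between the push and the pop of $X_2$. The chain must therefore stop at $A$, which has to rewrite into a concatenation of two excursion nonterminals.

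The correct rule is this. Continue the chain only while the current segment (from height $h$ back to height $h$) contains exactly one excursion above $h$. Branch at the first height where it contains $j\ge 2$ excursions, i.e.\ at the lowest valley rather than the first peak. If the segment has $T$ turns and its excursions have $T_1,\ldots,T_j\ge 1$ turns, then $T=\sum_i T_i+(j-1)$. This gives $T_i\le T-2$, so the pieces really receive strictly smaller budgets. It also gives $j\le (T+1)/2$, so branching right-hand sides have bounded length. This computation is what your phrase ``decomposes at the level where it first turns'' must be replaced by.

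Two smaller points also need to be explicit.
\begin{enumerate}
\item The triple productions $[p,X,q]\to a[r,Y_1,s][s,Y_2,q]$ are not linear even with budgets. Moreover, your normal form still allows same-height replace moves, so the pop-side word $v$ is not a single letter. It ranges over a regular set (replace moves followed by the pop) and must be generated letter by letter by same-class productions such as $N\to N'c$.
\item In the converse, $u$ must be read directly rather than pushed, as ``push $v$ and continue with $B$'' tacitly assumes. Pushing $uBv$ and then matching $u$ costs two turns per chain step, so the turn count would be unbounded.
\end{enumerate}
With these repairs your sketch is a correct rendering of the cited result.
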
 

Here we define a $\CFGC$ to be $n$-ultralinear (respectively ultralinear) if its underlying context-free grammar is $n$-ultralinear (respectively ultralinear). The class of ultralinear grammars with monotonic-counters is denoted by $\ULGC$.

\begin{comment}

Context-free grammars (respectively ultralinear) can be augmented with
monotonic counters.
With this model, there are an even number $k \ge 2$ of counters, and
the rules in $P$ are now of the form 
$A \rightarrow (w, i_1, \ldots, i_k)$, where each $i_j \in \mathbb{N}_0$ (respectively where
$A \rightarrow w$ is an ultralinear rule).
Given $t \in T(G)$ where $T(G)$ is the set of normal context-free complete derivation trees, we say $t$ {\em satisfies counters} if
the sum of the counters of productions used in $t$ is $(v_1, \ldots, v_k)$ and $v_i = v_{i+1}$ for all $i$ odd, $1 \le i \le k-1$.
Let $T_{{\rm e}}(G)$ be the set of all $t \in T(G)$ that satisfies counters. Further, $L(G) = \{\yd(t) \mid t \in T_{{\rm e}}(G)\}$.
We call these context-free grammars (respectively $n$-ultralinear grammars) with monotonic counters, and we denote the class of context-free grammars (respectively ultralinear grammars) with monotonic counters by $\CFGC$ (respectively $\ULGC$). 
\end{comment}

Proposition \ref{propB1} can be generalized:

\begin{proposition} \label{propB2}
We can effectively construct, given a finite-turn $\NPCM$ $M$, an $\ULGC$
$G$ equivalent to $M$, and conversely.
\end{proposition}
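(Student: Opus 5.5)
The plan is to follow the standard translations between $\NPDA$ and $\CFG$, and between finite-turn $\NPDA$ and ultralinear grammars as in Proposition~\ref{propB1}. Throughout, the counters are monotonic, and the counter increments are attached to productions (in the grammar direction) or to transitions (in the machine direction). The key observation is that with monotonic counters, acceptance depends only on the multiset of counter increments along a computation or derivation tree, not on their order. Any construction that preserves a correspondence between transitions and productions will therefore preserve the final equality $C_i = C_{i+1}$ for odd $i$.

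\textbf{Machine to grammar.} Let $M$ be a finite-turn $\NPCM$ with monotonic counters $C_1,\ldots,C_k$. First form the $\NPDA$ $M'$ over the extended alphabet $\Sigma \cup T$, where $T$ labels the transitions. $M'$ behaves like $M$ but ignores the counters, and each time it applies transition $\alpha$ it additionally reads the symbol $\alpha$ from its input. $M'$ is still finite-turn, since the pushdown behaviour is unchanged. By Proposition~\ref{propB1}, we obtain an ultralinear grammar $G'$ with $L(G') = L(M')$. Next, replace each terminal occurrence $\alpha \in T$ in the productions of $G'$ by $\lambda$, and attach to that production the counter vector equal to the sum of the increments of all such $\alpha$ occurring in its right-hand side. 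Erasing terminals keeps the right-hand sides of the forms $\Sigma^* V_i \Sigma^*$ and $(\Sigma \cup V_0 \cup \cdots \cup V_{i-1})^*$, so the underlying grammar is still ultralinear. A derivation tree of the new $\ULGC$ $G$ then corresponds to a tree of $G'$ whose yield interleaves some $w$ with a transition sequence of an accepting run of $M'$ on $w$. The summed counters of the tree equal the counter values of that run of $M$, so the tree satisfies counters exactly when the run of $M$ accepts. Hence $L(G) = L(M)$.

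\textbf{Grammar to machine.} Given an $n$-ultralinear $\CFGC$ $G$, do the reverse. Replace each production $A \rightarrow (w, i_1,\ldots,i_k)$ by $A \rightarrow w\, p$, where $p$ is a fresh terminal naming the production. This gives an ultralinear grammar $G'$, since adding a terminal at the end preserves both allowed forms. Proposition~\ref{propB1} yields a finite-turn $\NPDA$ $M'$ for $L(G')$. Build $M$ from $M'$: whenever $M'$ would read a symbol $p$, $M$ instead reads nothing and increments counter $C_j$ by $i_j$ for each $j$ (using a bounded chain of $\lambda$-moves through extra states), then accepts with the monotonic-counter condition. The pushdown moves are unchanged, so $M$ is finite-turn. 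Its accepted words are exactly the projections to $\Sigma$ of words of $L(G')$ whose productions' counter vectors sum to a vector satisfying counters.

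\textbf{Main obstacle.} The delicate point is the correspondence between derivation trees of $G$ and words in $L(G')$ carrying the production (or transition) markers. We need every tree of $G$ to give a word of $L(G')$ whose multiset of markers equals the multiset of productions used, and conversely. This holds because $G'$ differs from $G$ only by terminal insertions, so trees correspond bijectively, and the order in which the markers appear is irrelevant for monotonic counters. The same reasoning on the machine side relies on the Ginsburg--Spanier construction being applied to a machine whose input literally contains the transition labels, so no counter information is lost.
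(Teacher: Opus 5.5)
Your argument is essentially the paper's proof. The paper also applies Proposition~\ref{propB1} to an $\NPDA$ whose input carries counter markers, then erases or reinterprets those markers. The only difference is cosmetic: the paper uses one fresh letter $a_j$ per counter, read right after each increment of counter $j$, and $A \rightarrow w\,a_1^{i_1}\cdots a_k^{i_k}$ in the converse, where you use transition and production names.

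One point you and the paper both skip concerns the claim that $M'$ is finite-turn. Finite-turn constrains only accepting computations. Once $M'$ ignores the counters, it accepts runs of $M$ with mismatched counters, and these may make unboundedly many turns. So ``the pushdown behaviour is unchanged'' does not by itself justify the claim. The fix is to let $M'$ count turns in its finite control and block once it exceeds the turn bound $t$ of $M$; this loses no accepting computation of $M$.
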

\begin{proof}
Let $M$ be a finite-turn $\NPCM$ with input alphabet $\Sigma$
which has $k$ monotonic counters.  Let $a_1, \ldots, a_k$
be new symbols, where each $a_i$ is associated with counter $i$.
Let $\Delta = \Sigma \cup \{a_1, \ldots, a_k\}$.
We construct a finite-turn $\NPDA$ $M'$ which on input $w$ in $\Delta^*$, simulates
$M$ as follows:
Any transition that uses the pushdown or that adds zero to a counter
is simulated verbatim. Any transition that reads $a \in \Sigma \cup \{\lambda\}$ and increments counter $j$ by one,
$1 \le j \le k$, is simulated by reading $a$ followed immediately by $a_j$
on the input. Of course, $M'$ does not check that the counter values match.
By using Proposition \ref{propB1}, we can construct from $\NPDA$ $M'$ an
$\ULG$ $G'$ such that $L(G') = L(M')$.  Finally,
from $G'$ we construct an $\ULGC$ $G$ by converting each rule
of the form $A \rightarrow w$ in $G'$ to a rule
$A \rightarrow (z, i_1, \ldots, i_k)$, where $z$ is obtained from $w$
with all the symbols in $\{a_1, \ldots, a_k\}$ deleted, and
$i_j = |w|_{a_j}$, $1 \le j \le k$.
Since this enforces that counter values are matching, $L(G) = L(M)$.

For the converse, let $G$ be an $\ULGC$ with $k$ counters, and let $a_1, \ldots, a_k$ be new symbols. 
We construct from $G$
an $\ULG$ $G'$ as follows:  If $A \rightarrow (w, i_1, \ldots, i_k)$
is a rule in $G$, then $A \rightarrow w  a_1^{\vert i_1 \vert} \ldots a_k^{\vert i_k \vert}$ is a rule in $G'$.
Next, we construct a finite-turn $\NPDA$ $M'$
from $G'$ such that $L(M') = L(G')$ using Proposition \ref{propB1}.  Finally,
construct from $M'$ a finite-turn $\NPCM$ $M$ (with $k$ monotonic counters) which simulates $M'$ as follows:
$M$ simulates transitions of $M'$ that read letters of $\Sigma\cup \{\lambda\}$, but instead of reading letters 
$a_j$ (for $1 \le j \le k$), it reads $\lambda$ and adds one to counter $j$. It is clear that $L(G) = L(M)$.
\end{proof}

We now relate $\ULGC$ to multi-tape $\NCM$. 
Each complete derivation tree of $G$ can be represented with a well-formed bracketed string
over the letters $\texttt{[}$ and $\texttt{]}$ describing only the patterns of how the nonterminal partitions are
changing, as follows:
Given an $n$-ultralinear grammar (with or without counters)  $G = (V,\Sigma,P,S)$, 
let $G' = (V',\Sigma \cup \{\texttt{[},\texttt{]}\}, P', S')$ be another $n$-ultralinear grammar where $S'$ is a new symbol,
$S' \rightarrow {\texttt{[}} S {\texttt{]}} \in P'$ with zero added to the counters, and
(counters are preserved in the rest of the construction), for each production $A \rightarrow u_0 A_1 u_1 \cdots u_{t-1} A_t u_t  \in P$, where
$t \ge 0, A \in V_i,  A_1, \ldots, A_t \in V_1 \cup \cdots \cup V_{i-1}, u_0, \ldots, u_t \in \Sigma^*$, replace it with
$A\rightarrow u_0  {\texttt{[}}  A_1  {\texttt{]}} u_1 \cdots u_{t-1}  {\texttt{[}}  A_t  {\texttt{]}} u_t  \in P'$, and all other productions in $P$ of the form $A \rightarrow u B v, A,B \in V_i, u,v \in \Sigma^*$ are kept in $P'$. Indeed, every derivation
tree in $G$ has a corresponding tree in $G'$ where letters of $\{\texttt{[} , \texttt{]}\}$ are included.
The {\em well-formed bracketed pattern} of a complete derivation tree of $G$ is the string obtained by projecting the yield of the corresponding tree in $G'$
to $\{\texttt{[} , \texttt{]}\}^*$. The set of all well-formed bracketed patterns of complete derivation trees in $T(G)$ (if it has counters, in $T_{{\rm e}}(G)$) is denoted by $P(G)$.
Certainly, for any ultralinear grammar, $P(G)$ is finite. Below we will show that every language $L$ generated by an
ultralinear grammar with or without counters can also be generated by a grammar $G$ where $P(G)$ contains at most one pattern (it is empty if and only if $L = \emptyset$).
Any well-formed bracketed pattern with $m$ matching bracket pairs is called an $m$-pattern.

For example, consider a $1$-ultralinear
grammar $G = (V,\Sigma,P,S)$ (with no counters), where 
the set of nonterminals is $V = V_0 \cup V_1$, the start nonterminal
is in $V_1$, and the rules are of the form:
\begin{enumerate}
\item $A \rightarrow uBv$, where $A,B \in V_i, i \in \{ 0, 1\}$, $u,v \in \Sigma^*$,
\item $A \rightarrow u$, where $A \in V_i, i \in \{ 0, 1\}$, $u \in \Sigma^*$,
\item $A \rightarrow v_0 A_1 v_1 A_2 v_2 \cdots v_{t-1} A_t v_t$, 
$A \in V_1, A_1, \ldots, A_t \in V_0$,
$v_0, v_1, \ldots, v_t \in \Sigma^*$.
\end{enumerate}
If a derivation only uses a sequence of productions of
type 1 with $V_1$, then one of type 2, then it has bracketed pattern $\texttt{[ ]}$ from the application of the initial production $S' \rightarrow {\texttt{[}} S {\texttt{]}}$.
If a derivation starts with productions of type 1 with $V_1$, followed by
one of type 3, with each branch then following a sequence using $V_0$ followed by a transition
of type 2, this  has pattern
$\texttt{[ } \overbrace{\texttt{[ ] [ ]} \cdots \texttt{[ ] }}^{t \mbox{ sections}} \texttt{ ]}$; the outer brackets come from the application of $S' \rightarrow {\texttt{[}} S {\texttt{]}}$ and the inner brackets come from the applied production of type 3. So both strings are in $P(G)$.
Observe that a new pair of matching brackets is added along a path from the root to a leaf at the beginning, and when the nonterminal set changes using a production of type 3.

However, every ultralinear grammar (with or without counters) can be converted to one where every complete derivation tree has the same pattern.
Let $G = (V,\Sigma,P,S)$ be an $n$-ultralinear grammar with $V$ partitioned into $V_0, \ldots, V_n$. We say $G$ is in $t$-{\em normal form} for $t \ge 1$
if all rules 
are of the form
\begin{enumerate}
\item $A \rightarrow uBv$, where $A,B \in V_i, 0 \le i \le n$, $u,v \in \Sigma^*$,
\item $A \rightarrow u$, where $A \in V_0,  u \in \Sigma^*$,
\item $A \rightarrow A_1 A_2 \cdots A_t $, 
$A \in V_{i+1}, A_1, \ldots, A_t \in V_i, 0 \le i \le n-1$, where $t$ is the same for all such rules in the grammar.
\end{enumerate}

\begin{lemma} \label{tonormal}
Given each $n$-ultralinear grammar $G= (V,\Sigma,P,S)$ (respectively with monotonic counters), another $n$-ultralinear grammar (respectively with  monotonic counters) $G'$ can be built
in $t$-normal form for some $t \ge 1$ such that $L(G)= L(G')$. 
\end{lemma}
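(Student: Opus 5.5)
We transform $G$ into $t$-normal form through a sequence of language-preserving rewrites. Each rewrite keeps an $n$-ultralinear grammar. When counters are present, each new rule carries the counter vector of the original rule it simulates, and every auxiliary rule carries the zero vector. Then every complete derivation tree of the new grammar has the same total counter vector as the corresponding tree of $G$, and the correspondence preserves trees that satisfy counters.

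First we normalize the rules that change level. A rule $A \rightarrow u_0 A_1 u_1 \cdots A_s u_s$ with $A \in V_i$ and $A_j \in V_0\cup\cdots\cup V_{i-1}$ may have terminals mixed in, and its children may sit at arbitrary lower levels. We handle it in three steps.
\begin{enumerate}
\item Each terminal block $u_j$ is absorbed into a neighbouring nonterminal. A nonterminal $A_j \in V_\ell$ is replaced by a fresh copy $A_j' \in V_\ell$ with the rule $A_j' \rightarrow u A_j v$, which is of type 1. If $s=0$, the rule $A\rightarrow u_0$ becomes $A \rightarrow B$ with a fresh $B$ placed one level below, as in the chain described next.
\item To make every child lie exactly at level $i-1$, we replace a child $B\in V_\ell$ with $\ell<i-1$ by a chain of fresh nonterminals $B^{(i-1)} \in V_{i-1}, \ldots, B^{(\ell+1)}\in V_{\ell+1}$. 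Each link is a single-child level-change rule $B^{(m)} \rightarrow B^{(m-1)}$, ending at $B^{(\ell)} = B$.
\item A terminating rule $A \rightarrow u$ with $A \in V_i$, $i>0$, is similarly pushed down a chain to a fresh $V_0$ nonterminal, which then has the type-2 rule. Each original rule gets its own fresh symbols, so no spurious derivations arise.
\end{enumerate}
After these steps every rule is of type 1, of type 2 (only in $V_0$), or of the form $A \rightarrow A_1\cdots A_s$ with $A\in V_{i+1}$ and $A_j\in V_i$, where $s\ge 1$.

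Next we equalize the arity $s$. Let $t$ be the maximum $s$ over all such rules, and add for each level $i$ a fresh nonterminal $E_i \in V_i$ that derives only $\lambda$. For $i>0$ this uses the rule $E_i \rightarrow E_{i-1}\cdots E_{i-1}$ with $t$ children, and $E_0 \rightarrow \lambda$ is a type-2 rule. A rule $A\rightarrow A_1\cdots A_s$ with $s<t$ is then replaced by $A \rightarrow A_1\cdots A_s E_i\cdots E_i$, padded to length $t$. Since each $E_i$ generates exactly $\lambda$ with zero counters, the language and counter sums are unchanged.

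Finally we verify correctness. Each rewrite is a local substitution of a single original rule by a fixed finite derivation fragment. The yield of that fragment and its counter contribution equal those of the original rule. Conversely, because all the symbols involved are fresh, any derivation tree of the new grammar decomposes uniquely into such fragments, so trees correspond in both directions. The main care point is the partition bookkeeping: every fresh nonterminal must be placed in a level that keeps all rules of type 1, 2 or 3, and $S$ must stay in $V_n$. This holds by the choices above. A degenerate case is $n=0$, where no type-3 rules exist and any $t\ge1$ works vacuously.
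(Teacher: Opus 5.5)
Your proposal is correct and follows essentially the same route as the paper's proof. Both constructions absorb terminal blocks into type-1 rules of fresh nonterminals, descend one level at a time through unit chains (also pushing terminating rules down to $V_0$), and pad every level-changing rule to a common arity $t$ with $\lambda$-generating nonterminals ($X_i$ in the paper, $E_i$ in yours). The only cosmetic difference is that the paper absorbs the terminals at level $i-1$ through new symbols $Y_j$ at the top of the chain, while you absorb them at the child's own level at the bottom of the chain.
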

\begin{proof}
We do this conversion in two steps. In describing the construction, the counters are kept the same for any production that is either kept or modified from a production in the original grammar, and new productions all add zeros to the counters.

First, we describe an intermediate construction without the condition that $t$ must be the same for all rules of type 3.
We create $\bar{G} = (\bar{V} , \Sigma,\bar{P},S)$ with $\bar{V}$ partitioned into $\bar{V}_0, \ldots, \bar{V}_n$ as follows. First, all nonterminals in each $V_i$, $1 \le i \le n$, are in $\bar{V}_i$. All productions of the form
$A\rightarrow u B v, A,B \in V_i, u,v \in \Sigma^*$ and $A \rightarrow u, A \in V_0, u \in \Sigma^*$ are kept verbatim.
For each $Z \in V_i$ and $i, 0 \le i <n$, introduce new nonterminals $Z(i) \in \bar{V}_i$ and  production $Z(i) \rightarrow Z$; for each $j$, $i < j \le n-1$ introduce $Z(j) \in \bar{V}_j$; and for each $l$, 
$i \le l < n -1 $, introduce production 
$Z(l+1) \rightarrow Z(l)$.
Essentially, if we use a nonterminal $Z(j) \in \bar{V}_j$ where $Z \in V_i, i \le j$, it must switch to nonterminal $Z(j-1) \in \bar{V}_{j-1}$, then $Z(j-2) \in \bar{V}_{j-2}$, etc.\
one at a time until $Z(i) \in \bar{V}_{i}$ where it must go to the original nonterminal $Z \in V_i$.

Also, for all productions $Y$ of the form $A \rightarrow u_0 A_1 u_1 \cdots A_r u_r, r \ge 1$ where $A \in V_i$, and all 
$A_j \in V_0 \cup \cdots \cup V_{i-1}$, replace it with 
$A\rightarrow Y_1 \cdots Y_r$ where $Y_1, \ldots, Y_r$ are new nonterminals in $\bar{V}_{i-1}$ associated with $Y$, and also introduce $Y_1 \rightarrow u_0 A_1(i-1) u_1, Y_2 \rightarrow A_2(i-1)u_2, \ldots, Y_r \rightarrow A_r(i-1) u_r$. 
For all productions $Y$ of the form $A \rightarrow u$, $A \in V_i, i \ge 1, u \in \Sigma^*$, replace it with
$A \rightarrow Y_{i-1}, Y_{i-1} \rightarrow  Y_{i-2}, \ldots, Y_0 \rightarrow u$ where $Y_j$ is another new nonterminal in $\bar{V}_j$ associated with the production $Y$, $0 \le j \le i-1$.
In this way, it is enforced that as it moves down the tree, it must decrease the nonterminal partition one at a time. It is evident that $L(G) = L(\bar{G})$.

Next, we convert $\bar{G}=(\bar{V},\Sigma,\bar{P},S)$ into $G' = (V', \Sigma, P', S)$ with $V'$ partitioned into $V_0', \ldots, V_n'$, where we enforce that $t$ must be the same for all rules.
Let $t$ be the maximum of the number of nonterminals on the right hand side of any rule in $\bar{P}$.
Let $X_i \in V_i'$ for $0  \le i \le n$ be new nonterminals, and create new productions
$X_i \rightarrow X_{i-1}^t$  for $i>0$ and $X_i \rightarrow \lambda$ for $i = 0$.
Also, for each production in $P$ of the form
$A \rightarrow A_1 \cdots A_r $, $ A \in V_i, i \ge 1, A_1, \ldots, A_r \in V_{i-1}, 1 \leq r \leq t$, we replace it with
$A \rightarrow A_1 \cdots A_r X_{i-1}^{t-r}$. All other productions are kept.
 
It is clear that $L(\bar{G}) = L(G') = L(G)$, and that $G'$ is $n$-ultralinear and in $t$-normal form.
\end{proof}

Next we see that only zero or one well-formed bracketed pattern is needed.
For $t \ge 1$, let $p_0 = p_{0,t} = \texttt{[ ]}$, and for $i \ge 1$, $p_{i,t} = \texttt{[} (p_{i-1, t})^t \texttt{]}$.
\begin{lemma} \label{dertree}
Given any $n$-ultralinear grammar $G$ (with or without monotonic counters) in $t$-normal form, $t \ge 1$, then 
$P(G) \subseteq \{p_{n,t}\}$, and they are equal if and only if $L(G)$ is non-empty.
\end{lemma}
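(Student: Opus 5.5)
The plan is to prove, by induction on $i$, a statement about subtrees and then specialize it to the root. For $0 \le i \le n$ and any $A \in V_i$, consider a complete derivation subtree rooted at $A$ (with counters, counters are ignored at this stage; they only restrict which trees lie in $T_{{\rm e}}(G)$). The bracketed pattern of such a subtree is defined as in the construction of $G'$, with an outer bracket pair placed around $A$. The claim is that this pattern is exactly $p_{i,t}$.

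The induction proceeds as follows. Along such a subtree, from the root, the only productions applicable while the current nonterminal stays in $V_i$ are type 1 rules $A \rightarrow uBv$ with $B\in V_i$. These rules contribute no brackets in $G'$, since they are kept verbatim. Since the tree is finite and complete, the chain of type 1 rules eventually ends.

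For $i = 0$, the chain must end with a type 2 rule $A \rightarrow u$, because only $V_0$ has terminating rules and type 3 rules need $i \ge 1$. So the pattern is just the outer pair $\texttt{[ ]} = p_{0,t}$.

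For $i \ge 1$, there is no type 2 rule for $V_i$, so the chain must end with a type 3 rule $B \rightarrow A_1\cdots A_t$ with all $A_j \in V_{i-1}$. In $G'$ this becomes $B \rightarrow \texttt{[}A_1\texttt{]}\cdots\texttt{[}A_t\texttt{]}$. The bracket pattern is therefore $\texttt{[}$ followed by the concatenation of the patterns of the $t$ subtrees (each with its own brackets), followed by $\texttt{]}$. By the induction hypothesis each subtree has pattern $p_{i-1,t}$, so the total is $\texttt{[}(p_{i-1,t})^t\texttt{]} = p_{i,t}$. Because terminals $u,v$ project away, their placement does not matter.

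To finish, apply the claim to $S \in V_n$: the rule $S' \rightarrow \texttt{[}S\texttt{]}$ supplies the outer pair. Hence every complete derivation tree, and so in particular every tree in $T_{{\rm e}}(G)$, has pattern $p_{n,t}$, giving $P(G) \subseteq \{p_{n,t}\}$. Finally, $P(G)$ is nonempty exactly when there is some tree in $T(G)$ (respectively $T_{{\rm e}}(G)$), and that holds exactly when $L(G)$ is nonempty, since $L(G)$ is the set of yields of these trees.

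The only real care needed is in the case analysis at the end of a $V_i$-chain: we must use that $t$-normal form allows no other rule shapes. In particular, rules for $A\in V_i$ with $i\ge1$ can only be type 1 or type 3, every type 3 rule drops exactly one level, and every type 3 rule has exactly $t$ children. These are precisely the points the normal form of Lemma \ref{tonormal} guarantees, so I expect no serious obstacle beyond stating the bracket bookkeeping for subtrees precisely.
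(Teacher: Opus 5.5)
Your proof is correct and takes essentially the same route as the paper. Both argue by induction on the level: a chain of type~1 rules must end in a type~3 rule with exactly $t$ children one level down, and each child contributes $p_{i-1,t}$ by the induction hypothesis. The paper phrases this via the restricted grammars $G(A)$ for $A \in V_k$ rather than via subtrees, which is only a cosmetic difference. You also explicitly justify the ``equal if and only if $L(G)$ is non-empty'' clause, which the paper leaves implicit.
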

\begin{proof}
We will deliberately prove that $P(G) \subseteq \{p_{n,t} \}$ where $G$ does not have counters, as any grammar $G$ with counters
satisfies $T_{{\rm e}}(G) \subseteq T(G)$.
 
We will prove this by induction. 

First, for the base case, it is clear that any $0$-ultralinear grammar $G$ has $P(G) \subseteq \{p_0\}$.

Let $k \ge 0$, and assume by induction that for any $k$-ultralinear grammar $G'$ in $t$-normal form has
$P(G') \subseteq \{p_{k,t}\}$. Consider a $k+1$-ultralinear grammar $G= (V,\Sigma,P,S)$ with $V$ partitioned into 
$V_0, \ldots, V_{k+1}$.

Given any $A \in V_k$, let
$G(A)$ be the $k$-ultralinear grammar obtained from $G$ by setting $A$ as starting nonterminal, only including
nonterminals in $V_0 \cup \cdots \cup V_k$, and only keeping productions that use these nonterminals.
By the inductive hypothesis, for all such $A$, $P(G(A)) \subseteq \{p_{k,t}\}$.
In every complete derivation tree of $G$, it starts with a nonterminal in $V_{k+1}$ and only uses productions of the form
$A \rightarrow uBv, A,B\in V_{k+1}, u,v \in \Sigma^*$, until it uses a single production of the form
$A\rightarrow A_1 \cdots A_t, A_1, \ldots, A_t \in V_k$. For each subtree rooted at each $A_i$, it is a complete 
derivation tree in $G(A_i)$ and so $P(G(A_i)) = \{p_{k,t}\}$. Hence every complete derivation tree in $G$ has pattern
$\texttt{[}$ followed by $t$ copies of $p_{k,t}$, followed by $\texttt{]}$. 
\end{proof}

\begin{comment}
As $G$ is ultralinear, clearly the set of well-formed bracketed patterns representing
complete derivation trees of $G$ in this way is finite.
Given an $n$-ultralinear grammar $G$, we can effectively
derive an (upper-bound) $m$ such that every bracketed pattern corresponding to every complete
derivation tree is 
at most of length $2m$. In the example above, $m = t+1$. 
We call this an $m$-pattern.
In what follows, $P(m)$ will denote a set of well-formed $m$-patterns.  
\end{comment}

Given an $m$-pattern $p$ and a $2m$-tuple 
$z = (x_1,y_1, \ldots, x_m, y_m), x_i,y_i \in \Sigma^*$, let $z^{(p)}$ be the string obtained
from $p$ by replacing the $i$th $\texttt{[}$ by $x_i$ and its corresponding matching $\texttt{]}$
by $y_i^R$.
If $L \subseteq \langle \Sigma^* \rangle^{2m}$, 
let $L^{(p)}  = \{z^{(p)} ~|~ z \in L \}$.

% We extend this to sets of $2m$-tuples $Z$ to produce $Z^p$ in the natural way.

A $2m$-tape $\NCM$ $M$ with $2m$ tapes whose input tapes we will label
respectively $T_1, R_1, \ldots , T_m, R_m$, is called {\em paired} 
%
%if $T(M_p) \subseteq (T_1 + R_1)^* \cdots (T_m + R_m)^*$; that is,
%
if $T_1$ and $R_1$ are read first, then $T_2$ and $R_2$ are read next, 
and so on, with $T_m$ and $R_m$ read last. That is, $T_{{\rm o}}(M) \subseteq (T_1+R_1)^* \cdots (T_m + R_m)^*$. Below
we will determine that every finite-turn $\NPCM$ language $L$ has a paired $2m$-tape $\NCM$ and a well-formed $m$-pattern
$p$ with $L = L(M)^{(p)}$, and vice versa.
The next lemma associates each ``linear section'' of complete derivation trees
with two paired tapes in such a way that each reads
terminals produced on one side of the nonterminal chain. 

%If $M$ is a paired $2m$-tape $\NCM$ ($\NFA$) and $p$ is an $m$-pattern,
%$L(M)^p$ is called a  paired $2m$-tape  $\NCM$ ($\NFA$)
%$m$-pattern language.

\begin{lemma} \label{corB1}
If $M$ is a finite-turn $\NPDA$ (respectively  finite-turn $\NPCM$), then there is a paired $2m$-tape $\NFA$ (respectively paired $2m$-tape $\NCM$) $M'$ and an $m$-pattern $p$
such that $L(M) = L(M')^{(p)}$.
\end{lemma}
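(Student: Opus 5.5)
Starting from $M$, I would apply Proposition \ref{propB1} (or Proposition \ref{propB2} in the counter case) to obtain an $n$-ultralinear grammar $G$ (with monotonic counters) with $L(G)=L(M)$. By Lemma \ref{tonormal} I may assume $G$ is in $t$-normal form. By Lemma \ref{dertree}, every complete derivation tree (satisfying counters) then has the single bracketed pattern $p=p_{n,t}$; let $m$ be the number of matching bracket pairs in $p$. If $L(M)=\emptyset$, an $M'$ with empty language works for any $p$, so I assume $L(M)\ne\emptyset$.

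The structure I want to exploit is this. Each matching bracket pair of $p$ corresponds to a ``linear section'' of a derivation tree: a maximal chain of type-1 productions $A\rightarrow uBv$ within one $V_i$. This chain starts at the root (for the outer pair) or at a child produced by a type-3 production, and it ends either with a type-3 production $A\rightarrow A_1\cdots A_t$ (if $i>0$) or with a type-2 production $A\rightarrow u$ (if $i=0$). Along the chain the terminals produced on the left, concatenated top-down, form $x$, and those on the right form $y$, read bottom-up. Because type-3 productions produce no terminals, the yield of the tree is exactly $z^{(p)}$, where the $j$th bracket pair (in the order its \texttt{[} occurs in $p$) is replaced by the pair $(x_j,y_j)$ of its section. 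In a type-0 section I append the terminal word $u$ of the final production to $x_j$, which is consistent since $u$ lies between the brackets.

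Next I build $M'$ with tapes $T_1,R_1,\ldots,T_m,R_m$. It processes sections $1,\ldots,m$ in the preorder of the \texttt{[}'s of $p$. For section $j$ it guesses the chain top-down, keeping the current nonterminal in its state. On applying $A\rightarrow uBv$ it reads $u$ from $T_j$ and $v^R$ from $R_j$, so that $R_j$ holds $y_j^R$ read left to right. This is why the definition replaces \texttt{]} by $y_i^R$, since the right parts are generated in reverse. When the chain ends with a type-3 production, $M'$ stores the $t$ children nonterminals in its finite control as pending section roots; $p$ is fixed and finite, so the pending information is bounded. The root of section $j$ is then looked up from this stored data.

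In the counter case, $M'$ is an $\NCM$ with the $k$ monotonic counters of $G$, and it adds each production's counter vector when simulating that production. Since the counters are monotonic and order-independent, acceptance (all tapes consumed, $C_i=C_{i+1}$) matches $T_{\rm e}(G)$. The reading order is $T_1,R_1$ first, then $T_2,R_2$, and so on, so $M'$ is paired. The converse inclusion is immediate, because any accepting run of $M'$ assembles a complete tree of $G$ whose yield is $z^{(p)}$.

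I expect the main obstacle to be the bookkeeping that matches the preorder numbering of brackets in $p$ with the order in which the sections are generated. Its key point is that the pending nonterminals form a bounded amount of data because $p$ is fixed. A second subtle point is correctly placing the reversal on the right tapes.
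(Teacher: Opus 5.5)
Your proposal is correct and follows essentially the same route as the paper. You pass to an equivalent ultralinear grammar (with monotonic counters) in $t$-normal form so that every tree has the single pattern $p_{n,t}$, then let the paired $2m$-tape machine simulate a leftmost derivation section by section: it keeps the boundedly many pending nonterminals in its finite control (the paper phrases this as a stack of size $m$ stored in the state), reads left terminals on $T_j$ and reversed right terminals on $R_j$, and adds each production's counter vector. One small notational slip: if $y_j$ denotes the right part as it appears in the yield, the tape tuple is $z=(x_1,y_1^R,\ldots,x_m,y_m^R)$, and it is this $z$ whose image $z^{(p)}$ equals the yield.
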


\begin{proof} We will show the result for finite-turn $\NPCM$ (with monotonic counters), and it will be clear that it also works without counters.

Let $M$ be a finite-turn $\NPCM$ and let $G$ be an $n$-ultralinear $\ULGC$ equivalent to $M$ by Proposition \ref{propB2}. We assume that
$G$ is in $t$-normal form, for some $t \ge 1$ by Lemma \ref{tonormal}.
By Lemma \ref{dertree}, $P(G) \subseteq \{p_{n,t}\}$. Let $p = p_{n,t}$, where $p$ is an $m$-pattern.

We construct a $2m$-tape $\NCM$ $M'$ that
simulates a leftmost derivation of $G$. This construction is similar to the standard simulation of leftmost derivations of context-free grammars by pushdown automata \cite{HU}, but here we argue that a finite pushdown is all that is needed, which can be stored in the state. We build $M'$ so that it has a stack of size $m$. 
In the simulation described, the productions are simulated by transitions that add the same counter values to the counters.
As linear rules from and to nonterminals in $V_n$ are used of the form $A \rightarrow u B v$, $u,v \in \Sigma \cup \{\lambda\}, A,B \in V_n$, $M_i$ keeps track of the current nonterminal of the leftmost derivation at the top of the finite stack (always replacing the topmost symbol with the new nonterminal in this section), and  $M'$
uses input tapes $T_1$ and $R_1$ in parallel to read $u$ from $T_1$ and $v^R$ from $R_1$.
Then for this sequence of linear productions using only nonterminals of $V_n$, $S \Rightarrow^* x_1 B y_1$
for some $x_1,y_1 \in \Sigma^*, B \in V_n$, $T_1$ reads $x_1$ while $R_1$ reads $y_1^R$, as the letters
to the right of the nonterminal are read in reverse from the order they appear in the derived word. This is the part of the words derived in complete derivation trees 
corresponding to the first $\texttt{[}$ and the last $\texttt{]}$ in $p$. 
Next, a production of type 
$A \rightarrow A_1 A_2 \cdots A_t$, $A \in V_n, A_1, \ldots, A_t \in V_{n-1}$,
is used, and is simulated by replacing $A$ on the top of the stack by $(A_1 \cdots A_t)^R$, and it continues simulating from $A_1$ by replacing $A_1$ on the top of the stack with those that can be obtained using linear productions
starting from $A_1 \in V_{n-1}$. This part of the derivation is simulated similarly using
$T_2$ and $R_2$, and it corresponds to the second $\texttt{[}$ and its matching $\texttt{]}$. The leftmost derivation continues in this way. Since each of these $t$ nonterminals have one matching set of brackets in the well-formed bracketed pattern, a stack of size $m$ is all that is needed for this simulation.

It is evident that continuing in this fashion, $M'$ can simulate $M$ and based on the order
in which terminal letters are produced in $p$ versus the order they are read by $M$,
that $L(M) = L(M')^{(p)}$
and that $M'$ is paired.
%. Furthermore, the tape order language is a subset of 
%$(T_1 + R_1)^* \cdots (T_m + R_m)^*$ and therefore $M'$
\end{proof}

We now show the reverse direction.

\begin{lemma}
Given a paired $2m$-tape $\NFA$ (respectively $\NCM$) for some $m$, and a well-formed $m$-pattern $p$, there is a finite-turn
$\NPDA$ (respectively $\NPCM$) $M'$ such that $L(M') = L(M)^{(p)}$.
\end{lemma}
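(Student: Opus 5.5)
The approach is to go through grammars rather than build the pushdown automaton directly. I will construct an ultralinear grammar with monotonic counters $G$ such that $L(G) = L(M)^{(p)}$, and then invoke Proposition \ref{propB2} (respectively Proposition \ref{propB1} in the counter-free case) to obtain the finite-turn $\NPCM$ (respectively $\NPDA$) $M'$. The grammar will simulate $M$ phase by phase, following the structure of $p$, with one linear section per bracket pair.

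First I would normalize $M$. Since $M$ is paired, I can assume its state set is partitioned into phases $Q_1,\ldots,Q_m$. In phase $i$ only tapes $T_i$ and $R_i$ are read, and phase changes go from $Q_i$ to $Q_{i+1}$ only; this is done by adding a phase index to the state. Note that $\lambda$-moves and counter moves in a phase belong to that phase.

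Next I would number the bracket pairs of $p$ as $1,\ldots,m$ in order of their opening brackets, which is preorder on the tree of brackets. For pair $i$, let $c_1<\cdots<c_s$ be its children pairs. Let $\mathrm{next}(i)$ be the pair whose opening follows after the subtree of $i$ is closed, i.e.\ the preorder successor of the last descendant of $i$. The grammar uses nonterminals $[q,i,q']$. Such a nonterminal should derive $x\, w_{c_1}\cdots y^R$, i.e.\ the portion of $z^{(p)}$ enclosed by pair $i$. It should do so exactly when $M$ can go from state $q$ at the start of phase $i$ to state $q'$ at the start of phase $\mathrm{next}(i)$ (or a final state if there is none), while reading the tape contents of phases $i,\ldots,\mathrm{last}(i)$.

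The rules for $[q,i,q']$ come in three kinds:
\begin{enumerate}
\item Linear rules $[q,i,q'] \to (a\,[r,i,q']\,b,\ \vec v)$ for each transition of $M$ in phase $i$ from $q$ to $r$ that reads $a$ from $T_i$, or $b$ from $R_i$, or nothing, with counter vector $\vec v$. The letter read from $R_i$ is placed on the right, so $R_i$'s word appears reversed, matching the definition of $y_i^R$.
\item If $i$ is a leaf, the terminating rule $[r,i,q'] \to \lambda$, usable when $r$ can move into phase $i+1$ as state $q'$; if $i+1$ does not exist, $q'$ must instead be final.
\item If $s\ge 1$, branching rules $[r,i,q'] \to [r_1,c_1,r_2][r_2,c_2,r_3]\cdots[r_s,c_s,q']$, where $r$ ends phase $i$ by entering phase $c_1=i+1$ as $r_1$.
\end{enumerate}
The consistency to check is that $\mathrm{next}(c_j)=c_{j+1}$ and $\mathrm{next}(c_s)=\mathrm{next}(i)$, which holds by preorder. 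The start symbol is $S\to[q_0',1,q_f]$, with $q_0'$ ranging over phase-1 states reachable from $q_0$ and $q_f$ a formal final marker.

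The grammar is ultralinear: put $[\cdot,i,\cdot]$ in level $V_{h(i)}$, where $h(i)$ is the height of pair $i$ in the bracket tree. Linear rules then stay in one level, and branching rules strictly descend. Counters are summed over all productions used, which equals the total counter increments of the simulated run of $M$. Since the counters are monotonic, the order does not matter and the equality condition transfers.

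The main obstacle is the correctness argument, namely $L(G)=L(M)^{(p)}$. I would prove by induction on the height of $i$ that $[q,i,q']$ derives a word $u$ with counter vector $\vec v$ iff there exist words for tapes $T_j,R_j$ ($j$ ranging over the subtree of $i$) with $u$ equal to their placement according to the subpattern of $i$, and a run of $M$ from $q$ to $q'$ on them with increments $\vec v$. Here the pairedness is what lets the runs of consecutive phases concatenate. Care is needed at phase boundaries (the transitions that switch phase) and with how $\lambda$-transitions are assigned, but these are bookkeeping.
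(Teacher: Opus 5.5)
Your proposal is correct, but it takes a different route from the paper. The paper builds $M'$ directly. It guesses the boundary states $q_0,\dots,q_m$ of the $m$ phases of $M$ and a factorization of the input according to $p$. While reading the segment for the $i$th \texttt{[} it pushes $\$x_i$. On the segment $y_i^R$ for the matching \texttt{]} it pops $x_i^R$ in parallel with reading $y_i^R$, running phase $i$ of $M$ \emph{backwards} from $q_i$ to $q_{i-1}$. Well-formedness of $p$ puts $x_i$ on top of the stack at the right moment, and the fact that monotonic counters are never tested makes the reversed run legitimate.

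You instead simulate $M$ forwards inside a $\ULGC$ whose nonterminals $[q,i,q']$ mirror the bracket tree of $p$. You use the fact that preorder numbering makes the phases of each subtree a contiguous block of the paired run, and then you appeal to Proposition~\ref{propB2} (Proposition~\ref{propB1} without counters).

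Your version avoids the reverse simulation and any explicit argument about stack contents or the number of turns. Finite-turn-ness is inherited from the grammar-to-automaton direction of Proposition~\ref{propB1}, and ultralinearity is immediate once levels are assigned by height. It also makes the characterization symmetric with Lemma~\ref{corB1}, which likewise goes through grammars. In your argument, the order-independence of counter sums over a derivation tree plays the role that reversibility of counter runs plays in the paper; both rest on the counters being monotonic and untested. The paper's construction, in turn, is self-contained at the machine level and exhibits an explicit machine whose turn bound depends only on $p$.

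Two small points to nail down when writing it up:
\begin{enumerate}
\item Make phase switches explicit $\lambda$-moves $(s,i)\to(s,i+1)$ with zero counter vector, and take the start state to be $(q_0,1)$; the phase-index construction gives you both. Then the terminating and branching rules need to emit no terminal and no counter increment. Otherwise a switching transition that reads a letter or increments a counter would have to be charged to the following nonterminal.
\item Your start rule $S\to[q_0',1,q_f]$ presumes that pair~1 encloses all of $p$. That holds for patterns arising as in the paper, since they come from $S'\to\texttt{[}S\texttt{]}$. If $p$ may be a forest such as \texttt{[\,]\,[\,]}, use a start rule that chains the top-level pairs exactly as a branching rule chains siblings.
\end{enumerate}
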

\begin{proof}
Given a paired $2m$-tape $\NCM$ $M = (Q,\Sigma,\delta,q_0,F)$ and an $m$-pattern $p$, we
construct a finite-turn 
$\NPCM$ $M' = (Q',\Sigma, \Gamma,\delta',q_0',F')$ that accepts $L(M)^{(p)}$. Intuitively, $M'$ operates as follows
when given input $w \in \Sigma^*$:
$M'$ simulates the reading of the
tapes according to pattern $p$.
We give an example before describing the general simulation. On input $w = x_1 x_2 y_2^R x_3 x_4 y_4^R y_3^R y_1^R$
and pattern $p = \texttt{[ [ ] [ [ ] ] ]}$, $M'$ simulates the
computation of $M$
%on $(x_1, x_2, x_3, x_4, x_4, x_5)$
(while guessing the decomposition of $w$) as follows:
\begin{itemize}
\item $M'$ first guesses a list of states in $Q$, $(q_0, q_1, \ldots, q_4)$
where for $0 \le i < 4$, $(q_i, q_{i+1})$ means $M$ when started in state $q_i$ ends in state $q_{i+1}$ after reading from (paired) tapes $T_i$ containing $x_{i+1}$ and $R_i$ containing $y_{i+1}$ in parallel, with $q_0$ initial state of 
$M$ and $q_4 \in F$. Then:
\item 
$M'$ reads $x_1$ and pushes it in the stack.  
\item
$M'$ reads $x_2$ and pushes it in the stack. Then starting in 
   state $q_2$ simulates the computation of $M$ on $x_2^R$ and $y_2^R$ ``backwards'' where it reads $x_2^R$ by popping the stack in parallel to reading $y_2^R$ on the input, and confirms that the state entered after the process is $q_1$.
\item
$M'$ reads $x_3$ and pushes it in the stack. 
\item 
$M'$ reads $x_4$ and pushes it in the stack. Then starting in 
   state $q_4$ simulates the computation of $M$ on $x_4$ and $y_4$ backwards by reading $x_4^R$ from the stack and $y_4^R$ from the input and confirms that the state entered after the process is $q_3$.
\item
$M'$ starting in state $q_3$ simulates the computation of $M$ on $x_3$ and $y_3$ backwards on $x_3^R$ from the stack and 
$y_3^R$ from the input and confirms that the state entered after the process is $q_2$.
\item
$M'$ starting in state $q_1$ simulates the computation of $M$ on $x_1$ and $y_1$ backwards on
both $x_1^R$ and $y_1^R$ and confirms that the state entered after the process is $q_0$. Then $M'$ accepts.  
\end{itemize}

More generally, $M'$ starts by guessing $m+1$ states of $M$, $(q_0, \ldots, q_m)$ where $q_0$ is initial in $M$ and $q_m \in F$,
and $M'$ will simulate an accepting computation of $M$ on input $x = (x_1,y_1, \ldots, x_m, y_m)$. The paired $M$ starts by reading from the first two tapes $T_1,R_1$
between $q_0$ and $q_1$, then the next two $T_2, R_2$ between $q_1$ and $q_2$, etc., until it reads from the last two tapes 
$T_m,R_m$ between $q_{m-1}$ and $q_m$.

On input $w$ from left-to-right, $M'$ nondeterministically guesses a decomposition into $2m$ segments $r_1 \cdots r_{2m}$.
If the $i$th  $\texttt{[}$ of $p$ is at position $j$, let $x_i = r_j$, and if the $i$th  $\texttt{]}$ of $p$ is at position $j$, let
$z_i = r_j$ and $y_i = z_i^R$. Then $M'$ will simulate $M$ on $x^{(p)}$ as follows:
For each $i$ from $1$ to $m$, $M'$ reads and pushes $\$x_i$ to the pushdown ($\$$ a new symbol), and when it reaches $z_j = y_i^R$ on the input, it can read $y_i^R$ from the input in parallel to reading $x_i^R$ from the pushdown as it pops from the pushdown (which must occur since $p$ has well-formed brackets). Notice that any machine with monotonic counters can be simulated ``in reverse'' by switching the start and end state in each transition. This works because the transitions applied do not depend on the counter status, and the counter status only affects acceptance by only considering computations where counters are matching. Indeed, even taking an accepting computation of $M$ and reversing segments of the computation will still result in the counters being matched. Hence, $M'$ can simulate $M$ ``in reverse'' for the portion of the computation reading $x_i$ and $y_i$ from tapes $T_i$ and $R_i$ respectively between $q_{i-1}$ and $q_i$ ``in reverse'' by reading $x_i$ in reverse from the pushdown and $y_i$ in reverse from the input. The resulting accepting computation on $x^{(p)}$ would accept in $M'$. Conversely, any word $w$ accepted by $M'$ has a guessed sequence of states and a decomposition of the input such that $x^{(p)} =w$ where $M$ would accept $x$.
\end{proof}

%Note that the two lemmas above hold for the special case when
%$\NPCM$ and $\NCM$ are replaced by $\NPDA$ and $\NFA$, respectively.
Putting the previous two lemmas together, we obtain the characterization:

\begin{proposition} \label{prop20} 
The following are equivalent for a language $L$:
\begin{itemize}
\item
$L$ is accepted by a finite-turn $\NPDA$ (respectively, finite-turn $\NPCM$),
\item $L$ is generated by a $\ULG$ (respectively $\ULGC$),
\item there is a paired $2m$-tape $\NFA$ (respectively paired $2m$-tape $\NCM$) $M$ for some $m$, and a well-formed $m$-pattern
$p$ such that $L = L(M)^{(p)}$.
\end{itemize}
\end{proposition}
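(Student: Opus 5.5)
The plan is to close a cycle of implications among the three conditions, using only results already established; each of the two parallel versions (without counters, with counters) is handled the same way.

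\emph{First and second conditions.} Proposition \ref{propB1} gives the equivalence between finite-turn $\NPDA$ and $\ULG$. Proposition \ref{propB2} gives the equivalence between finite-turn $\NPCM$ and $\ULGC$. Both constructions are effective.

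\emph{First condition implies third.} Lemma \ref{corB1} applies directly. It yields a paired $2m$-tape $\NFA$ (respectively $\NCM$) $M'$ and an $m$-pattern $p$ with $L(M) = L(M')^{(p)}$. Here $p = p_{n,t}$ comes from Lemmas \ref{tonormal} and \ref{dertree}. This pattern is well-formed by its recursive definition $p_{i,t} = \texttt{[}(p_{i-1,t})^t\texttt{]}$, so the third condition holds.

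\emph{Third condition implies first.} The lemma immediately preceding the proposition (the converse construction) applies. Given a paired $2m$-tape $\NFA$ (respectively $\NCM$) $M$ and a well-formed $m$-pattern $p$, it builds a finite-turn $\NPDA$ (respectively $\NPCM$) $M'$ with $L(M') = L(M)^{(p)}$.

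\emph{Routine checks.} I would confirm two points.
\begin{enumerate}
\item The machine built in the converse lemma is finite-turn. Its pushdown operations follow the bracket structure of $p$: one push phase for each $\texttt{[}$ and one pop phase for each $\texttt{]}$. So the number of turns is at most $2m$, which is fixed independently of the input.
\item In the counterless case, that construction uses no counters. This is clear, since it only simulates $M$'s finite control and tapes.
\end{enumerate}

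The only real subtlety is already handled inside the converse lemma: simulating a monotonic-counter $\NCM$ ``backwards'' must preserve counter equality. Everything else is chaining the established results, so I expect no substantive obstacle.
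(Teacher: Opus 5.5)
Your proposal is correct and follows the paper's route. The paper links the first two conditions via Propositions~\ref{propB1} and~\ref{propB2}, and links the third to the first by combining Lemma~\ref{corB1} with the converse lemma immediately before the proposition, exactly as you do. Your extra checks (that the converse construction is finite-turn, and that the counterless case uses no counters) are already part of those lemmas' statements. They are harmless redundancy rather than needed steps.
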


%When there are no counters, we have:
%\begin{corollary}
%$L$ is accepted by a finite-turn $\NPDA$ if and only if there is 
%a paired $2m$-tape $\NFA$ $M$ for some $m$ and a well-formed $m$-pattern $p \in P^m$ such that
%$L = L(M)^p$.
%\end{corollary}

This is a generalization of a known result \cite{lineargrammars} that
$L$ is accepted by a 1-turn $\NPDA$ (equivalently, generated by a linear
$\CFG$) if and only if there is $2$-tape $\NFA$ $M$ such that
$L = \{xy^R ~|~ (x,y) \in L(M) \}$; this only needs the pattern $\texttt{[ ]}$.

\section{Complexity of (Finite-Turn) $\NPCM$ Languages}
\label{sec:complexity}

We first consider the complexity of finite-turn $\NPCM$.
We will use the characterization of finite-turn $\NPCM$ in terms of
multi-tape $\NCM$ to show the following:

\begin{proposition} \label{BB4}
Let $L$ be a language accepted by a finite-turn $\NPCM$.  Then
$L$ is in $\NLOG$.
\end{proposition}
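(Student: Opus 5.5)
We would prove Proposition~\ref{BB4} through the characterization of Proposition~\ref{prop20}. Let $L$ be accepted by a finite-turn $\NPCM$. By Lemma~\ref{corB1} there is a paired $2m$-tape $\NCM$ $M$ with $k$ monotonic counters and a well-formed $m$-pattern $p$ such that $L = L(M)^{(p)}$. We assume $M$ reads at most one letter per transition and that its counters are monotonic, so acceptance just means $C_i = C_{i+1}$ for odd $i$ at the end. The goal is a nondeterministic $O(\log n)$-space Turing machine $N$ that accepts $w$ exactly when $w = z^{(p)}$ for some $z \in L(M)$.

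On input $w$ of length $n$, $N$ first guesses the decomposition $w = r_1 \cdots r_{2m}$ matching the $2m$ bracket positions of $p$. It stores the $2m+1$ boundary positions, which takes $O(m\log n)$ bits since $m$ is a constant. This determines each $x_i$ as a factor of $w$ read left to right, and each $y_i$ as the reverse of a factor $z_i$.

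$N$ then simulates $M$ phase by phase, since $M$ is paired. In phase $i$ it keeps two pointers, one at the next unread position of $x_i$ (moving right from the start of its segment) and one at the next unread position of $y_i$ (moving left from the end of the segment of $z_i$). Each transition of $M$ that reads tape $T_i$ or $R_i$ is checked against $w$ at the corresponding pointer, and the pointer advances. Phase $i$ may end only when both pointers have exhausted their segments. The state of $M$ is kept in finite control.

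The main obstacle is handling the counters in logarithmic space, since the computation itself may be long. The key observation is that each counter value is at most the number of increments performed. We can first normalize $M$ so that every counter value is $O(n)$ in any accepting computation on input of total length $n$. To do this, we make $M$ accept only along computations whose counter increments are linearly bounded. The standard fact here is that $\lambda$-loops of an $\NCM$ can be replaced by bounded runs plus a linear-size check. Alternatively, and more simply, the semilinearity argument of \cite{Ibarra1978} shows that if there is an accepting computation, then there is one in which all counter values are $O(n)$: the counter effects of $\lambda$-cycles form a semilinear set, and equal counter values can be attained with a number of cycle repetitions linear in $n$. So $N$ aborts any branch in which a counter exceeds $c\cdot n$, with $c$ a constant depending on $M$. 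This bounds the storage for the $k$ counters in binary by $O(k\log n)$.

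At the end of phase $m$, $N$ accepts if the state of $M$ is final and $C_i = C_{i+1}$ for odd $i$. Every step uses only the constant-size finite control plus $O(\log n)$ bits of worktape, so $L \in \NLOG$.
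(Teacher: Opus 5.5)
Your proposal is correct and follows essentially the same route as the paper: pass to the paired $2m$-tape $\NCM$ $M$ with $L = L(M)^{(p)}$, guess the $2m$-segment decomposition of $w$, and keep the segment boundaries, the $2m$ head positions and the counters in binary on $O(\log n)$ worktape cells. The only difference is how you justify that the counters can be taken to be $O(n)$. The paper cites linear-time acceptance for one-tape $\NCM$ and transfers it to multi-tape machines by reading a subscripted shuffle of the tapes. Your cycle-repetition remark only sketches that same linear-time fact: it needs a small-solution (Hilbert-basis) argument over the pumpable $\lambda$-cycles, not just semilinearity of $L(M)$. So it is fine provided you cite or prove that fact.
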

\begin{proof}
By Proposition \ref{prop20}, 
$L$ is accepted by a finite-turn $\NPCM$ if and only if there is 
a paired $2m$-tape $\NCM$ $M$ for some $m$ and a $m$-pattern $p$
such that
$L = L(M)^{(p)}$.

It is known that a (1-tape) $\NCM$ accepts in linear time, i.e.,
there is a constant $c$ such that every  input of length $n$ that is accepted
can be accepted in a computation that runs within $cn$ time (the number of transitions used in the run) \cite{Baker1974}.
This generalizes to any $t$-tape $\NCM$ $M$.  To see this, from $M$ construct
a (1-tape) $\NCM$ $M'$ which simulates the computation of $M$  
on $t$ tapes by using distinct symbols to correspond to the symbols
on the tapes; thus reading symbol $a_i$ will correspond to reading symbol $a$ on tape $i$. The machine $M'$ therefore
reads a shuffle (with added subscripts) of the $t$-tape words accepted by $M$.
Since $M'$ is a (1-tape) $\NCM$, it runs in linear time. As
$M'$ exactly simulates the computation of $M$ on the $t$ tapes,
it follows that $M'$ also operates in linear time.

Hence, the $2m$-tape $\NCM$ $M$ that characterizes the finite-turn
$\NPCM$ accepts in linear time.  We construct from $M$ a
$\log n$ space-bounded $\NTM$
$M'$ which accepts $L$ as follows.  
Since $M$ accepts in linear time, $M'$ needs only
a $\log n$ tape to simulate each counter. 
Now given an input $w$ and the pattern $p$, $M'$ first nondeterministically
guesses a decomposition of (the two-way input) $w$ into $2m$ segments
according to pattern $p$.  $M'$ needs $2m$ additional read/write tapes each of size $\log n$ to remember the
locations of the borders between segments.  It also needs $2t$ additional 
read/write tapes each of size $\log n$
to remember the locations of the $2m$ read heads within the
$2m$ tapes of $M$. Then $M'$ simulates the computation of $M$
on the $2m$ tapes.  Even though $M'$ has only one two-way read-only
input tape, it can simulate the computation of $M$
since the locations of the starting and end points of 
each $x_i$ and $y_i$ (some of which are reversed according
to pattern $p$) as well as the locations of the tape
heads of $M$ within the segments are remembered in the
read/write tapes.  Hence, $L$ is $\NLOG$.
\end{proof}

Next, we turn to the general model of $\NPCM$, which need
not be finite-turn.
It is known that every $\CFL$ (a language accepted by an $\NPDA$)
is in $\DSPACE(\log^2 n)$ \cite{CFLslogsquared,HU1969}.  We will show that
every $\NPCM$ language is also in 
$\DSPACE(\log^2 n)$.  %We will again use the equivalent $\mathbb{Z}$-grammars.

For the next result, it is convenient to use another grammar model that is also known to be
equivalent to $\NPCM$ \cite{Georg}, but has a slightly different definition than $\CFGC$.
A $k$-dimensional $\mathbb{Z}$-grammar is a tuple $G = (V, \Sigma, P, S)$,
where $V$ and $\Sigma$ are non-empty sets of nonterminals 
and terminals, $S \in V$ is the start nonterminal, and $P$ is a set of rules of the form:
$$A \rightarrow (\alpha, c_1, \ldots, c_k),$$
where $A \in V$ and $\alpha \in (V\cup\Sigma)^*$, and $c_1, \ldots, c_k$
are in $\mathbb{Z}$ (i.e., they can be positive, $0$, or negative integers).
A configuration of $G$ is a tuple $(\alpha, c_1, \ldots, c_k)$
where $\alpha \in (V\cup \Sigma)^*$, and
$c_1, \ldots, c_k \in \mathbb{Z}$.
We write 
$(\alpha A \beta, c_1, \ldots, c_k) \Rightarrow (\alpha \gamma \beta, d_1, \ldots, d_k)$, $A \in V, \alpha,\beta \in (V \cup \Sigma)^*$, if
there a rule $A \rightarrow (\gamma, a_1, \ldots, a_k)$ in $G$, and
$d_i = c_i + a_i$ for each $i$.
The reflexive, transitive closure of the relation $\Rightarrow$  is denoted by $\Rightarrow^*$.

A terminal string $w \in L(G)$ if there is a derivation
$(S, 0, \ldots, 0)  \Rightarrow^* (w, 0, \ldots, 0)$.
(Thus the counters are initially zero and the derivation of
a terminal string is successful if the counters also end 
with zero.)

It was shown in \cite{valence} that every $\mathbb{Z}$-grammar $G$
can be converted to an equivalent $\mathbb{Z}$-grammar $G'$ in Chomsky 
Normal Form ($\CNF$), i.e., the rules are of the form:
$A \rightarrow (BC, c_1, \ldots, c_k)$, or $A \rightarrow (a, c_1, \ldots, c_k)$,
where $A, B, C$ are nonterminal symbols and $a$ is a terminal
symbol such that that $L(G') = L(G)$ if $L(G)$ does not
contain $\lambda$ and $L(G') = L(G) - \{\lambda\}$ if $L(G)$ contains
$\lambda$. %We can define a $\CFGC$ in CNF in a similar way.

The following shows that $\mathbb{Z}$-grammars and $\CFGC$s are equivalent:
\begin{proposition}
Every $\mathbb{Z}$-grammar $G$ can be converted to an equivalent
$\CFGC$  $G'$, and conversely (hence, both define
the family of $\NPCM$ languages). Moreover, $G'$ is in $\CNF$ if and 
only if $G$ is in $\CNF$.
\end{proposition}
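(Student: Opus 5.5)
We convert rule by rule, translating each integer counter vector into a pair of nonnegative monotonic counters per dimension, and back.

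\emph{From $\mathbb{Z}$-grammar to $\CFGC$.} Let $G=(V,\Sigma,P,S)$ be a $k$-dimensional $\mathbb{Z}$-grammar. Build $G'$ with $2k$ monotonic counters $C_1,\ldots,C_{2k}$, using the pair $(C_{2i-1},C_{2i})$ for dimension $i$. Replace each rule $A\rightarrow(\alpha,c_1,\ldots,c_k)$ by $A\rightarrow(\alpha,i_1,\ldots,i_{2k})$, where $i_{2j-1}=\max(c_j,0)$ and $i_{2j}=\max(-c_j,0)$.

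The underlying context-free grammars of $G$ and $G'$ are identical, so derivation trees correspond bijectively. For a tree, let the sums of the new counters be $v_1,\ldots,v_{2k}$. Then $v_{2j-1}-v_{2j}$ equals the sum of the $c_j$ over the productions used. Hence the tree satisfies counters in $G'$ exactly when every $\mathbb{Z}$-counter of $G$ ends at $0$. The yields coincide, so $L(G)=L(G')$.

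\emph{From $\CFGC$ to $\mathbb{Z}$-grammar.} Let $G$ have $k$ monotonic counters, $k$ even. Build a $\mathbb{Z}$-grammar with $k/2$ counters. Replace each rule $A\rightarrow(w,i_1,\ldots,i_k)$ by $A\rightarrow(w,i_1-i_2,i_3-i_4,\ldots,i_{k-1}-i_k)$.

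Again the underlying trees coincide. The final $\mathbb{Z}$-counter $j$ equals $v_{2j-1}-v_{2j}$, so it is zero iff the monotonic pair matches. Counter values during a derivation are irrelevant in both models; only the final sums matter, and these are order-independent. Therefore derivation-based acceptance in one model is equivalent to tree-based acceptance in the other.

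\emph{$\CNF$.} Both constructions leave the underlying rule $A\rightarrow\alpha$ unchanged and alter only the attached vectors. So each rule of the new grammar has the form $A\rightarrow BC$ or $A\rightarrow a$ iff the corresponding rule of the original grammar does, and $G'$ is in $\CNF$ iff $G$ is.

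Combined with the known equivalence of $\CFGC$ and $\NPCM$ \cite{IbarraGrammars}, this shows both grammar models define exactly the $\NPCM$ languages. The only step needing care is to state explicitly that acceptance depends only on the multiset of rules used in a tree, not on the order of application. Once that is stated, each direction is just arithmetic.
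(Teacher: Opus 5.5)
Your proof is correct and uses the same construction as the paper. Each integer entry $c_j$ is split into a pair of monotonic counters holding its positive and negative parts; the converse takes differences $i_{2j-1}-i_{2j}$; and because the underlying context-free rules are untouched, tree acceptance and the CNF property transfer directly. Your choice of $\max(-c_j,0)$ for the second counter of each pair is the correct one: the paper's text sets $b_i=c_i$ when $c_i<0$, which is negative and should read $-c_i$. You also state explicitly that only the order-independent final counter sums matter, which the paper leaves implicit.
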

\begin{proof}
Given a $\mathbb{Z}$-grammar $G$ with $k$ counters $C_1, \ldots, C_k$,
we can construct a $\CFGC$ $G'$ with counters $C_1,\ldots, C_k$
and $D_1, \ldots,D_k$ with rules defined as follows:

If $A \rightarrow (\alpha, c_1, \ldots, c_k)$ is a rule in $G$ then
$A \rightarrow (\alpha, a_1, b_1, \ldots, a_k, b_k)$ is a rule in $G'$ where
$a_i = c_i$ and $b_i = 0$ if $c_i \ge 0$, and $a_i = 0$ and $b_i = c_i$
if $c_i < 0$.

By reversing the construction above, the converse is 
 also true.
Note that normality is preserved, i.e., 
$G'$ is in $\CNF$ if and only if $G$ is in
$\CNF$.
\end{proof}

The proof of the following result is the same as the proof
for the case of $\CFG$ in $\CNF$ in \cite{CFLslogsquared,HU1969}. Although we could prove it using $\CFGC$, it is more convenient to use $\mathbb{Z}$-grammars.

\begin{lemma} \label{techlemma}
Let $G=(V,\Sigma,P,S)$ be a $\mathbb{Z}$-grammar in $\CNF$.  
Suppose $(\alpha, c_1, ..., c_k )$ is a configuration,
and $(X, 0, \ldots, 0)  \Rightarrow^* (\alpha, c_1, \ldots, c_k)$ for $X \in V$ and $n =|\alpha| \ge 4$. Then $\alpha$ can be written as $\alpha_1 \alpha_2 \alpha_3$ with $\alpha_1, \alpha_3 \in (V \cup \Sigma)^*$ (possibly $\lambda$) and there are 
$a_1, \ldots, a_k, b_1, \ldots, b_k \in \mathbb{Z}$ such that there exists the following derivation 
of $(\alpha, c_1, ..., c_k)$:
\begin{itemize}
\item
$(X ,0,\ldots, 0) \Rightarrow^* (\alpha_1 A \alpha_3, a_1, \ldots, a_k) \Rightarrow^* (\alpha_1 \alpha_2 \alpha_3, c_1, \ldots, c_k)$,
\item
$(A, 0,\ldots, 0) \Rightarrow^* (\alpha_2, b_1, \ldots, b_k)$,
\item
$c_i = a_i + b_i$ for each $i$, and
\item
$1/3n < |\alpha_2| \le 2/3n$.
\end{itemize}
\end{lemma}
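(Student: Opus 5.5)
The plan is to use the derivation tree and the standard balanced-subtree argument for CNF grammars, and to check that the counter vectors split additively.

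First, turn the given derivation $(X,0,\ldots,0)\Rightarrow^*(\alpha,c_1,\ldots,c_k)$ into a (partial) derivation tree $\tau$ rooted at $X$ whose frontier, read left to right, is $\alpha$. Each internal node is labelled by the production applied there, counter vector included. The counter vector of the whole derivation is the sum of the vectors of the productions at the internal nodes of $\tau$. Since addition in $\mathbb{Z}^k$ is commutative, this sum does not depend on the order in which the productions are applied. So any derivation that builds the same tree, in any order, ends in $(\alpha,c_1,\ldots,c_k)$.

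Next, find the node $A$. Because $G$ is in $\CNF$, every internal node has either two nonterminal children or one terminal child. The frontier of a node with two children is the union of its children's frontiers. Start at the root, whose frontier has size $n>2n/3$. While the current node's frontier has size greater than $2n/3$, move to the child with the larger frontier. That child's frontier has size at least half of the parent's, so greater than $n/3$. A node whose frontier has size greater than $2n/3\ge 8/3$ has size at least $3$, so it is not a leaf and not a node with a single terminal child. The descent therefore continues until it reaches a node $A$ with $n/3<|\alpha_2|\le 2n/3$, where $\alpha_2$ is the frontier of $A$. If a node with frontier size in $(2n/3, 4n/3]$ has children each of size at most half its size, both children have size at most... this step is only needed for the strict bound $>n/3$, which comes from halving a quantity greater than $2n/3$. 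The hypothesis $n\ge4$ guarantees the descent is well defined and that the node $A$ is a nonterminal. Let $\alpha_1$ and $\alpha_3$ be the parts of the frontier to the left and right of the subtree at $A$.

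Then split the counters. Let $(b_1,\ldots,b_k)$ be the sum of the production vectors inside the subtree at $A$. This gives $(A,0,\ldots,0)\Rightarrow^*(\alpha_2,b_1,\ldots,b_k)$. Let $(a_1,\ldots,a_k)$ be the sum over the remaining internal nodes, namely $\tau$ with the subtree at $A$ pruned back to the leaf $A$. This gives $(X,0,\ldots,0)\Rightarrow^*(\alpha_1A\alpha_3,a_1,\ldots,a_k)$. Continuing that derivation with the steps of the subtree, shifted by $a_i$, yields $(\alpha_1\alpha_2\alpha_3,c_1,\ldots,c_k)$ with $c_i=a_i+b_i$.

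The main obstacle is bookkeeping rather than mathematics. One must state carefully that derivations of $\mathbb{Z}$-grammars correspond to trees whose counter totals are order-independent sums. Because counters are integers with no nonnegativity constraint, reordering steps is always legal, unlike for machines.
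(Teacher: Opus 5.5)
Your proof is correct and is essentially the paper's approach: the paper gives no separate proof and simply defers to the standard balanced-subtree descent for ordinary context-free grammars in $\CNF$. Your observation that the counter vectors add commutatively over the internal nodes of the derivation tree, with no sign constraints on intermediate values, is exactly what lets that argument carry over; the half-finished sentence about frontier sizes in $(2n/3,4n/3]$ is a stray fragment you should delete, since the strict bound $|\alpha_2|>n/3$ already follows from your preceding sentence.
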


The following procedure and proof are similar to the corresponding proof for context-free languages  in \cite{CFLslogsquared,HU1969}.
\begin{proposition}
If $G$ is a $\mathbb{Z}$-grammar in $\CNF$, then $L(G)$ is in $\DSPACE(\log^2 n )$. Thus, every $\NPCM$ language is in $\DSPACE(\log^2 n)$.
\label{spaceNPCM}
\end{proposition}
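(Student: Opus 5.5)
We adapt the Lewis--Stearns--Hartmanis recursive procedure to $\mathbb{Z}$-grammars in $\CNF$, carrying counter vectors along with sentential forms. The main new point is that the counter values appearing in relevant derivations need only $O(\log n)$ bits each.

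\emph{Counter bound.} Fix $w$ with $|w| = n$. A derivation tree of a $\CNF$ grammar for a word of length $n$ has exactly $2n-1$ internal nodes. Each node contributes a vector bounded componentwise by $c = \max |c_i|$ over all productions. Hence every partial sum, and in particular every vector $(a_1,\ldots,a_k)$ or $(b_1,\ldots,b_k)$ attached to a subtree or pruned tree in Lemma~\ref{techlemma}, has entries in $[-2cn, 2cn]$. Each such vector is therefore stored in $O(\log n)$ space, since $k$ and $c$ are constants of $G$. The $\lambda$ case is handled separately: by the $\CNF$ conversion result of \cite{valence}, whether $\lambda \in L(G)$ is a fixed finite fact about the language.

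\emph{The procedure.} Define a deterministic recursive predicate $D(X, \alpha, \vec c)$ meaning $(X,\vec 0) \Rightarrow^* (\alpha,\vec c)$. Here $\alpha$ is a string over $V \cup \Sigma$ of length at most $n$, represented compactly, as in \cite{HU1969}, as a subword $w[i..j]$ of the input with at most one "hole" labelled by a nonterminal. In any recursive call this shape is preserved if we always split so that the extracted piece $\alpha_2$ is a subword with at most one hole.

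The base case is $|\alpha| < 4$. Here the derivation tree has $O(1)$ nodes, so all trees with their counter sums can be enumerated directly in $O(\log n)$ space.

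Otherwise, Lemma~\ref{techlemma} applies. The procedure iterates over all nonterminals $A$, all positions of $\alpha_2$ with $1/3|\alpha| < |\alpha_2| \le 2/3|\alpha|$, and all vectors $\vec b$ in the bounded range; the vector $\vec a = \vec c - \vec b$ is then determined. For each choice it checks $D(A,\alpha_2,\vec b)$ and $D(X,\alpha_1 A\alpha_3,\vec a)$ recursively. Correctness in one direction is Lemma~\ref{techlemma}. In the other direction, combining the two derivations, with counters adding, gives the desired derivation.

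To keep the hole structure, I would use the refined version of the lemma from \cite{HU1969}, in which the split node is chosen so that at most one hole per form persists. I expect this to need a brief verification that the counter bookkeeping does not interfere. Because counters simply add, the argument is literally the classical one with an extra coordinate. Then $w \in L(G)$ if and only if $D(S, w, \vec 0)$ holds.

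\emph{Space analysis.} Both recursive calls shrink the length to at most $2/3$ of the original, plus one for the hole symbol. The recursion depth is therefore $O(\log n)$. Each stack frame stores $X$, $A$, the endpoints of the subword and hole, and the vectors $\vec b$ and $\vec c$. That is $O(\log n)$ bits per frame, for $O(\log^2 n)$ in total.

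The second claim then follows. Every $\NPCM$ language is generated by a $\CFGC$ \cite{IbarraGrammars}, hence by a $\mathbb{Z}$-grammar by the preceding proposition, hence by a $\CNF$ $\mathbb{Z}$-grammar up to $\lambda$. I expect the main obstacle to be the counter bound: it justifies enumerating only polynomially many candidate vectors $\vec b$, so that counter values fit in logarithmic space.
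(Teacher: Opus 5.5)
Your overall route is the paper's: bound all counter values by $O(n)$, drive a recursive test with Lemma~\ref{techlemma}, and get $O(\log n)$ depth times $O(\log n)$-bit frames. Your counter bookkeeping is fine: the $2cn$ bound, recombining the two derivations by adding vectors, the base case, and the $\lambda$ remark.

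\textbf{The gap.} The step that fails is representing $\alpha$ as a subword with at most one hole. The extracted piece $\alpha_2$ trivially has at most one hole. The problem is $\alpha_1 A \alpha_3$: it still contains the old hole $B$ unless the chosen node $A$ is an ancestor of $B$, and a balanced node above $B$ need not exist.

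For a counterexample, take the $\CNF$ grammar $X_{i+1} \rightarrow X_i X_i$, $X_0 \rightarrow a$ and the word $a^{2^d}$ with $d \ge 3$. Every node has a yield whose length is a power of $2$, so the only admissible first splits are at the two children of the root. This leaves the form $X_{d-1}a^{2^{d-1}}$ (or its mirror), of length $N = 2^{d-1}+1$. The hole's only proper ancestor is the root, so every $\alpha_2$ with $N/3 < |\alpha_2| \le 2N/3$ lies in the terminal part, and $\alpha_1 A \alpha_3$ then has two holes.

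So no version of the lemma gives a single balanced split that preserves one hole. If your search allows only such splits, it rejects words of $L(G)$. If it allows the others, holes can accumulate along the recursion, and storing the whole form explicitly in every frame is no longer $O(\log n)$ bits.

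\textbf{Two repairs.} (a) Store in each frame only the nonterminal $A$, the input-coordinate endpoints of the block $\alpha_2$, which of the two subcalls is active, and the counter vectors. The current form is then $w[i..j]$ with the blocks recorded in the ancestor frames replaced by their nonterminals. Its length or any symbol can be recomputed by scanning the stack with $O(\log n)$ extra space. This is in effect what the paper does: its frames record only $A$, the counter values and the split positions.

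(b) Alternatively, keep at most one hole by splitting differently. If the balanced node $C$ is not an ancestor of the hole $B$, let $D \rightarrow D_1 D_2$ be their lowest common ancestor. Recurse on the one-hole pieces $X \Rightarrow^* \cdots D \cdots$, $D_1 \Rightarrow^* \cdots C \cdots$ and $D_2 \Rightarrow^* \cdots B \cdots$ (or with $D_1, D_2$ swapped), and on the hole-free piece $C \Rightarrow^* \alpha_2$. Guess a counter vector for each piece and for the production used at $D$. Each piece has length at most $2|\alpha|/3$.

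With either repair, your space analysis and the rest of the argument go through unchanged.
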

\begin{proof}
First we observe that since $G$ is in $\CNF$,
there is a constant $m$ such that if $w$ of length $n$ is in 
$L(G)$ and $(S, 0,\ldots,0)  \Rightarrow^* (w, x_1, \ldots , x_k)$, then
the absolute value of any counter in every configuration encountered during the derivation is at most $mn$.

Because of the above lemma, we can write a recursive 
procedure to determine if a terminal string $w$ is in $L(G)$.  
When the recursive procedure is converted to a sequential procedure, the space requirement would be $\log^2 n$ space.
The recursive procedure is a generalization of the procedure
described in \cite{CFLslogsquared,HU1969}.

{\trivlist \small \item[\hskip \labelsep{\bf procedure} TEST($X,\alpha, c_1, \ldots, c_k$)]
\textcolor{white}{WTF}
\begin{code}
\noindent
\t\t // $X \in V$, $\alpha \in (V \cup \Sigma)^*$, and $c_i \in \mathbb{Z}$ with $|c_i| \le m|\alpha|$ for each $i$.\\  
\t\t // returns true if and only if
    $(X, 0, ..., 0)  \Rightarrow^* (\alpha, c_1, \ldots, c_k)$\\ 
\t\t if $|\alpha| \le 3$ and $(X, 0, \ldots, 0)  \Rightarrow^* (\alpha, c_1, \ldots, c_k)$ in at most  5 steps\\
\t\t\t return `yes'\\
\t\t else if there exist $\alpha_1, \alpha_2, \alpha_3 \in (V\cup \Sigma)^*$, $A \in V$, $a_1, \ldots, a_k, b_1, \ldots, b_k \in \mathbb{Z}$ such that \\
\t\t\t\t\t (1) $1/3|\alpha| < |\alpha_2| \le 2/3|\alpha|$\\
\t\t\t\t\t (2) $\alpha = \alpha_1 \alpha_2 \alpha_3$\\
\t\t\t\t\t (3) $ |a_i| , |b_i|, |c_i| \le m|\alpha|$ for $1 \le i \le k$\\
\t\t\t\t\t (4) $c_1 = a_1 + b_1, \ldots, c_k = a_k + b_k$\\
\t\t\t\t\t (5) TEST($X$, $\alpha_1 A \alpha_3$, $a_1, \ldots, a_k)$ and   TEST($A$, $\alpha_2, b_1, \ldots, b_k$)\\
\t\t\t return `yes'\\
\t\t else return `no'
\end{code}
 \endtrivlist}
To determine if a terminal string $w = a_1 \cdots a_n$
is in $L(G)$, we initially call TEST($S, a_1 \cdots a_n, 0, \ldots , 0)$.

Note that procedure TEST systematically looks at all possible decompositions of $\alpha$ into $\alpha_1, \alpha_2, \alpha_3$ 
and all choices of $a_1, \ldots, a_k, b_1, \ldots, b_k$ satisfying the constraints of (1) to (4) to check if one of the
decompositions of $\alpha$ combined with one choice of every $a_i$ and $b_i$, $1 \le i \le k$ satisfies (5). (If none do, then Lemma \ref{techlemma} implies the word is not accepted.) To systematically try all such decompositions and counter values involves having a loop that traverses in a depth-first brute-force manner through a vector
of $2k+2$ components, where the first $2k$ components are the current values of $a_1, \ldots, a_k, b_1, \ldots, b_k$ and
the remaining $2$ components are the starting and ending positions of $\alpha$. There are $2m|\alpha| + 1 $ values for each
of the counters (each can be negative) and $2/3 |\alpha|$ for each of the last two. This requires $m' \log |\alpha|$ bits for
the vector where $m'$ is a constant that only depends on $M$.

To see that the stated complexity holds, we  can convert the recursive procedure TEST to
a sequential procedure by ``unrolling'' the recursion.
The sequential procedure would require the use of a 
 stack. 
 Each stack
frame will need to record certain items such as $A$ and the vector with binary components. As justified with $m'$,
a stack frame requires only $O(\log n)$ space. 
Furthermore, in the recursion tree, each value of $n$ is at most $3/2$ times its children's value, and therefore
 the depth of recursion is $O(\log n)$, and
the stack will need $O(\log n)$ stack frames.  Hence, the
stack needs at most $O(\log n \log n) = O(\log^2 n)$
space.  
\end{proof}

Finally, we look at the time complexity of $\NPCM$ languages generally.
We will need the following
proposition from \cite{TatHungChan}:

\begin{proposition} \cite{TatHungChan} \label{BB1} 
Let $M$ be an $\NPCM$.  There is a constant $c$ such that
every $w$ in $L(M)$ has an accepting computation
that runs within $c|w|$ time. Thus, $M$ accepts in linear time.
\end{proposition}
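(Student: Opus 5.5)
The statement to prove is Proposition~\ref{BB1}: every $\NPCM$ $M$ has a constant $c$ such that each $w \in L(M)$ has an accepting computation of length at most $c|w|$. The plan is to pass to grammars and shrink derivation trees. By the equivalence of $\NPCM$ and $\CFGC$ \cite{IbarraGrammars}, and the conversion to $\mathbb{Z}$-grammars in $\CNF$ \cite{valence} with the preceding proposition, there is a $\CFGC$ $G$ in $\CNF$ with $L(G) = L(M)$ up to $\lambda$. A complete derivation tree in $T_{{\rm e}}(G)$ for a word of length $n$ then has exactly $2n-1$ internal nodes, so it uses $O(n)$ productions.

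Next I convert such a tree back into a computation of an $\NPCM$. I would fix the standard construction from $\CFGC$ to $\NPCM$: the machine simulates a leftmost derivation on its pushdown and applies each production's counter increments by $\lambda$-moves. Each production in $\CNF$ contributes a bounded number of steps for its nonterminals and terminals. It also contributes $i_1+\cdots+i_k$ counter increments, which is bounded by a constant $d$ depending only on $G$. The resulting $\NPCM$ $M'$ therefore accepts $w$ within $O(n)$ steps. At the end, it checks $C_i = C_{i+1}$ by simultaneous decrements, which costs at most $dn$ further steps.

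The issue is that the statement concerns the given $M$, not an equivalent $M'$. To handle this, I would apply the argument directly to $M$'s computations. The construction in Proposition~\ref{propB2} generalizes to non-finite-turn machines via the standard $\NPDA$-to-$\CFG$ triple construction. It produces a $\CFGC$ whose derivation trees correspond to computations of $M$, with nonterminals $[q,Z,p]$ and each production encoding one transition. A long computation then corresponds to a tree with long unary chains, or subtrees with empty yield, which are the $\lambda$-moves.

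The main obstacle is a pumping-down step. If a $\lambda$-yield subtree or a chain repeats a nonterminal, the segment between the repetitions can be cut out, but this changes the counter vector and may break $C_i = C_{i+1}$. I would handle this with a Parikh/semilinearity argument. View the multiset of removable "loops" (cycles of $\lambda$-moves with their counter effects) as vectors in $\mathbb{N}_0^k$. Any sufficiently large collection of such loop vectors contains a nonempty subcollection whose contribution to every difference $C_i - C_{i+1}$ is zero. This follows from a Carathéodory/Pottier-type bound, or a pigeonhole argument on partial sums within a box of constant size. Removing that subcollection keeps the tree valid and the counters matched, so the number of loops can be reduced to $O(n)$.

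Each remaining non-loop segment has length bounded by the number of states and stack symbols. Hence the computation length is at most $c|w|$. This mirrors how the 1-tape $\NCM$ linear-time result of \cite{Baker1974} is obtained, and the remaining care is in making sure loop removal preserves the tree structure in the presence of the pushdown.
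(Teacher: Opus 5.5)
The paper gives no proof of this statement; it cites Chan \cite{TatHungChan}. Your proposal therefore has to stand on its own. Your first two paragraphs are correct, but they only show that $L(M)$ is accepted by \emph{some} linear-time $\NPCM$. That happens to be all Proposition~\ref{rough} uses, but for the given $M$ everything rests on the pumping-down step, which has two genuine gaps.

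\textbf{First gap: the zero-sum claim is false as stated.} Let $M$ have a $\lambda$-loop on a final state $p$ that increments $C_1$, and a transition from $p$ to $p$ reading $a$ that increments $C_2$. An accepting computation on $a^n$ contains $n$ such $\lambda$-loops, each contributing $+1$ to $C_1-C_2$. No nonempty subcollection of them is balanced, however large $n$ is. What is true, and what you need, uses the fact that the loops' total difference vector is the negative of the residual tree's difference vector, hence has norm $O(n)$. By the Steinitz lemma, you can interleave the bounded loop vectors with $O(n)$ pieces of norm at most $1$ of the residual tree's difference vector. The whole family sums to zero, so all partial sums can be kept $O(1)$. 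Between two consecutive pieces, the partial sums of loop vectors then lie in a box of constant size. Pigeonhole then gives a nonempty zero-sum block once there are more than $c\,n$ loops. Your ``box of constant size'' only exists after this interleaving, which is exactly where the $O(n)$ bound on the total enters. A Carath\'eodory bound (which controls only the support) or Pottier's bound (polynomial, in general not linear, in the right-hand side) does not give $O(n)$ by itself.

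\textbf{Second gap: the loops are not independent.} So ``removing that subcollection keeps the tree valid'' is unjustified, and this is precisely the ``remaining care'' you defer. In a derivation tree the repetitions are nested and overlapping. Removing an outer loop deletes inner ones, and removing one loop can delete the only occurrence of the nonterminal at which another loop hangs. A standard remedy has three steps:
\begin{enumerate}
\item Extract bounded-size, empty-yield $A$-to-$A$ pumps one at a time, but only when the extraction removes no nonterminal from the set occurring in the tree. Among $|V|+1$ disjoint consecutive pumps on a path, at most $|V|-1$ are unsafe, so such a pump can be found.
\item Show that a tree admitting no such safe pump has $O(|w|+1)$ nodes. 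Count the at most $|w|-1$ branching nodes of the nonempty-yield skeleton, and bound the unary chains and the empty-yield side subtrees by constants.
\item Reverse the viewpoint: choose which sub-multiset of extracted pumps to \emph{re-insert} into the residual tree, each at any occurrence of its root nonterminal. Such an occurrence exists because the nonterminal set was preserved, and re-insertion leaves the yield unchanged.
\end{enumerate}
Only with this decomposition does the corrected zero-sum lemma apply, yielding a tree, and hence an accepting computation of $M$, of size $O(|w|+1)$. This also relies on $M$ having monotonic counters, so that inserting or deleting pumps cannot invalidate a zero test. For a genuinely reversal-bounded $M$, you must first pass to the monotonic-counter simulation, which preserves computation lengths up to a constant factor.
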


The following is straightforward to show from results in the current
literature.

\begin{proposition} \label{rough} All $\NPCM$ languages are in $\P$.
\end{proposition}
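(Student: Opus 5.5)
We prove that every $\NPCM$ language is in $\P$ by a CYK-style dynamic program over the equivalent $\mathbb{Z}$-grammar in $\CNF$. The key fact is that counter values stay polynomially bounded, so the table stays polynomial.

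\textbf{Grammar.} Let $L=L(M)$ for an $\NPCM$ $M$. Since $\NPCM$ languages coincide with the $\CFGC$ languages \cite{IbarraGrammars}, and hence with the $\mathbb{Z}$-grammar languages by the proposition above, fix a $k$-dimensional $\mathbb{Z}$-grammar $G=(V,\Sigma,P,S)$ in $\CNF$ with $L(G)=L-\{\lambda\}$ \cite{valence}. Whether $\lambda\in L$ is a single fixed bit that we hard-code.

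\textbf{Counter bound.} As in the proof of Proposition \ref{spaceNPCM}, a $\CNF$ derivation of a word of length $n$ uses exactly $2n-1$ rule applications. Let $m$ be the maximum absolute value of any counter entry in any rule. Then every intermediate configuration, and every sub-derivation $(A,0,\ldots,0)\Rightarrow^*(u,b_1,\ldots,b_k)$ with $u$ a factor of the input, has $|b_i|\le 2m|u|\le 2mn$. The useful counter vectors therefore range over a set of size $(4mn+1)^k$, which is polynomial for fixed $G$.

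\textbf{Table.} For an input $w=a_1\cdots a_n$, we compute for each $1\le i\le j\le n$ the set
\[
T[i,j]=\{(A,b_1,\ldots,b_k)\mid (A,0,\ldots,0)\Rightarrow^*(a_i\cdots a_j,b_1,\ldots,b_k),\ |b_l|\le 2m(j-i+1)\}.
\]
\begin{itemize}
\item \emph{Base case.} $T[i,i]$ consists of the tuples $(A,c_1,\ldots,c_k)$ for each rule $A\to(a_i,c_1,\ldots,c_k)$.
\item \emph{Inductive case.} $(A,\vec b)\in T[i,j]$ if and only if there are a rule $A\to(BC,\vec c)$, a split point $i\le r<j$, and tuples $(B,\vec d)\in T[i,r]$, $(C,\vec e)\in T[r+1,j]$ with $\vec b=\vec c+\vec d+\vec e$.
\end{itemize}
Correctness of the recurrence follows from the standard decomposition of a $\CNF$ derivation tree at its root. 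Counter sums are additive over subtrees, and the order in which rules are applied does not affect the final counter values.

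\textbf{Acceptance and running time.} We accept iff $(S,0,\ldots,0)\in T[1,n]$. Each table entry has size $O(|V|n^k)$, and computing it iterates over $O(n)$ split points and pairs of entries, for $O(n^{2k+1})$ work per cell. Over $O(n^2)$ cells this gives $O(n^{2k+3})$ total time, which is polynomial since $k$ is fixed.

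\textbf{Main obstacle.} The delicate point is the polynomial bound on counter values in sub-derivations. It follows from the fixed derivation length of $\CNF$, as in Proposition \ref{spaceNPCM}, but we must also check that intermediate partial sums never need to be stored beyond the bound. This holds because each stored tuple corresponds to the complete derivation of a substring, so its bound is $2m|u|$ for that substring $u$.

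Alternatively, one could use Proposition \ref{BB1}: accepting runs have length $O(n)$, so counters are bounded by $O(n)$, and the standard polynomial-time $\NPDA$ membership algorithm (summary/saturation) can be run over states augmented with counter values. This again yields polynomially many configurations per stack level. We would use the grammar route as the primary argument.
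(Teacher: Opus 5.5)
Your proof is correct, and it takes a genuinely different route from the paper.

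The paper's proof is a short black-box reduction. By Chan's theorem (Proposition \ref{BB1}), every accepted word has an accepting run of linear length, so the counters fit in $O(\log n)$ space. An $\NPCM$ is therefore simulated by a nondeterministic $\log n$-space-bounded machine with an auxiliary pushdown, and Cook's theorem \cite{Cook:1971:Pushdowns} places all such languages in $\P$.

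You instead work with the $\CNF$ $\mathbb{Z}$-grammar already used for Proposition \ref{spaceNPCM}. There the linear counter bound is immediate, because a $\CNF$ derivation of a factor of length $\ell$ uses exactly $2\ell-1$ rules. You then run an explicit CYK recurrence over (nonterminal, factor, counter vector). The recurrence is sound because $\mathbb{Z}$-grammars constrain only the final counter values, so vectors simply add over subtrees.

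What your route buys:
\begin{itemize}
\item a self-contained, explicit algorithm with a concrete bound $O(n^{2k+3})$;
\item independence from both Chan's linear-time theorem and Cook's theorem, at the cost of relying on the nontrivial $\CNF$ theorem of \cite{valence} instead.
\end{itemize}
Your exponent still grows with the number of counters, so this does not settle the question raised in the conclusions about a machine-independent exponent.

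What the paper's route buys is brevity and, more importantly, indifference to how the input head moves. The next proposition, on two-way $\NPCM$ with polynomially bounded counters, follows from the same proof verbatim. A grammar/CYK argument is inherently tied to one-way input.

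Your sketched alternative at the end (linear-length runs plus a summary/saturation algorithm over counter-augmented states) is essentially the paper's argument with Cook's theorem unfolded.
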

\begin{proof}
In \cite{Cook:1971:Pushdowns}, it was shown that
nondeterministic Turing machines with a two-way read-only input tape, a $\log n$-space-bounded
read/write worktape, and a pushdown stack are equivalent to polynomial time
deterministic Turing machines.
By Proposition \ref{BB1}, every $\NPCM$ operates in linear time.
Hence the counter values are linear in the length of the input and can be
stored in $\log n$ space. Each such machine can be simulated by a nondeterministic Turing machine with a
$\log n$-space bounded read/write worktape plus a pushdown stack. Hence, it
follows that $\NPCM$ languages are in $\P$.
% and that all linear time $2\NPCM$ languages are in $\P$.
\end{proof}

\section{Two-way NPCM} \label{sec:2NPCM}

In this section, we will consider two-way $\NPCM$ machines, denoted by $2\NPCM$, which
have a two-way
read-only input with left and the right end markers $\rhd$ and $\lhd$. We assume that for two-way machines, acceptance occurs when the
machine falls off the right end marker in an accepting state. See \cite{Ibarra1978} for the formal
definition.

First, we see the following generalization from $\NPCM$ to $2\NPCM$.
\begin{proposition} All $2\NPCM$ where there is a polynomial $p(n)$ such that the counters grow to
at most $p(n)$ on inputs of size $n$ accept only languages in $\P$.
\end{proposition}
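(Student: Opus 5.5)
We follow the same route as the proof of Proposition \ref{rough}, now using the two-way input directly. By \cite{Cook:1971:Pushdowns}, a language is in $\P$ if it is accepted by a nondeterministic Turing machine with a two-way read-only input tape, an $O(\log n)$-space-bounded read/write worktape, and one unrestricted pushdown stack (an auxiliary pushdown automaton). It suffices to build such a machine $M'$ for a given $2\NPCM$ $M$ whose counters never exceed $p(n)$ on inputs of length $n$.

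The construction of $M'$ has three ingredients.
\begin{itemize}
\item $M'$ uses its own two-way input head as the input head of $M$. It keeps the current state of $M$ in its finite control.
\item $M'$ uses its pushdown as the pushdown of $M$, transition for transition.
\item $M'$ stores each of the $k$ counters of $M$ in binary on the worktape. A value at most $p(n)$ needs $O(\log p(n)) = O(\log n)$ bits, since $p$ is a polynomial, so $k$ counters together need $O(\log n)$ space.
\end{itemize}
Each transition of $M$ tests whether a counter is zero, which $M'$ checks on the worktape, and then changes that counter by $-1,0,+1$, which $M'$ does by binary increment or decrement. Acceptance is the same: $M'$ accepts when $M$ falls off the right end marker in an accepting state.

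We should also make sure $M'$ never uses more than $O(\log n)$ worktape space, including on non-accepting branches where the hypothesis might not apply. To do this, $M'$ first marks off $\lceil\log_2(p(n)+1)\rceil+1$ cells per counter. This is possible because $n$ can be measured in binary by scanning the input, and $p(n)$ can then be computed in logarithmic space. Whenever a simulated counter would overflow this bound, $M'$ halts and rejects. By hypothesis, no computation of $M$ that we need to simulate ever exceeds $p(n)$, so every accepting computation of $M$ is reproduced faithfully, and $L(M') = L(M)$.

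The step that needs the most care is the reading of the hypothesis, specifically whether $p(n)$ bounds counters in all computations or only in accepting ones. If it bounds only accepting ones, the truncation above still works, because truncated branches are non-accepting anyway. If the hypothesis instead bounds counters in all computations, no truncation is needed at all. Note also that Cook's theorem places no time bound on the auxiliary pushdown automaton, so non-halting branches of the two-way machine cause no difficulty. Applying \cite{Cook:1971:Pushdowns} then gives $L(M) \in \P$.
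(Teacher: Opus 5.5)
Your proof is correct and takes the same route as the paper: simulate $M$ by a nondeterministic auxiliary pushdown automaton that uses its own two-way input head and pushdown directly and keeps the polynomially bounded counters in binary on an $O(\log n)$ worktape, then apply Cook's characterization to conclude the language is in polynomial time. Your extra step of marking off the worktape space in advance and rejecting on overflow is a sensible detail the paper leaves implicit; it makes the auxiliary machine space-bounded on every branch, whichever reading of the counter-growth hypothesis is intended.
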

\begin{proof}
This is implied from the proof of Proposition \ref{rough}.
If there is a polynomial that bounds the size of the
counters, then it can be simulated by a two-way nondeterministic Turing machine with a $\log n$
space-bounded worktape to simulate each counter, and a pushdown, and the language it accepts is therefore
in $\P$.
\end{proof}

For $2\NPCM$, we will consider finite-turn machines as we did with one-way machines, where it makes at most $t$ turns on the pushdown, for some $t \ge 0$.
Next, we study bounds on the number of changes of direction on the two-way input tape.
We say a function $R(n)$ is {\em logspace-constructible} if there is a $\DTM$ that on an input of length $n$, stores
a string of  length at least $\log_2 R(n) +1$ on the worktape using at most $\log R(n)$ space (so it uses space to within a constant factor of the final string written). This string is long enough to be used as a binary counter up to at least $R(n)$. For a function  $R(n)$,
we say a $2\NPCM$ is
{\em $R(n)$-reversal} if every input of length $n$ accepted can be accepted where the input head makes at most $R(n)$ changes of direction on the two-way input tape.  We will assume that $R(n) \ge 1$ (note that when $R(n)= 0$, the machine is one way).
We call a $2\NPCM$ {\em sweeping} if the input head reverses occur only at the end markers.
We call a $2\NPCM$ {\em finite-reversal} if $R(n)$ is some constant, and we use
poly-reversal to mean $R(n)$ is $n^k$ for some constant $k$.

\begin{lemma} \label{tosweeping} Every $2\NPCM$ $M$ can be converted to an equivalent sweeping 
$2\NPCM$.
\end{lemma}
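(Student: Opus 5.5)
The plan is the standard ``sweeping'' simulation. Given a $2\NPCM$ $M$, we build $M'$ that moves its input head only between the end markers. $M'$ keeps a simulated head position of $M$ implicitly, by the relative position of its physical head within the current sweep. The key point is that $M$'s pushdown and counters are not tied to the head position. Each step of $M$ can therefore be simulated exactly on $M'$'s own stores, as long as $M'$ can recover the input symbol under $M$'s head and the move direction. The only issue is moving the simulated head left or right using only full sweeps.

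Concretely, I would give $M'$ one extra counter $P$ that always holds the current head position of $M$, with position $0$ being $\rhd$. Since $P$ gets both incremented and decremented, it is not reversal-bounded, so instead I would locate the simulated position freshly in every sweep using reversal-bounded counters.

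\emph{Simulating a right move.} Suppose $M$'s head is at position $i$ and $M$ moves right. $M'$ is somewhere on a sweep. The natural trick is to use nondeterminism: $M'$ guesses each step as it physically passes the correct cell. Thus $M'$ makes a left-to-right sweep and, while scanning cells, nondeterministically decides at which cells $M$'s head ``is'' when a step is simulated.
\begin{itemize}
\item A maximal run of $M$'s steps whose head motion is non-decreasing is simulated during one left-to-right sweep of $M'$. $M'$ performs stationary steps in place ($\lambda$-moves on the input) and advances when $M$ advances.
\item A maximal run with non-increasing head motion is simulated during one right-to-left sweep.
\end{itemize}

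\emph{The obstacle: gluing consecutive runs.} When $M$ reverses direction at cell $j$, the next sweep of $M'$ must resume at the same cell $j$. To do this, $M'$ finishes the current sweep to the end marker and, on the return sweep, resumes simulation at cell $j$. It must guess cell $j$ and check that the guess is correct. I would check this with a pair of monotonic counters (equivalently, a 1-reversal counter pair) per reversal.

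This seems to need unboundedly many counters if $M$ makes unboundedly many input reversals, and that is the main difficulty. The fix I would try first is a single counter pair, reset after each reversal. Let $d$ be the distance from cell $j$ to the end marker. The pair records $d$ on the outgoing part of the sweep and checks it on the return, which costs one reversal per turn of $M$.

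This breaks reversal-boundedness, so instead I would use the pushdown. The pushdown cannot be used directly, since it carries $M$'s stack. So I would push a marker block $\#^{d}$ on top of $M$'s stack when going to the end marker and pop it on the return. The top of $M$'s stack is untouched, so $M$'s stack is undisturbed. The block is pushed between simulated steps and popped immediately after, so the stack content seen by $M$'s transitions is unchanged. The reversal-bounded counters of $M$ are simulated verbatim and thus keep their reversal bounds.

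\emph{Correctness.} Every accepting computation of $M$ decomposes into monotone head runs, each simulated by one sweep of $M'$. Conversely, every accepting computation of $M'$ projects to an accepting computation of $M$, because the $\#$-block checks force exact position matching.

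Two loose ends remain. First, acceptance by falling off $\lhd$ in an accepting state: $M'$ sweeps right at the end. Second, the case $j$ at an end marker, which is trivial.

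I expect the gluing step to be the main obstacle. The point to verify is that the temporary $\#$ blocks do not interfere with $M$'s pushdown semantics. Making the number of turns of $M'$ bounded is not required here: the lemma only asks for equivalence of $2\NPCM$, not finite-turn. If finite-turn preservation were needed, this construction would add turns, and a different argument would be necessary.
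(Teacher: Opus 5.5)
Your final construction is correct and is essentially the paper's own proof. At a turn of $M$ at cell $j$, you continue to the end marker while pushing a block $\#^{d}$ on top of $M$'s stack, then pop it on the return so the head lands exactly back at cell $j$. The earlier detours (the counter $P$, guessing $j$, the counter pairs) are unnecessary, because the popping returns the head to $j$ deterministically.

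One correction to your closing remark. Each input reversal of $M$ adds only a bounded number of pushdown turns, so for finite-reversal $M$ this construction does preserve finite-turn. The paper notes this and relies on it later.
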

\begin{proof} 
The pushdown stack can be used to enforce the turns take place at the end markers. If the turn occurs at the end marker, there is no problem. If a turn occurs before the end marker it records on the top of the stack (using a different symbol) to count (by pushing) the number of moves until it reaches the end marker to which it was moving.  When it reaches the end  marker, it can reposition the head back to the location where it wanted to make a turn by popping the stack. Note, that if the number of reversals on the input is unbounded (finite), the number of added turns on the stack will be unbounded (finite). 
\end{proof}

In the proof of the next result, we introduce a technique that allows us
to use complexity results for one-way machines to prove complexity
results for two-way machines.
\begin{proposition} \label{iturn} Let $R(n)$ be a logspace-constructible function.
All $R(n)$-reversal $2\NPCM$ languages are
in $\DSPACE(\log^2 (nR(n)))$.
\end{proposition}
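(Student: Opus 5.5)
First I would make $M$ sweeping using Lemma \ref{tosweeping}. The conversion multiplies the number of input reversals by at most a constant: each reversal becomes at most two sweeps to an end marker plus a repositioning. It also keeps the counters reversal-bounded, since only the pushdown is used for the detour. So I may assume $M$ is a sweeping $2\NPCM$ that accepts every accepted input of length $n$ within at most $R'(n)=cR(n)$ reversals, all of them at the end markers.

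The key idea is to \emph{unroll} the two-way input into a one-way input. For an input $w$ with $|w|=n$, define
\[
u_r(w)=\rhd w\lhd\,(\rhd w\lhd)^R\,\rhd w\lhd\cdots,
\]
consisting of $r+1$ alternating copies of $\rhd w \lhd$ and its reversal. I would construct a one-way $\NPCM$ $M_1$ over $\Sigma\cup\{\rhd,\lhd\}$ that, on an arbitrary string $v$, simulates $M$ sweep by sweep. A left-to-right sweep reads a block $\rhd x\lhd$ forward. A right-to-left sweep reads a block of the form $\lhd\, x^R \rhd$ while interpreting the head moves in the opposite direction. A stationary move of $M$ is simulated by a $\lambda$-move. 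If the sweep is already at an end marker when $M$ turns, the next block begins, and any simulated moves are done without advancing. $M_1$ accepts when the simulation falls off the right end marker in an accepting state at the end of the last block.

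The correctness claim is this: if $v=u_r(w)$ for some $r\le R'(n)$, then $M_1$ accepts $v$ iff $M$ accepts $w$ within $r$ reversals. Padding an accepting run with trivial extra sweeps lets me handle $r$ being exactly the number of sweeps used. $M_1$ is a genuine one-way $\NPCM$: the pushdown and counters are used exactly as in $M$, so the counters stay reversal-bounded. By Proposition \ref{spaceNPCM}, $L(M_1)\in\DSPACE(\log^2 N)$, where $N$ is the length of the one-way input.

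The $\DTM$ for $L(M)$ then works as follows on input $w$. It first uses logspace-constructibility of $R(n)$ to lay down a binary counter of $O(\log R(n))$ bits. It then loops over $r=0,\ldots,R'(n)$. For each $r$ it runs the $\DSPACE(\log^2 N)$ machine for $L(M_1)$ on the virtual input $u_r(w)$, where $N=(r+1)(n+2)=O(nR(n))$. Each time that machine asks for the symbol at position $j$ of $u_r(w)$, the $\DTM$ computes the block index $\lfloor j/(n+2)\rfloor$ and the offset within the block with $O(\log(nR(n)))$-bit arithmetic. The parity of the block index decides whether the offset is read from the front or the back of $\rhd w\lhd$. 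The total space is therefore $O(\log^2(nR(n)))$.

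I expect the main obstacle to be this virtual-input simulation. The proof of Proposition \ref{spaceNPCM} handles positions only as $O(\log N)$-bit indices into the input, in the stack frames of TEST. That makes the composition legitimate, but it must be argued that the procedure never needs more than random access to input symbols. A secondary point to check is the bound of $R'(n)=O(R(n))$ after applying Lemma \ref{tosweeping}, and that the $\lambda$-moves at the turns do not break the one-way simulation.
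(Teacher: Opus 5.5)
Your proposal is correct and takes essentially the same route as the paper. You make $M$ sweeping via Lemma~\ref{tosweeping} and build a one-way $\NPCM$ on an unrolled input of alternating copies of the input and its reversal; the paper separates the copies with a new symbol $\#$ instead of end markers. You then apply Proposition~\ref{spaceNPCM} and let a $\DTM$ iterate over the number of sweeps up to the logspace-constructed bound, simulating the $\log^2 N$-space machine on the virtual input with $O(\log(nR(n)))$ bits of segment/offset bookkeeping. The obstacle you flag is routine: any $\DTM$ with a two-way read-only input can be run on such a virtual input by storing its head position in $O(\log N)$ bits, with no need to look inside TEST. The constant $c$ in $R'(n)=cR(n)$ is harmless and not even needed, since the detour construction replaces each mid-tape turn by a single turn at an end marker, so the reversal count does not grow.
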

\begin{proof} 
Let $M$ be an $R(n)$-reversal $2\NPCM$.  From Lemma \ref{tosweeping}, we may assume
that $M$ is sweeping. Let $\#$ be a new symbol. 
Given a string $x_0 \in \Sigma^*$ and $k \ge 0$, let $x_0^{(k)} = x_0 \# \cdots \# x_k \#$ where $x_i = x_0$ for each even $i$, and 
$x_i = x_0^R$ for each odd $i$.
Construct a (one-way) $\NPCM$ $M'$ such that $M'$ accepts
strings of the form $w = x_0\#x_1\# \cdots \#x_k\#$, $k \ge 0$ even, such that:
$w = x_0^{(k)} \in L(M')$  for some $k, 0  \le k \le R(n)$
if and only if $M$ accepts $x_0$.
Certainly $M'$, on input $x_0\#x_1\# \cdots \#x_k\#$, $k \ge 0$ even (which has one-way input)
can simulate $M$ as if the input on the first sweep is $x_0$, on the second sweep is $x_1$, etc.\ until $x_k$. 
If $ x_i = x_0$ for each even $i$ and
$x_i = x_0^R$ for each odd $i$, then $M$ accepts $x_0$ as $M$ is sweeping, and conversely. 
Note that on the segment $x_0^R$ , $M'$ simulates the left moves of $M$ by moving
its (one-way) head to the right.    However, we note that if in $M$,  each counter
makes at most $r$ reversals, then the number of reversals each  counter in $M'$
also makes at most $r$ reversals, since the simulation is faithful.  Thus, the
number of counter reversals each counter of $M'$ makes does not increase.

Since $M'$ is an $\NPCM$, $L(M')$ is in $\DSPACE(\log^2 N)$ by Proposition \ref{spaceNPCM}, where
$N = |w|$. Note that when $x_i = x_0$ for each even $i$ and $x_i = x_0^R$
for each odd $i$, $N = (n + 1)(k + 1)$, where $n = |x_0|$.
Let $Z$ be one such deterministic Turing machine accepting $L(M')$. Note that $Z$ has a two-way input.
We then construct another $\DTM$ $Z'$ which when given a two-way input $x_0$, 
first puts a string $y$ of length at least $\log_2 R(n) +1$ on the worktape, which can be done in $\log R(n)$ space as it is logspace-constructible. It then
systematically simulates $Z$ to check if there is a $k$ starting at $0$ and increasing by $1$ repeatedly until $k$ is $2^{|y|} \ge R(n)$ (by using $y$ as a binary counter). Then  $Z$ accepts an input $w = x_0^{(k)}$ for some $k$, if and only if $Z'$ accepts $x_0$. Note that $Z'$ has only a two-way input $x_0$ but it has to simulate $Z$ on the two-way 
input $w$. If $Z'$ is to accept $w$, it will do so where $k$ is at most $R(n)$ and hence, if $Z'$ accepts, this can be done using $\log_2 R(n)+1$ space to keep track of the segment and symbol within the segment $Z$ is scanning.  Since $L(M) = L(Z')$,  it follows that each $R(n)$-reversal $2\NPCM$ language
is in $\DSPACE(\log^2 (nR(n)))$.
\end{proof}

\begin{remark}
In the proposition above, we used the standard definition of space complexity of Turing machines, as in 
``for every input of size $n$, the machine uses at most $f(n)$ space''  \cite{HU}. There is also space measured in the so-called
weak sense: ``for every accepted word, there is an accepting computation that uses at most $f(n)$ space'' \cite{Giovanni}.
Had we used the weak sense, we would not need to assume $R(n)$ is logspace-constructible and would just need to assume that the function exists in the proposition above; this is because in the construction of $Z'$, $k$ could be increased by 1 repeatedly until it accepts (which must be where $k$ is at most $R(n)$) and then it halts; if it does not accept, it never halts.
\end{remark}
In the next corollary, it is evident that $c^n$, for a constant $c$ is logspace-constructible as given $c$, we can put $\log_2 n^c = c \log_2 n$ on the worktape in $\log$ space. Also, $c^n$ is logspace-constructible as $\log_2 c^n = n \log_2 c$. Lastly,
$c^{\sqrt{n}}$ is logspace-constructible as  $\log_2 c^{\sqrt{n}} = \sqrt{n} \log_2 c$. To obtain a string of this length,
we can store $\log_2 n$ in one worktape and using additional worktapes, for each $i$ starting at 1 in increments of 1, calculate $i^2$ until it is longer than $\log_2 n$.

\begin{corollary} \label{cor1} 
The following are true:
\begin{enumerate}
\item
All poly-reversal $2\NPCM$ languages are
in $\DSPACE(\log^2 n)$.
\item
For any constant $c$, all $c^{\sqrt{n}}$-reversal $2\NPCM$ languages  are in $\DSPACE(n)$.
\item For any constant $c$, all $c^n$-reversal $2\NPCM$ languages are in $\DSPACE(n^2)$.
\end{enumerate}
\end{corollary}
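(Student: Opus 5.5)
The plan is to derive all three items directly from Proposition~\ref{iturn}, which says that every $R(n)$-reversal $2\NPCM$ language is in $\DSPACE(\log^2(nR(n)))$ whenever $R(n)$ is logspace-constructible. For each item I will first check that the reversal bound is logspace-constructible, using the remarks just before the corollary, and then simplify $\log^2(nR(n)) = (\log n + \log R(n))^2$.

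For item~1, a poly-reversal machine has $R(n) = n^k$ for a constant $k$. This is logspace-constructible because a string of length $k\log_2 n + 1$ can be written in $O(\log n)$ space: compute $\log_2 n$ in binary and repeat it $k$ times. Then $\log(nR(n)) = (k+1)\log n$, so the bound is $O(\log^2 n)$. Since $\DSPACE$ classes are closed under constant factors (tape compression), the language is in $\DSPACE(\log^2 n)$.

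For item~2, take $R(n) = c^{\sqrt n}$. Its logspace-constructibility is given by the preceding remark: write a string of length about $\sqrt{n}\log_2 c$ by finding the least $i$ with $i^2 > n$, with the multiplication done in binary using $O(\log n)$ working space. Then $\log(nR(n)) = \log n + \sqrt{n}\log c = O(\sqrt n)$, so the square is $O(n)$ and the language is in $\DSPACE(n)$.

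For item~3, take $R(n) = c^n$. Here $\log_2 R(n) = n\log_2 c$, and a string of that length can be written in $O(n)$ space, which is within a constant factor of $\log R(n)$, so $R(n)$ is logspace-constructible. Then $\log(nR(n)) = \log n + n\log c = O(n)$, and the bound $O(n^2)$ puts the language in $\DSPACE(n^2)$.

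The main obstacle is the constructibility check in item~2, where the constructed string must have length $\Theta(\sqrt n)$ while the construction uses only $O(\sqrt n)$ space. This is handled by working in binary on $n$ and then writing out $\lceil\sqrt{n}\log_2 c\rceil+1$ cells. If $c \le 1$, we can replace $c$ by $2$, since enlarging the reversal bound only enlarges the class of machines covered.
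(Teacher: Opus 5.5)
Your proposal is correct and follows the paper's own route: you check that $n^k$, $c^{\sqrt{n}}$ and $c^n$ are logspace-constructible (exactly the remark preceding the corollary), then simplify the bound $\log^2(nR(n))$ from Proposition~\ref{iturn} in each case. Your treatment of degenerate constants is a small point the paper leaves implicit: you replace $c\le 1$ by a larger constant, which only enlarges the reversal bound.
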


Corollary \ref{cor1} item 1 generalizes the well-known result
that $\NPDA$ languages (context-free languages) are in
$\DSPACE(\log^2 n)$, and items 2 and 3 compare well
with the known result that $2\NPDA$ languages are
in $\DSPACE(n^2)$ \cite{AhoHopcroftUllman}.
We do not know if $2\NPCM$ languages (with no head reversal restriction)
are in $\DSPACE(n^2)$. If for every $2\NPCM$ there is a constant
$c$ such that every input of length $n$ can be accepted within at most
$c^n$ head reversals, then the languages are in $\DSPACE(n^2)$ as shown
in item (3) above.

But, as was mentioned in Section \ref{sec:intro}, it was recently shown that the membership problem
for $2\NPCM$ is $\NP$-complete and hence, in $\PSPACE$ (which is equal to deterministic polynomial space).

%It also implies that in Proposition \ref{rough} (2) the polynomial
%counter restriction is necessary to be in $\P$ (unless $\P = \NP$).

As we have seen in the proof of Lemma \ref{tosweeping}, when we convert a $2\NPCM$ to
an equivalent sweeping $2\NPCM$, the resulting $2\NPCM$ will have an unbounded
(finite) number of additional stack turns if the number of input head reversals is unbounded (finite). 
The proof of the next result is the same as that of Proposition \ref{iturn}, using
the fact that finite-turn $\NPCM$ is in $\NLOG = \NSPACE(\log n)$ (rather than $\DSPACE(\log^2 n)$ without the finite-turn restriction in the construction of $Z$ from $M'$) by Proposition \ref{BB4}.

\begin{proposition}  \label{nlog}
The following are true:
\begin{enumerate}
\item All languages accepted by finite-reversal
 finite-turn $2\NPCM$ are in $\NLOG$.
\item For any logspace-constructible function $R(n)$, all languages accepted by $R(n)$-reversal
 finite-turn sweeping $2\NPCM$ are in
$\NSPACE(\log (nR(n)))$.
\end{enumerate}
\end{proposition}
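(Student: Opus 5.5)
The plan is to follow the proof of Proposition~\ref{iturn}, replacing the $\DSPACE(\log^2 n)$ bound for one-way $\NPCM$ by the $\NLOG$ bound of Proposition~\ref{BB4}. The simulated machine has to be finite-turn for that bound to apply, so I begin by checking this in each case.

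For item 2, $M$ is already sweeping and finite-turn. Let $M'$ be the one-way machine built in the proof of Proposition~\ref{iturn}. On input $x_0\#x_1\#\cdots\#x_k\#$ it simulates sweep $i$ of $M$ on segment $x_i$. It is faithful, so it preserves counter reversals. It also preserves pushdown turns exactly, since stack operations are copied verbatim. Hence $M'$ is a finite-turn $\NPCM$, and $L(M')\in\NLOG$ by Proposition~\ref{BB4}. Let $Z$ be a $\log N$ space-bounded $\NTM$ for $L(M')$.

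Now build $Z'$ on input $x_0$ with $|x_0|=n$. First, by logspace-constructibility, $Z'$ writes a binary counter of length about $\log_2 R(n)+1$. It then nondeterministically guesses $k\le R(n)$, and can take $k$ even after padding with an idle sweep. Finally it simulates $Z$ on the virtual input $x_0^{(k)}$. Each position of the virtual input is recorded as a pair (segment index, offset), and the scanned symbol is $x_0[j]$, $x_0^R[j]$ or $\#$ according to the parity of the segment. No deterministic loop over $k$ is needed, because nondeterminism lets us guess $k$ directly. The total space is $O(\log R(n)+\log n+\log N)$, where $N=(n+1)(k+1)$, and this is $O(\log(nR(n)))$.

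For item 1, $M$ is finite-reversal and finite-turn but need not be sweeping. Lemma~\ref{tosweeping} converts it to a sweeping machine. Since the number of input reversals is bounded by a constant, the lemma adds only finitely many turns, as the paper notes after its proof. The new machine is therefore a finite-turn sweeping $2\NPCM$, and it is still $c$-reversal, up to the bounded extra sweeps introduced by repositioning. Applying item 2 with the constant function $R(n)=c$, which is trivially logspace-constructible, gives $\NSPACE(\log n)=\NLOG$. Alternatively, $Z'$ can just store $k\le c$ in its finite control.

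The main obstacle is making sure finite-turnness truly survives the reduction to the one-way machine. For item 2 this holds because no stack operations are added. For item 1 I need the bound on extra turns from Lemma~\ref{tosweeping}: each premature reversal adds a push–pop excursion, giving at most two extra turns per reversal, so at most $2c$ extra turns in total. A second point is that the bound $R(n)$ must also count the reversals introduced by the sweeping conversion. This is immediate in item 1, and in item 2 the machine is assumed sweeping from the start.
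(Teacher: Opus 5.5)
Your proposal is correct and follows the paper's proof. You reduce to the one-way machine $M'$ on $x_0^{(k)}$ from Proposition~\ref{iturn}, invoke Proposition~\ref{BB4} in place of Proposition~\ref{spaceNPCM}, and for item~1 use that Lemma~\ref{tosweeping} adds only boundedly many turns when reversals are bounded; guessing $k$ nondeterministically instead of looping is an inessential variation.

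One small point that the paper also glosses over concerns your claim that $M'$ is finite-turn because the simulation is faithful. On inputs whose segments are not consistent copies of $x_0$ and $x_0^R$, $M'$ simulates no actual computation of $M$, so faithfulness alone does not bound its turns. The fix is to have $M'$ count turns in its finite control and block after $t$ of them. Likewise, in item~1, have the sweeping machine block after $c$ reversals, since only some accepting computation is guaranteed to use at most $c$. Neither restriction loses anything on the inputs $Z'$ actually feeds it.
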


The result above is a generalization of the known result that
finite-turn $\NPDA$
(i.e., one-way $\NPDA$ whose stack makes at most
$k$ turns from some $k$) languages are in $\NLOG$ \cite{finiteturn}.
In particular, item 2 shows that
all languages accepted by poly-reversal
finite-turn sweeping $2\NPCM$ are in $\NLOG$ (without the sweeping restriction, this simulation causes the pushdown to no longer be finite-turn).

It was also shown in \cite{finiteturn}
that finite-turn $\DPDA$ (i.e., deterministic) languages
are in $\DLOG$ (= $\DSPACE(log~ n))$. Using this result and similar
techniques as above, we have (where $2\DPDA$ are two-way deterministic pushdown automata):

\begin{proposition}  \label{dlog}
The following are true:
\begin{enumerate}
\item All languages accepted by finite-reversal
 finite-turn $2\DPDA$ are in $\DLOG$.
\item For any logspace-constructible function $R(n)$, all languages accepted by $R(n)$-reversal
 finite-turn sweeping $2\DPDA$ are
 in $\DSPACE(\log (nR(n)))$.
\end{enumerate}
\end{proposition}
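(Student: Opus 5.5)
We follow the template of Proposition \ref{iturn} and Proposition \ref{nlog}, replacing the nondeterministic one-way ingredient by the known fact from \cite{finiteturn} that every finite-turn $\DPDA$ language is in $\DLOG$.

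\textbf{Reduction to the sweeping case.} Let $M$ be a 2$\DPDA$ that is finite-turn and either finite-reversal (item 1) or $R(n)$-reversal and sweeping (item 2). For item 1 we first make $M$ sweeping using the construction of Lemma \ref{tosweeping}. We must check two things. First, the construction stays deterministic: a premature turn is simulated by pushing a marker symbol while continuing toward the end marker, then popping back. This introduces no guessing, because the position at which $M$ turns is determined by its own deterministic move. Second, the number of extra pushdown turns is bounded by a constant times the number of input reversals, which is constant. So the result is a finite-reversal, finite-turn, sweeping 2$\DPDA$, and item 1 reduces to item 2 with $R(n)$ constant.

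\textbf{The one-way machine.} For a sweeping $R(n)$-reversal finite-turn 2$\DPDA$ $M$, build the one-way $\DPDA$ $M'$ exactly as in Proposition \ref{iturn}. On input $x_0\#x_1\#\cdots\#x_k\#$, $M'$ simulates sweep $i$ of $M$ on segment $x_i$, interpreting left moves as right moves on the odd segments. The end markers are simulated at the $\#$ boundaries and at the beginning and end.

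Since $M$ is deterministic, so is $M'$. The simulation is faithful, so $M'$ makes exactly the pushdown turns of $M$ and is therefore finite-turn.

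There is one subtlety, which we expect to be the main obstacle. The claimed equivalence is that $x_0^{(k)} \in L(M')$ for some $k \le R(n)$ iff $M$ accepts $x_0$. This needs $M'$ to accept exactly when the simulated computation falls off the right end with $k$ equal to the actual number of sweeps. We handle this by having $M'$ reject if $M$ halts or accepts before the last segment, or if the input runs out first. Determinism is preserved, because each of these checks is a deterministic condition on the simulation. Since $M$ is deterministic, at most one $k$ works. By \cite{finiteturn}, $L(M')$ is accepted by a $\log N$-space $\DTM$ $Z$, where $N = (n+1)(k+1)$.

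\textbf{The space bound.} Construct $Z'$ as in Proposition \ref{iturn}. Using logspace-constructibility, $Z'$ writes a binary counter of length about $\log R(n)$, loops over $k = 0,1,\ldots,R(n)$, and for each $k$ simulates $Z$ on the virtual input $x_0^{(k)}$. It tracks the virtual head position of $Z$ as a pair (segment index, offset). To answer what symbol $Z$ is reading, it uses the parity of the segment index to read $x_0$ forward or reversed, and returns $\#$ at the segment boundaries. All of this is deterministic and uses $O(\log n + \log R(n)) = O(\log(nR(n)))$ space, since $Z$ itself uses $O(\log N)$ space.

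This gives item 2. With constant $R(n)$ it gives $\DLOG$, which is item 1.
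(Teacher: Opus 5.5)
Your proposal is correct and follows the paper's route: the paper proves this by rerunning the two-way-to-one-way reduction of Proposition~\ref{iturn}, using Lemma~\ref{tosweeping} for item 1, and substituting the result of \cite{finiteturn} that finite-turn $\DPDA$ languages are in $\DLOG$. One step needs more than ``faithful simulation'' (the paper glosses over it as well): $M'$ may also accept strings $x_0\#x_1\#\cdots\#x_k\#$ whose segments are not consistent copies of $x_0$ and $x_0^R$, and $M$'s turn bound says nothing about those, so have $M'$ count its pushdown turns in the finite control and reject once $M$'s bound is exceeded, which keeps $M'$ deterministic and finite-turn while losing no genuine input.
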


We do not know if Proposition \ref{dlog} is true when $2\DPDA$ is
replaced by $2\DPCM$, as we do not know at this time whether the languages
accepted by finite-turn $\DPCM$ are in $\DLOG$.

\begin{comment}
{\bf QUESTION for Oscar: I'm not sure what the result below is trying to say because the NPCM Z'' accepts a different language than the original machine -- so we cannot say the time complexity of the language is the same.. are we saying the complexity of the fixed membership problem instead?  But the Z'' machine is not solving the fixed membership problem for M.}
\end{comment}

Similarly, we note the following regarding time complexity. We say a function $T(n)$ is {\em time constructible} if a $\DTM$
can compute it in $T(n)$ time.
\begin{proposition} \label{time}
Let $T(n)$ be a time-constructible function such that all $\NPCM$ (respectively $\NPDA$) languages are in $\DTIME(T(n))$.
Then every finite-reversal $2\NPCM$ (respectively finite-reversal $2\NPDA$) language is in $\DTIME(T(n))$.
Further, all poly-reversal $2\NPCM$ (respectively poly-reversal $2\NPDA$) languages are in $\P$.
\end{proposition}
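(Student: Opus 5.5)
We reuse the reduction from the proof of Proposition \ref{iturn}, now counting time instead of space. Let $M$ be a finite-reversal (respectively poly-reversal) $2\NPCM$, and let $R(n)$ bound the number of input head reversals: a constant $r$ in the first case, $n^c$ in the second. By Lemma \ref{tosweeping} we may assume $M$ is sweeping. The conversion adds stack turns but no counter reversals, and it does not change the number of input head reversals. Hence the new machine is still an $R(n)$-reversal $2\NPCM$, respectively $2\NPDA$, since without counters none are added. As in Proposition \ref{iturn}, we build a one-way $\NPCM$ (respectively $\NPDA$) $M'$ such that $x_0 \in L(M)$ if and only if $x_0^{(k)} \in L(M')$ for some even $k$ with $0 \le k \le R(n)$. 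By hypothesis $L(M')$ is accepted by a $\DTM$ $Z$ running in time $T(N)$ on inputs of length $N$.

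Next we define a $\DTM$ $Z'$ which, on input $x_0$ with $|x_0| = n$, does the following for each even $k \le R(n)$. It writes $w_k = x_0^{(k)}$ explicitly on a worktape, which takes $O(nk)$ time using reversal copies of $x_0$. It then runs $Z$ on $w_k$ for $T(|w_k|)$ steps, with the time bound enforced by time-constructibility, and accepts if $Z$ accepts.

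In the finite-reversal case there are at most $r+1$ values of $k$, and each $|w_k| \le (r+1)(n+1)$. The total time is therefore $O(r \cdot T((r+1)(n+1)) + n)$.

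The main obstacle is turning $T(cn)$ into $O(T(n))$. For a general time-constructible $T$ this need not hold: for example $T(n)=2^{n^2}$ gives $T(cn)=T(n)^{c^2}$. I would handle this by exploiting the freedom in the hypothesis rather than $T$ itself. The family of $\NPCM$ languages is closed under inverse homomorphism. So I would replace $M'$ by a machine $M''$ over the alphabet of $(r+1)$-tuples that reads $x_0$ with the tuple $(x_0[i], x_0^R[i], \ldots)$ in each cell. A letter-wise encoding of $w$ in which every symbol of $x_0$ carries all $r+1$ sweep copies in a single tape cell requires $M''$ to simulate all sweeps simultaneously. That is impossible for a one-way machine with a pushdown, so instead I would compress by a blocking argument: the input length for $M''$ is $n+O(1)$ with symbols drawn from $\Sigma^{r+1}$. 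Then $M''$ reads $w_k$ block by block as a homomorphic preimage. Concretely, I would argue that $\{x_0 : x_0^{(k)} \in L(M')\}$ is itself accepted by a one-way $\NPCM$ over a blown-up alphabet of length $\Theta(n)$, and apply the hypothesis directly to it. If this fails, I would fall back on stating the result as $\DTIME(T(O(n)))$, which is all that is needed for the claims about $\P$.

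The poly-reversal case is simpler. By Proposition \ref{rough}, and classically for $\NPDA$, we can take $T$ polynomial. Then $Z'$ performs at most $n^c+1$ simulations, each on an input of length $O(n^{c+1})$ and each taking polynomial time. Hence $L(M) \in \P$.
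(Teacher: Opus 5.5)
Your reduction is the paper's own: sweeping normal form, the one-way machine $M'$ on $x_0^{(k)}$, writing $x_0^{(k)}$ on a worktape and running the $T$-time machine $Z$. The poly-reversal count is also the same (the paper uses the single value $k=R(n)$ instead of looping over $k$). For the first claim, the paper runs $Z$ on an input of length $(k+1)(n+1)$ and simply asserts that $Z'$ ``has the same time complexity as $Z$''. That tacitly treats $T((k+1)(n+1))$ as $O(T(n))$. So you have found the real weak point, and your fallback $\DTIME(T(O(n)))$ is what that argument actually proves; it is also enough for the $\P$ claim. The gap in your proposal is that the repair you sketch for a general time-constructible $T$ does not work as written.

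First, read literally, the claim that ``$\{x_0 : x_0^{(k)}\in L(M')\}$ is accepted by a one-way $\NPCM$'' says that $L(M)$ itself, a two-way language, is a one-way $\NPCM$ language. You give no argument for this, and the point of the $x_0^{(k)}$ reduction is to avoid needing it.

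Second, the sound version of your blocking idea goes as follows. Let each new letter encode a block of \emph{consecutive} symbols of $x_0^{(k)}$ over $\Sigma\cup\{\#\}$, not the column encoding you rightly discarded. Then take $h^{-1}(L(M'))$, where $h$ expands each block letter into its block. This is an $\NPCM$ (respectively context-free) language by closure under inverse homomorphism. But blocks of size $k+1$ give an encoding of length $n+1$, and ``$n+O(1)$'' is not good enough: for your own example $T(n)=2^{n^2}$, one has $T(n+1)/T(n)=2^{2n+1}$.

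You need the encoding to have length exactly $n$. For instance, use $n-1$ letters for blocks of size $k+1$ and one final letter, from a separate alphabet, for the last $2(k+1)$ symbols; membership of this word in $h^{-1}(L(M'))$ is equivalent to $x_0^{(k)}\in L(M')$. The encoding is computable in $O(n)$ time. So $Z'$ runs in $O(T(n)+n)$ time, multiplied by the constant number of values of $k$. Alternatively, use a single run with the largest even $k\le r$, letting $M'$ skip the remaining segments after a simulated acceptance. The linear speed-up theorem (for $T(n)=\omega(n)$) then puts this back into $\DTIME(T(n))$. With that step supplied, your argument would actually cover the general-$T$ claim more carefully than the paper's proof does.
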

\begin{proof}
Let $M$ be $k$-reversal.
As in the proof of Proposition \ref{iturn}, let $Z$ be a $T(n)$-time-bounded deterministic Turing machine accepting $L(M')$.
% (created at the end of that proof).
We construct another $\DTM$ $Z'$ which starts by computing $T(n)$. Then, 
given input $x_0$, it writes on a read/write tape the string
$w = x_0^{(k)}$. Then $Z'$ simulates $Z$ on $w$ while stopping after $T(n)$ steps.  It is evident that 
$Z'$ has the same time complexity as $Z$.

For the second statement, if $M$ is $n^c$-reversal for some constant $c$, then following the same proof of Proposition \ref{iturn},
the one-way machine $M'$ accepts a language in $\P$ by Proposition \ref{rough}. So and so a Turing machine $Z$
can be built running in $O(n^k)$ time for some constant $k$ accepting $L(M')$. From $Z$, another Turing machine $Z'$ can be built that on input
$x_0$, puts $x_0^{(n^c)}$ on a worktape in $O(n \cdot n^c)$ time, and then simulates $Z$ that runs in time complexity $O(n^{(c+1)k})$.
\end{proof}

While typically membership for $\NPDA$ is decided on a random access machine (and is faster than $O(n^3)$), it runs in $n^3$ time on a $\DTM$ \cite{HU1969}.
\begin{corollary}
Every language accepted by a  finite-reversal $2\NPDA$ can be accepted by a $\DTM$ in $n^3$ time.
\end{corollary}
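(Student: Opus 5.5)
The corollary should follow by specializing Proposition \ref{time} to the case without counters, with $T(n)=n^3$. The hypothesis of that proposition asks that every $\NPDA$ language lie in $\DTIME(T(n))$, where $T$ is time constructible. The first step is therefore to cite the classical result quoted just before the statement: every context-free language, and hence every $\NPDA$ language by the equivalence of $\CFG$s and $\NPDA$s, is accepted by a (multi-tape) $\DTM$ in $O(n^3)$ time \cite{HU1969}. A CYK-style algorithm run on a Turing machine suffices. The function $n^3$ is time constructible: a $\DTM$ can write $n$ in unary or binary and count to $n^3$ within $O(n^3)$ steps. Constant factors are absorbed as usual, or by linear speedup.

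Next, fix a finite-reversal $2\NPDA$ $M$ that is $k$-reversal. I follow the proof of Proposition \ref{iturn} with the counters dropped. Lemma \ref{tosweeping} makes $M$ sweeping. Because the reversal count is finite, the conversion adds only finitely many extra reversals and stack turns, and no counters are introduced, so the result is still a $2\NPDA$ with some constant reversal bound $k'$. From it I build the one-way $\NPDA$ $M'$ reading $x_0^{(j)}$ for $j\le k'$, exactly as in that proof. Then $L(M')$ is context-free, so some $\DTM$ $Z$ accepts it in $O(N^3)$ time, where $N=|w|$.

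For the final step, on input $x_0$ of length $n$, the machine $Z'$ tries each $j$ from $0$ to $k'$ in turn. For each $j$ it writes $w=x_0^{(j)}$ on a worktape in $O(n)$ time and simulates $Z$ on $w$. Since $N=(n+1)(j+1)\le (n+1)(k'+1)$, each simulation takes $O(((k'+1)(n+1))^3)=O(n^3)$ time, and there are only constantly many values of $j$. So $Z'$ runs in $O(n^3)$ time and accepts $L(M)$, as required. The one point needing care is that $Z$ reads its input from a worktape rather than from an input tape. A multi-tape $\DTM$ can simply copy $w$ onto a tape and treat it as its input, with only linear overhead, so this is not a real obstacle. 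Everything else is a direct instance of Proposition \ref{time}.
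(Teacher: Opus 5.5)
Your proposal is correct and follows the paper's route: the corollary is Proposition~\ref{time} (the $\NPDA$/$2\NPDA$ case) instantiated with $T(n)=n^3$, using the $n^3$-time $\DTM$ recognition of context-free languages from \cite{HU1969}. Your explicit version of the reduction is, if anything, more careful than the paper's proof of Proposition~\ref{time}: you try every $j\le k'$ and bound each run of $Z$ by $O(((k'+1)(n+1))^3)=O(n^3)$ in the padded length $N$, whereas the paper writes only $x_0^{(k)}$ and stops $Z$ after $T(n)$ steps rather than $T(N)$.
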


The two-way to one-way reduction technique in the proof of Proposition \ref{iturn}
can be used to prove other complexity results involving two-way machines.
For example, in \cite{ChanStoc}, Chan studied
$\NPDA$ and $2\NPDA$ with  counters that are $c(n)$-reversal-bounded (i.e., the
number of changes between sequences of non-increasing and non-decreasing  in  each counter is at most $c(n)$ on an input of size $n$), generalizing 
$\NPCM$ and $2\NPCM$, respectively. He showed the following, where $c(n)$ is time constructible:
\begin{enumerate}

\item An $\NPDA$ with $c(n)$-reversal-bounded counters can be simulated by an $\NTM$
(i.e., a nondeterministic multi-tape TM) within time  $p(n + c(n))$ for some
polynomial $p$.
\item
A $2\NPDA$ with $c(n)$-reversal-bounded counters can be simulated by an $\NTM$
within time $p(n^{n^2}+ c(n))$ for some  polynomial $p$.
\end{enumerate}
Chan asked whether the $\NTM$ simulation time above can be lowered. He noted
that in \cite{Gurari},  $2\NCM$ (i.e., $2\NPCM$ without reversal-bounded counters) accept only languages
in $\NLOG$
and hence in $\P$,  but from (2) above, the $\NTM$  simulation time for the same
machines augmented  with a pushdown store is still $p(n^{n^2}+ c)$ (constant $c$) for
some
polynomial $p$.  We will see that a better simulation time can be obtained
when
the number of turns the two-way input head makes is bounded.

Consider an $R(n)$-reversal $2\NPDA$ $M$  with $c(n)$-reversal-bounded counters.  Then, we  can use item (1) above
and the proof technique in Proposition \ref{iturn} to show the following:
\begin{proposition}  Let $R(n)$ and $c(n)$ be time constructible. An $R(n)$-reversal $2\NPDA$ $M$ with $c(n)$-reversal-bounded counters can be simulated by an $\NTM$ within time
$p(nR(n) + c(n))$ for some polynomial $p$.
\end{proposition}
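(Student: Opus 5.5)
The plan is to reuse the two-way-to-one-way reduction from the proof of Proposition \ref{iturn}, replacing the $\DSPACE(\log^2 n)$ simulation with Chan's item (1).

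First, I will make $M$ sweeping, as in Lemma \ref{tosweeping}: turns of the input head are moved to the end markers, using the pushdown to count the moves made past the intended turning point. This adds only extra pushdown activity. The counters are untouched, so they remain $c(n)$-reversal-bounded, and the number of input sweeps is still at most $R(n)+1$. The only cost is a constant factor in the number of head moves, because each reversal now costs at most one extra trip to an end marker and back.

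Next, exactly as in Proposition \ref{iturn}, I build a one-way $\NPDA$ $M'$ with reversal-bounded counters that reads $w = x_0\#x_1\#\cdots\#x_k\#$. It simulates the $i$th sweep of $M$ on segment $x_i$, treating the left moves on odd segments as right moves. The simulation is faithful, so each counter of $M'$ makes exactly as many reversals as in $M$. The one care point is the reversal bound. The bound $c(\cdot)$ for $M$ is a function of $n=|x_0|$, whereas Chan's theorem is stated in terms of the input length $N=|w|$. So I will let $M'$ use a bound expressed in $n$. Since $N\ge n$ and $c$ is time constructible, I expect that reading $w$ with a bound of at most $c(n)\le c(N)$ for monotone $c$ is harmless; failing that, I will apply Chan's argument directly with the bound $c(n)$. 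Note that $M'$ itself does not need to verify that $x_i = x_0$ or $x_0^R$.

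The $\NTM$ $Z'$ then works as follows on input $x_0$ with $|x_0|=n$. It computes $R(n)$, which takes polynomial time in $R(n)+n$ since $R$ is time constructible, and nondeterministically guesses $k\le R(n)$ (or tries all $k$). It writes $w=x_0^{(k)}$ on a worktape, in time $O(nR(n))$. Finally, it runs Chan's $\NTM$ for $M'$ on $w$, which takes time $p_1(N + c(n))$ with $N=(n+1)(k+1)=O(nR(n))$. Correctness is as before: $M$ accepts $x_0$ if and only if some $x_0^{(k)}$ with $k\le R(n)$ is in $L(M')$. Summing the costs gives $O(nR(n)) + p_1(O(nR(n))+c(n))$, which is at most $p(nR(n)+c(n))$ for a suitable polynomial $p$.

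I expect the main obstacle to be bookkeeping rather than a new idea. Specifically, I need to confirm three things. The sweeping conversion must not increase the counter reversals. Chan's simulation for the one-way machine must be applied with the reversal bound measured relative to the original $n$, so that the time is $p(N+c(n))$ and not something larger. And the time-constructibility assumptions must make computing $R(n)$ and $c(n)$ fit within the polynomial bound.
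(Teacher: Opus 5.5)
Your proposal is correct and follows the paper's own route: make $M$ sweeping, apply the one-way reduction from Proposition~\ref{iturn}, and invoke Chan's item (1), with the $c(n)$ term surviving because the faithful simulation leaves the number of counter reversals unchanged. Of your two options for the reversal bound, only the second gives the stated time: run Chan's simulation with the bound $c(n)$ for $n=|x_0|$, since treating $M'$ as $c(N)$-reversal-bounded would only yield $p(N+c(N))$. The paper takes this same step implicitly, without comment.
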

The reason the term $c(n)$  above does not change is that in converting the
$2\NPDA$  $M$ to an $\NPDA$ $M'$ using the technique of Proposition \ref{iturn}, the
number of  reversals each counter  of $M'$ makes is  the same as those of $M$,
since  the simulation  of $M$ by $M'$ is faithful, as noted in the proof of
Proposition \ref{iturn}.

Thus, for $R(n)$-reversal $2\NPDA$, we
can get smaller time bounds than in (2) when $R(n)$ is small.  For example:
\begin{corollary} The following are true:
\begin{enumerate}

\item A poly-reversal $2\NPDA$ $M$ with polynomial reversal-bounded counters can be simulated by an $\NTM$
within time $p(n)$ for some polynomial $p$.
\item An $r^n$-reversal $2\NPDA$ $M$  with $c^n$-reversal-bounded counters (for
some constants $r$ and $c$) can be simulated by an $\NTM$ within time $d^n$ for
some constant $d$.
\end{enumerate}
\end{corollary}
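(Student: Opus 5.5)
This corollary follows by specializing the preceding proposition, which gives an $\NTM$ simulation within time $p(nR(n)+c(n))$ whenever $R(n)$ and $c(n)$ are time constructible. The plan is to choose $R(n)$ and $c(n)$ in each item, check time constructibility, and then simplify the resulting bound.

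For item 1, take $R(n)=n^a$ and $c(n)=n^b$ for constants $a,b$. Monomials are time constructible, for instance by computing $n$ in binary and multiplying, which takes far less than $n^a$ steps. Plugging in gives time $p(n\cdot n^a+n^b)$. Since $p$ is a fixed polynomial, $p(n^{a+1}+n^b)$ is bounded by a polynomial $p'(n)$, so the $\NTM$ runs in polynomial time.

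For item 2, take $R(n)=r^n$ and $c(n)=c^n$. These are time constructible: a $\DTM$ can compute the value in binary or unary, or simply count to it, within $O(r^n)$ (respectively $O(c^n)$) steps. The proposition yields time $p(nr^n+c^n)$. Let $e=\max(r,c,2)$. Then $nr^n+c^n\le (n+1)e^n\le (2e)^n$. If $p$ has degree $d'$, we get $p((2e)^n)=O((2e)^{d'n})$, which is at most $d^n$ for a suitable constant $d$ (enlarging $d$ to absorb the constant factor).

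Both items are therefore direct substitutions. The only point needing care is verifying time constructibility in the paper's sense (computable by a $\DTM$ within $T(n)$ time), which is standard for polynomials and exponentials with integer base. I expect no real obstacle beyond that. If $r$ or $c$ is allowed to be a non-integer, I would replace it by its ceiling, since a larger reversal bound only weakens the hypothesis.
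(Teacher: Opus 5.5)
Your proposal is correct and matches the paper's implicit argument. The corollary follows by substituting $R(n)=n^a$, $c(n)=n^b$ (respectively $R(n)=r^n$, $c(n)=c^n$) into the preceding proposition's bound $p(nR(n)+c(n))$ and simplifying, and your estimate $nr^n+c^n\le (n+1)e^n\le (2e)^n$ together with the time-constructibility check supplies the routine details the paper leaves unstated.
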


\section{Finite-Flip $\NPCM$ (Finite-Flip $2\NPCM$)}

In this section, we will show that the space complexity results
we obtained for the machines in the previous sections 
generalize to when the machines can reverse its pushdown
a finite number of times during the computation. This generalization of $\NPDA$ has been previously studied \cite{flipPushdown,flipPushdown2}.

A flip-$\NPDA$ $M$ is an $\NPDA$ which has the ability to
flip its pushdown at various times during its computation
by entering one of a designated set of flip states. Thus,
when the pushdown contains $Z_0 \alpha$ (where $Z_0$ is the bottom-of-stack marker which is never replaced) and $M$ enters a 
flip state, the stack becomes $Z_0 \alpha^R$. A machine  $M$ is
$k$-flip (respectively finite-flip) if every accepted string can be accepted in a computation
where $M$ makes at most $k$ flips (respectively a finite number of flips) of the stack.  In fact,
without loss of generality, we will assume 
that every input is accepted with the machine using
exactly $k$ flips of the stack (instead of at most $k$
flips).  This is because the state can count the number
of flips the machine  has made and can perform additional dummy
flips before accepting to make the number of flips
exactly $k$. 
Note that a 0-flip $\NPDA$ is just an $\NPDA$.
We refer the reader to
\cite{flipPushdown,flipPushdown2} for formal definition.

We can then also define $k$-flip $\NPCM$, finite-flip $\NPCM$,
finite-flip $2\NPCM$, finite-turn finite-flip $2\NPCM$
(which is a finite-flip $\NPCM$ that makes a finite number
of turns on it pushdown between flips), $R(n)$-reversal finite-flip
$2\NPCM$ (which is a finite-flip $2\NPCM$ that makes at
most $R(n)$ input head reversals), etc. Thus the models in
the previous sections are generalized by replacing `$\NPCM$'
and `$2\NPCM$' by `finite-flip $\NPCM$' and `finite-flip $2\NPCM$',
respectively.

\begin{comment}

It has been shown in \cite{flipPushdown} that a language
is accepted by a $k$-flip $\NPDA$ by null stack if and only if
it is also accepted by a $k$-flip NPDA by accepting state
(thus the two notions of acceptance are equivalent).

\end{comment}

The next lemma follows from Theorem 5 in \cite{flipPushdown}, which does not use the $\$$ symbol, but it can be added.
\begin{lemma} \cite{flipPushdown}
 \label{npda}
Let $k \ge 0$. A language $L$ can be accepted by a ($k+1$)-flip $\NPDA$ $M = (Q, \Sigma, \Gamma, q_0, F, \delta)$ that makes exactly $k+1$ on every accepting computation if and only if 
the language 
$$L_R=\{ u \$ v^R \mid \begin{array}[t]{l} (q_0, u, Z_0 ) \vdash^* (q_1, \lambda, Z_0 \gamma) \mbox{~with~} k \mbox{~flips, there is a flip from~} q_1 \mbox{~to~} q_2, \mbox{~and} \\
(q_2,v, Z_0 \gamma^R) \vdash^* (q_3, \lambda, Z_0), q_3 \in F  \},\end{array}$$ 
is accepted by a $k$-flip $\NPDA$ $M'$.
Moreover, if $M$
is finite-turn, then $M'$ is also finite-turn.
\end{lemma}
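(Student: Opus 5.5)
Both directions are simulations in which the unread portion $v^R$ of the input acts as a proxy for the pushdown contents at the last flip.

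\emph{From $M$ to $M'$.} We build a $k$-flip $\NPDA$ $M'$ accepting $L_R$. On the prefix $u$, $M'$ simulates $M$ verbatim, including its first $k$ flips, and nondeterministically guesses when $M$ is about to make its last flip from $q_1$ to $q_2$. At that moment the pushdown holds $Z_0\gamma$, and $M'$ checks that the next input symbol is $\$$. The flipped stack $Z_0\gamma^R$ is what the remaining computation $(q_2,v,Z_0\gamma^R)\vdash^*(q_3,\lambda,Z_0)$ would consume from the top, so the top of the flipped stack is the first symbol of $\gamma$. Without flipping, $M'$ has $\gamma$ with its last symbol on top. The key idea is to reverse the roles of input and stack in this final phase, reading $v^R$ while consuming $\gamma$ from the top. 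This is legitimate because the suffix computation is a pushdown computation starting from $Z_0\gamma^R$ and emptying to $Z_0$. By the standard triple/grammar view (a $\CFG$-style derivation of the pair of strings), the relation between $\gamma$ and $v$ is context-free in the right sense: the set $\{\gamma\$v^R\}$ of valid suffix pairs, or equivalently $\{v\#\gamma^R\text{-compatible}\}$, is a linear-like language that can be checked by popping $\gamma$ while reading $v^R$.

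Concretely, I would use Lemma-style reasoning as in Proposition~\ref{propB2}. Write the final phase as a computation of an $\NPDA$ $N$ from state $q_2$ with initial stack $\gamma^R$ to an empty stack in $F$. The language $\{(\gamma^R,v)\}$ accepted this way corresponds to a context-free ``$N$-reachability'' relation. Reversing the whole phase yields a pushdown computation reading $v^R$ that starts with an empty stack (above $Z_0$) and ends having produced $\gamma^R$ on the stack, read in reverse. This is the standard reversal of pushdown computations, where pushes become pops and pops become pushes, valid because CFLs are closed under reversal. Equivalently, $M'$ reads $v^R$ while running $N$ backwards, ending with stack exactly $\gamma$, which it matches by popping against the stored $\gamma$. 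To avoid needing two stacks, I would interleave the two processes on the one pushdown: backward simulation of $N$ on $v^R$ pops symbols of $\gamma$ whenever $N$ forwards would have pushed onto the original contents, and pushes/pops its own symbols otherwise. This works because backward $N$ ``pops'' the original stack content exactly in the order top-of-$\gamma$ first. Acceptance requires reaching $q_2$ with stack exactly $Z_0$.

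\emph{From $M'$ to $M$.} Conversely, given $M'$ accepting $L_R$ with $k$ flips, we build $M$ that simulates $M'$ on $u$. At the point where $M'$ would read $\$$, $M$ flips, consuming one flip, and then simulates the backward behaviour of $M'$ on $v^R$ while reading $v$ forwards. This is the same reversal trick in the other direction, so $M$ uses exactly $k+1$ flips. The main obstacle is proving the reversal lemma rigorously: a forward computation popping $\gamma^R$ while reading $v$ corresponds exactly to a backward computation popping $\gamma$ while reading $v^R$. I would prove this by induction on computation length, treating each transition $\delta(p,a,1,y,p',z)$ as an inverse transition from $p'$ that pops $z$ (a string, handled by intermediate states) and pushes $y$. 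Turns are preserved up to a constant, since reversing a computation maps increasing runs to decreasing runs. Hence finite-turn is preserved, which gives the ``moreover'' clause.
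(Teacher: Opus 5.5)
The paper gives no proof of this lemma. It cites Theorem~5 of \cite{flipPushdown}, remarks that the separator $\$$ can be inserted, and asserts the finite-turn clause without argument. Your construction of $M'$ from $M$ follows the idea behind that result: simulate $M$ through its first $k$ flips, read $\$$ in place of the last flip, then run the flip-free final phase backwards on $v^R$, matching what the reversed run rebuilds against $\gamma$ popped from the top. So the core idea is right, but the matching step is misstated. $M'$ must pop a symbol of $\gamma$ exactly when the reversed step re-creates one of the original occurrences of $Z_0\gamma^R$, i.e.\ when the corresponding forward step replaces an original symbol. The trigger is not ``whenever $N$ forwards would have pushed onto the original contents''. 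You also need the observation that the auxiliary part of the stack is empty at those moments. This holds because the re-created symbols are contiguous in $Z_0\gamma^R$ and are never popped again by the reversed run; $M'$ can enforce it by using marked copies for auxiliary symbols. Separately, your aside that the suffix relation is ``linear-like'' is false (already for $\gamma=\lambda$ it can be a Dyck language), though you never use it.

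Two steps genuinely fail. The first is the finite-turn clause. The last phase of $M'$ is not a time-reversal of that of $M$, because every re-creating push is executed as a pop. So ``reversing maps increasing runs to decreasing runs'' does not apply, and for your construction as described the claim is false. Suppose that after its last flip $M$ repeatedly replaces the top original symbol by an auxiliary $X$ and then pops $X$; its stack height never increases. Your $M'$ then performs push $X$, pop $X$, pop a symbol of $\gamma$, push $X$, pop $X$, pop the next symbol of $\gamma$, \ldots, which is about $2|\gamma|$ turns. A repair is to normalize $M$ first, e.g.\ keep the top stack symbol in the finite control, so that replacing the top costs nothing and every real stack move is a unit push or pop. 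Then original symbols leave the stack only by unit pops made when no auxiliary symbol lies above them. Each maximal block of such pops is preceded by a descent and followed by an ascent of the auxiliary part, and such an interior block contributes exactly one turn to $M$ and one to $M'$, so the turn counts differ by a constant.

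The second is your converse. A $k$-flip machine accepting $L_R$ may perform some of its flips after reading $\$$. The reversed simulation of that segment then contains flips, and at such a flip the part of $\gamma$ that $M$ has already matched and discarded would have to return to the top. So ``the same reversal trick in the other direction'' breaks down. That direction needs no construction anyway: $L_R$ is defined from $M$ itself, so the `if' part has no real content. Only the `only if' part is used later (Lemma~\ref{npcm}, Proposition~\ref{flip1}).
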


The above result holds when the $\NPDA$ is replaced by
$\NPCM$, as we show next.
\begin{lemma} \label{npcm}
Let $k \ge 0$. A language $L$ can be accepted by a ($k+1$)-flip $\NPCM$ $M = (Q, \Sigma, \Gamma, q_0, F, \delta)$ that makes exactly $k+1$ on every accepting computation if and only if 
the language 
$$L_R=\{ u \$ v^R \mid \begin{array}[t]{l} (q_0, u, Z_0 ) \vdash^* (q_1, \lambda, Z_0 \gamma) \mbox{~with~} k \mbox{~flips, there is a flip from~} q_1 \mbox{~to~} q_2, \mbox{~and} \\
(q_2,v, Z_0 \gamma^R) \vdash^* (q_3, \lambda, Z_0), q_3 \in F  \},\end{array}$$ 
is accepted by a $k$-flip $\NPCM$ $M'$.
Moreover, if $M$
is finite-turn, then $M'$ is also finite-turn.
\end{lemma}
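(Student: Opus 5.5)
We reduce the statement to Lemma \ref{npda} by the same encoding used in the proof of Proposition \ref{propB2}: counter increments become fresh input letters, so the counters are carried through the $\NPDA$ construction as ordinary symbols. We work with monotonic counters $C_1,\ldots,C_{2l}$ and let $a_1,\ldots,a_{2l}$ be new letters. The transitions of an $\NPCM$ never test the counter status. Hence a $(k+1)$-flip $\NPCM$ $M$ gives a $(k+1)$-flip $\NPDA$ $\hat M$ over $\Sigma\cup\{a_1,\ldots,a_{2l}\}$. The machine $\hat M$ simulates each transition of $M$ verbatim, except that an increment of $C_j$ is replaced by additionally reading $a_j$ immediately after the letter (or $\lambda$) read by that transition. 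Flips, turns and stack behaviour are unchanged, so $\hat M$ makes exactly $k+1$ flips on accepting computations. It is finite-turn whenever $M$ is.

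\textbf{Forward direction.} We apply Lemma \ref{npda} to $\hat M$ and obtain a $k$-flip $\NPDA$ $\hat M'$ accepting
$$\hat L_R=\{\hat u\$\hat v^R \mid \hat M \text{ reads } \hat u \text{ before the last flip and } \hat v \text{ after it}\}.$$
This machine is finite-turn if $\hat M$ is. We then build $M'$ from $\hat M'$ by the converse construction of Proposition \ref{propB2}. Whenever $\hat M'$ reads some $a_j$, the machine $M'$ instead reads $\lambda$ and increments $C_j$, and at the end it accepts only if the pairs of counters match. A word $\hat u\$\hat v^R\in\hat L_R$ projects onto $u\$v^R$. The total number of $a_j$ it contains equals the total increments of $C_j$ in the corresponding computation of $M$, split between the part before the flip (in $\hat u$) and the part after it (in $\hat v^R$). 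Since only the totals matter for monotonic counters, $M'$ accepts $u\$v^R$ exactly when some accepting computation of $M$ splits as required, that is, when $u\$v^R\in L_R$. Conversely, every accepting run of $M'$ arises from a word in $\hat L_R$ whose letter counts satisfy the matching condition, and that word lifts to an accepting computation of $M$. The flip count of $M'$ and its finite-turn property are inherited from $\hat M'$, since we only replaced input reads by counter increments.

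\textbf{Reverse direction.} Given a $k$-flip $\NPCM$ $M'$ for $L_R$, we encode its counters as letters in the same way to get a $k$-flip $\NPDA$ accepting a language $\hat L'_R$. We then apply the converse half of Lemma \ref{npda} to get a $(k+1)$-flip $\NPDA$, and finally decode the letters back into counter increments.

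\textbf{Main obstacle.} The reverse direction is the step where the argument is most likely to break. Lemma \ref{npda} requires its input language to have exactly the form $L_R$ for some $(k+1)$-flip machine. To handle this we would use a modified encoding that places the letters $a_j$ (or a barred copy of them) only in the $u$ part, which is always legitimate because the position of counter increments is immaterial. Alternatively, we would look inside the construction of Theorem 5 in \cite{flipPushdown} and carry the counters along directly. That construction only rearranges the order in which the stack and input transitions are simulated. Monotonic counters are insensitive to such reordering, as already exploited in the paired-tape lemma, so the counters would survive the rearrangement unchanged.
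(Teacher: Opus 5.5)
Your forward direction is the paper's proof. The paper also builds a flip $\NPDA$ $M_1$ that reads a fresh letter $c_i$ whenever $M$ increments $C_i$. It applies Lemma~\ref{npda} to get a $k$-flip $\NPDA$ for $L(M_1)_R$, and then replaces each read of $c_i$ by an increment of $C_i$ at a nondeterministically guessed position. Your explanation of why only the totals of the $a_j$'s over $\hat u$ and $\hat v^R$ matter is more explicit than the paper's.

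One step that both you and the paper leave implicit concerns the accepting computations of $\hat M$. They also encode runs of $M$ that end in a final state with $C_i\ne C_{i+1}$, and the hypotheses (exactly $k+1$ flips, finite-turn) say nothing about those runs. So your claim that $\hat M$ makes exactly $k+1$ flips and is finite-turn is not yet justified. Before invoking Lemma~\ref{npda}, let the finite control of $M$ count flips and turns (up to the bound), and make final only the states reached after exactly $k+1$ flips. This changes neither $L(M)$ nor $L_R$.

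Your ``main obstacle'' is not a real one. The paper proves only the ``only if'' direction, and the ``if'' direction needs no construction. The set $L_R$ is defined from the computations of $M$ itself, so the right-hand side presupposes a $(k+1)$-flip $\NPCM$ $M$ with $L(M)=L$ making exactly $k+1$ flips. That $M$ already witnesses the left-hand side, and Proposition~\ref{flip1} only uses the forward half.

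The route you sketch would in any case be circular. The converse half of Lemma~\ref{npda} is tied in the same way to the $L_R$ of a given $(k+1)$-flip machine, so it cannot be applied to the letter-encoding of an arbitrary $k$-flip $M'$. Your two suggested remedies (a modified encoding, or carrying counters through Theorem~5) are not yet arguments.

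What that sketch is really aiming at is a different, generic statement: for every $k$-flip $\NPCM$ $M'$, the language $\{uv \mid u\$v^R\in L(M')\}$ is accepted by a $(k+1)$-flip $\NPCM$. The lemma does not claim this, and it would need its own proof; for instance, $M'$ may flip while reading $v^R$.
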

\begin{proof}
Let $\Sigma$ be the input alphabet of $M$, and $C_1,\ldots, C_m$ be
its monotonic counters, where $m$ is an even positive integer.  Construct an $\NPDA$ $M_1$ with input alphabet 
$\Delta = \Sigma \cup \{c_1, \ldots , c_m\}$, where the $c_i$'s are new 
symbols corresponding to the counters. When given an input
$x \in \Delta^*$, $M_1$ simulates the computation of $M$ but reads
symbol $c_i$ whenever $M$ increments counter $C_i$.  

By Lemma \ref{npda}, we can construct a flip 
$\NPDA$ $M_1'$ which makes exactly $k$ flips that accepts $L(M_1)_R$. Note that the input alphabet
of $M_1'$ is $\Delta$.  We can then construct a flip $\NPCM$ $M'$ with counters $C_1, \ldots, C_m$ to simulate $M_1'$, but with input $\Sigma$. So in the simulation, $M'$ instead
of reading a letter $c_i$, increments counter $C_i$. In this way, the places on
the input where it is supposed to read $c_i$ are nondeterministically guessed.
Then $M'$ accepts $L_R$.
\end{proof}

\begin{proposition} \label{flip1}
Every language accepted by a finite-flip $\NPCM$ is in
$\DSPACE(\log^2 n)$.
\end{proposition}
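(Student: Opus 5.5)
We prove the statement by induction on the number $k$ of flips. The base case $k=0$ is Proposition \ref{spaceNPCM}, since a $0$-flip $\NPCM$ is just an $\NPCM$. We may assume, as remarked above, that a $k$-flip machine uses exactly $k$ flips on every accepting computation.

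For the inductive step, let $M$ be a $(k+1)$-flip $\NPCM$. By Lemma \ref{npcm} there is a $k$-flip $\NPCM$ $M'$ accepting $L_R$. By the inductive hypothesis, $L_R$ is accepted by a $\DTM$ $Z$ within space $c\log^2 N$ on inputs of length $N$.

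The key observation is that $w\in L(M)$ if and only if there is a split $w=uv$ with $u\$v^R\in L_R$. Indeed, an accepting computation of $M$ on $w$ reads some prefix $u$ up to its last flip, from $q_1$ to $q_2$, and then reads the remaining suffix $v$. The characterization of $L_R$ in Lemma \ref{npcm} describes exactly this decomposition, with final pushdown $Z_0$. That final pushdown is harmless, since we can first normalize $M$ to empty its pushdown before accepting without adding flips. The converse holds for the same reason.

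We then build a $\DTM$ $Z'$ that, on input $w$ of length $n$, loops over all $n+1$ split points $i$ and records $i$ in $O(\log n)$ space. For each $i$ it simulates $Z$ on the virtual input $w[1..i]\,\$\,(w[i+1..n])^R$, which has length $n+1$. Whenever $Z$ reads position $j$ of this virtual input, $Z'$ computes the corresponding position of $w$ arithmetically from $i$ and $j$ (the symbol $\$$ at $j=i+1$, and a reversed index for $j>i+1$), using $O(\log n)$ additional space. $Z'$ accepts if some simulation accepts. The total space is $O(\log^2(n+1))+O(\log n)=O(\log^2 n)$.

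The main obstacle is that the induction multiplies nothing, but it must be applied to the fixed machine $M'$ rather than to a family. Since $k$ is a constant, we get a chain of $k$ constructions of $\DTM$s, each adding only $O(\log n)$ space. The total is therefore $O(\log^2 n)+O(k\log n)=O(\log^2 n)$.

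A secondary point is to verify that the reduction is length-preserving up to an additive constant. Each application of Lemma \ref{npcm} adds one $\$$, so after all $k+1$ levels the input length is $n+k+1$. Otherwise only the virtual-index translation must be composed through the levels, and each level contributes one stored split point.
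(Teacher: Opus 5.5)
Your proposal is correct and follows the paper's proof. Both argue by induction on the number of flips via Lemma~\ref{npcm}, then build a $\DTM$ that tries every split $w=uv$ and simulates the inductively obtained $O(\log^2 n)$-space machine on $u\$v^R$, storing only the split point in $O(\log n)$ extra space. Two of your additions fill in details the paper leaves implicit: normalizing $M$ to empty its pushdown before accepting, so the $Z_0$ condition in $L_R$ is met, and the explicit virtual-index arithmetic. Your remark about unrolling a chain of $k$ constructions is unnecessary but harmless, because the inductive hypothesis is already the class-level statement for all $k$-flip $\NPCM$.
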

\begin{proof}
The construction is by induction on the number of flips $k$. We know already that $0$-flip $\NPCM$ are equal to $\NPCM$, which are in 
$\DSPACE(\log^2 n)$ by Proposition \ref{spaceNPCM}.
Assume by induction that all $k$-flip $\NPCM$ are in $\DSPACE(\log^2 n)$.
Let $M$ be an $\NPCM$ which makes exactly $k+1$ flips.
From Proposition \ref{npcm}, we can construct a flip $\NPCM$ $M'$ which makes exactly $k$ flips such that 
$L(M') = L(M)_R$.
By induction, $L(M')$ can be accepted by a $\DTM$ $Z'$ in $\log^2 n$ space. 

We construct a  $\DTM$ $Z$ to accept $L(M)$ which, when given an input $w$ 
systematically splits the input into $u$ and $v$ such that $w = uv$ in every possible way, starting with $|u| = 0$.  Then $Z$ simulates the $\DTM$ $Z'$  on $u\$v^R$ (it needs only $\log~ n$ space to remember the location of the split).  Clearly, the simulating $\DTM$ $Z$ is also $\log^2 n$ space-bounded. 

We will prove $L(Z) = L(M)$.
First 
Let $w \in L(M)$. Then $w = u v$ where $u$ is the part that is read up until the $k^{\mbox{th}}$ flip, it flips the final time, then it reads $v$ and accepts. Because of that,
$u \$ v^R$ is in $L(M)_R = L(Z')$. Then in $Z$, on input $w$, it will eventually try the partition $w = uv$ and check that $u\$ v^R$ is in $L(Z')$ and then accept $w$. Conversely,  let $w \in L(Z)$. Then on input $w$, there must be some split $w = uv$ such that $u\$v^R \in L(M)_R$. By the construction of $L(M)_R$, this means $uv$ accepts in $M$.
\end{proof}

By the same ideas as above, we can also show the following:

\begin{proposition} \label{flip2}
Every language accepted by a finite-turn finite-flip $\NPCM$ is in
$\NLOG$.
\end{proposition}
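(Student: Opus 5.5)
We proceed exactly as in the proof of Proposition \ref{flip1}, by induction on the number of flips $k$, but replacing deterministic $\log^2 n$ space by nondeterministic $\log n$ space throughout. By the convention above, we may assume that a finite-turn $k$-flip $\NPCM$ makes exactly $k$ flips on every accepting computation. For the base case $k=0$, a finite-turn $0$-flip $\NPCM$ is simply a finite-turn $\NPCM$, and its language is in $\NLOG$ by Proposition \ref{BB4}.

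For the inductive step, assume that every language accepted by a finite-turn $\NPCM$ making exactly $k$ flips is in $\NLOG$. Let $M$ be a finite-turn $\NPCM$ making exactly $k+1$ flips. By Lemma \ref{npcm}, including its ``moreover'' clause, there is a $k$-flip $\NPCM$ $M'$ that is again finite-turn and satisfies $L(M') = L(M)_R$. By the induction hypothesis, there is a $\log n$ space-bounded $\NTM$ $Z'$ accepting $L(M')$. The one point to check is that the construction in Lemma \ref{npcm} really preserves the finite-turn property. That construction replaces counter increments by reading new input symbols $c_i$, applies Lemma \ref{npda}, and then replaces the reading of $c_i$ by incrementing $C_i$ again. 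None of these steps changes the pushdown behaviour. The finite-turn clause of Lemma \ref{npda} therefore carries over, and I would state this verification explicitly.

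Next, build an $\NTM$ $Z$ which, on input $w$, nondeterministically guesses a split point $|u|$ with $w = uv$, storing it in $O(\log n)$ bits. It then simulates $Z'$ on the virtual input $u\$v^R$. Whenever $Z'$'s head is at position $j$ of this virtual string, $Z$ computes the corresponding position of $w$: position $j$ itself if $j \le |u|$, the symbol $\$$ if $j = |u|+1$, and position $n-(j-|u|-2)$ otherwise. This bookkeeping needs only $O(\log n)$ extra space, since $|u\$v^R| = n+1$. The correctness argument is identical to the one in Proposition \ref{flip1}. If $w \in L(M)$, the split at the final flip gives $u\$v^R \in L(M)_R$. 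Conversely, any accepted split yields an accepting computation of $M$ on $uv$. Hence $L(M) = L(Z) \in \NLOG$.

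The main obstacle is only the bookkeeping of the finite-turn property through Lemma \ref{npcm}, which is why I single it out above. The guess-and-simulate step is routine, because nondeterminism subsumes the systematic search over splits used in the deterministic case.
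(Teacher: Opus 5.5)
Your proposal is correct and takes the same route the paper intends. The paper only writes ``by the same ideas as above'', meaning induction on the number of flips via Lemma \ref{npcm} with its finite-turn clause, Proposition \ref{BB4} as the base case, and the split of $w$ into $u$ and $v$ with the simulation on $u\$v^R$ done in logarithmic space.

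One small refinement to the finite-turn check you singled out. The intermediate $\NPDA$ $M_1$ drops the counter-equality test, so it has accepting computations that $M$ does not, and ``same pushdown behaviour'' alone does not make $M_1$ finite-turn or exactly $(k+1)$-flip. It does once you first make $M$ count its turns and flips in its finite control and block any computation that exceeds the bounds.
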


Using Propositions \ref{flip1} and \ref{flip2}, we can then
show (using the same proof techniques) that all the results
in Sections 4 and 5 hold when `$\NPCM$' and `$2\NPCM$' are
replaced by `finite-flip $\NPCM$' and `finite-flip $2\NPCM$' respectively.
In particular, we have:

\begin{proposition} \label{spaceflip}
The following are true:
\begin{enumerate}
\item For logspace-constructible $R(n)$,
every language accepted by an $R(n)$-reversal finite-flip $2\NPCM$
is in $\DSPACE(\log^2(nR(n)))$.
\item
Every language accepted by a poly-reversal finite-flip $2\NPCM$
is in $\DSPACE(\log^2 n)$.
\item
Every language accepted by a finite-reversal finite-turn finite-flip $2\NPCM$
is in $\NLOG$.
\end{enumerate}
\end{proposition}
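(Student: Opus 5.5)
The plan is to rerun the two-way-to-one-way reduction from the proof of Proposition \ref{iturn}, with the one-way complexity input now supplied by Propositions \ref{flip1} and \ref{flip2} instead of Propositions \ref{spaceNPCM} and \ref{BB4}. All three items share a first step. Let $M$ be an $R(n)$-reversal finite-flip $2\NPCM$. We make $M$ sweeping as in Lemma \ref{tosweeping}: a premature turn is replaced by pushing a distinct marker symbol for each move to the end marker and then popping these markers on the way back.

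This needs care with flips. A flip reverses the stack, so markers pushed before a flip would end up at the bottom. The fix is to forbid flips while markers are on the stack. Excursions are inserted only at input-head turns, and $M$ performs its flips between them. During an excursion the machine only moves the head and pushes or pops markers, so a flip never occurs inside one, and the markers are all gone by the end of each excursion. Hence the number of flips is unchanged, and the new machine is still a finite-flip $2\NPCM$.

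Next, as in Proposition \ref{iturn}, build a one-way finite-flip $\NPCM$ $M'$ accepting $x_0\#x_1\#\cdots\#x_k\#$ by simulating sweep $i$ of $M$ on segment $x_i$. Left sweeps are simulated by reading the reversed segment $x_0^R$. Flips, pushdown moves and counter moves are copied faithfully, so the number of flips and the counter reversals are unchanged. By Proposition \ref{flip1}, $L(M')\in\DSPACE(\log^2 N)$ with $N=(n+1)(k+1)$.

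For item 1, construct $Z'$ exactly as before. It lays down a binary counter of length $\log_2 R(n)+1$, using logspace-constructibility. It then tries $k=0,1,\ldots$ and simulates the $\DTM$ for $L(M')$ on the virtual input $x_0^{(k)}$, computing each requested symbol from $k$ and a position index in $O(\log(nR(n)))$ space. This gives $\DSPACE(\log^2(nR(n)))$. Item 2 then follows as a corollary: take $R(n)=n^c$, which is logspace-constructible, so $\log^2(n^{c+1})=O(\log^2 n)$.

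For item 3 there is a constant bound $r$ on input reversals. I would avoid Lemma \ref{tosweeping} here, since sweeping conversion could add unboundedly many stack turns per sweep. Instead, reduce directly to a one-way machine whose input is $x_0\#x_0^R\#\cdots$ with at most $r+1$ segments. $M'$ guesses the turn points: in segment $i$ it reads up to the guessed turn position, then skips the remainder nondeterministically while checking only that the segment lengths are consistent.

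Consistency is the hard part, because it cannot be checked inside $M'$ without extra resources. I would move it into the logspace machine. That machine guesses the $O(r)$ turn positions, stores them in $O(\log n)$ space, and presents $M'$ with the virtual input made of just the traversed pieces of $x_0$ in the appropriate directions. $M'$ is then finite-turn and finite-flip, since it makes the same pushdown moves and flips as $M$. Proposition \ref{flip2} places it in $\NLOG$, and an $\NTM$ can simulate it on the virtual input with $O(\log n)$ extra space for the boundaries.

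I expect the main obstacle to be the interaction of flips with the pushdown-based sweeping conversion in the first step, together with the alternative treatment needed in item 3 to keep the pushdown finite-turn. I would check both carefully.
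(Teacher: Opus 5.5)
\paragraph{Items 1 and 2.} These follow the paper's route exactly: the sweeping conversion of Lemma~\ref{tosweeping}, the one-way simulation on $x_0^{(k)}$, Proposition~\ref{flip1} in place of Proposition~\ref{spaceNPCM}, and the same machine $Z'$. Your observation that no flip ever falls inside an excursion fills in a detail the paper leaves implicit.

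\paragraph{Item 3.} You take a different route, and the reason you give for it is wrong. The sweeping conversion replaces each premature input reversal by exactly one reversal at an end marker. It inserts only one push phase and one pop phase of markers, so it adds at most a constant number of stack turns \emph{per input reversal}, not unboundedly many per sweep. Hence a finite-reversal finite-turn finite-flip machine stays finite-turn after conversion. The paper states this in the proof of Lemma~\ref{tosweeping} and again before Proposition~\ref{nlog}. So the paper's item 3 is just the reduction of Proposition~\ref{iturn} with at most $r+1$ sweeps, with Proposition~\ref{flip2} supplying the $\NLOG$ bound. The loss of finite-turn you have in mind occurs only with unboundedly many reversals, which is why Proposition~\ref{nlog}(2) assumes sweeping.

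Your alternative is nonetheless correct. The $\NTM$ guesses the $O(r)$ turn positions in $O(\log n)$ space and runs the $\NLOG$ machine for $L(M')$ on the virtual string of traversed pieces. This needs one condition you leave implicit: $M'$ must accept only if every simulated reversal of $M$ happens exactly when it reads a $\#$, and $M$ never tries to move onward across one. A one-way machine can enforce this by reading the $\#$ precisely at a reversal and rejecting otherwise.

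\paragraph{Comparison.} Your route buys independence from the sweeping lemma, so there is no accounting of extra stack turns or of flips during excursions. But it only works for a constant number of reversals: with $R(n)$ reversals, storing the turn positions costs $R(n)\log n$ space. That is why items 1 and 2 need the sweeping normal form, where the turn points are implicit. As an aside, the consistency check you move outside $M'$ could also be done inside it, using $2r$ extra monotonic counters that compare the skipped lengths on consecutive copies of $x_0$ and $x_0^R$.
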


For the time complexity of finite-flip $\NPCM$ languages, we have:
\begin{proposition} Let $T(n)$ be a time-constructible function. If every $\NPCM$ language is accepted by a $\DTM$ in
$T(n)$ time, then every $k$-flip $\NPCM$ language can be accepted 
by a $\DTM$ in $n^kT(n)$ time.
\end{proposition}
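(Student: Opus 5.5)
We argue by induction on $k$, following the structure of the proof of Proposition \ref{flip1}, but now tracking running time rather than space. For $k=0$ a $0$-flip $\NPCM$ is just an $\NPCM$, so by hypothesis its language is accepted by a $\DTM$ in $T(n) = n^0 T(n)$ time. For the inductive step, let $M$ be a $(k+1)$-flip $\NPCM$; as noted earlier we may assume it makes exactly $k+1$ flips on every accepting computation. By Lemma \ref{npcm} there is a $k$-flip $\NPCM$ $M'$ with $L(M') = L(M)_R$. By the inductive hypothesis, some $\DTM$ $Z'$ accepts $L(M')$ within time $N^k T(N)$ on inputs of length $N$.

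We then build a $\DTM$ $Z$ for $L(M)$. On input $w$ with $|w| = n$, $Z$ loops over the $n+1$ splits $w = uv$. For each split it writes $u\$v^R$ on a worktape in $O(n)$ time and runs $Z'$ on it. The string $u\$v^R$ has length $n+1$, so each run costs $(n+1)^k T(n+1)$ time. $Z$ accepts as soon as some run accepts. Because $T$ is time-constructible, $Z$ can first compute $T$ and use it as a clock, halting each simulation once the bound is reached; this keeps every run within budget even when $Z'$ is not known to halt uniformly. Correctness is word for word the argument of Proposition \ref{flip1}: $w \in L(M)$ if and only if some split has $u\$v^R \in L(M)_R$.

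The total time is about $(n+1)\bigl(O(n) + (n+1)^k T(n+1)\bigr)$. This gives a bound of order $n^{k+1} T(n)$ for $k+1$ flips, which is the claimed form up to constants, provided two technical points hold.

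The main obstacle is that bookkeeping. The first point is the gap between $T(n+1)$ and $T(n)$. I would handle it by assuming $T$ is non-decreasing and $T(n+1) = O(T(n))$, which holds for polynomial $T$. Alternatively, I would avoid the $\$$ entirely: give $Z'$ the original input $w$ with a split position and a flag for reading $v$ backwards, and have $Z$ simulate $Z'$ on that virtual string. The second point is the multiplicative constants that pile up at each of the $k$ levels. Since $k$ is fixed, these can be absorbed by the standard linear speedup theorem, so that each level contributes exactly one factor of $n$ and the final bound is $n^k T(n)$.
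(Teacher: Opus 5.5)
Your proposal follows the paper's proof: induction on $k$, Lemma~\ref{npcm} to pass from a $(k+1)$-flip machine to a $k$-flip machine for $L(M)_R$, and a deterministic loop over the splits $w=uv$ that contributes one extra factor of $n$. Your clocking, linear-speedup and $T(n+1)$-versus-$T(n)$ remarks supply bookkeeping the paper leaves implicit; one caution is that simulating $Z'$ on a ``virtual'' copy of $u\$v^R$ does not remove the $T(n+1)$ issue, since that string still has length $n+1$ and the separator cannot simply be dropped without breaking correctness. So to avoid a regularity assumption on $T$, you would instead build a length-preserving encoding of the split (for instance, marking the letter at the split position) into the $k$-flip machine before applying the induction hypothesis.
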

\begin{proof} The proof is again an induction on $k$. This is obviously
true for $k = 0$.  Now assume that every $k$-flip $\NPCM$ language
can be accepted by a $\DTM$ in $n^kT(n)$ time. Let $M$ be $(k+1)$-flip $\NPCM$. We construct a $k$-flip $\NPCM$ $M'$ as in the proof of Proposition \ref{flip1}. 
Then, by the induction hypothesis, $L(M')$ can be accepted by a $\DTM$
$Z$ in $n^kT(n)$ time. From $Z$, we construct a $\DTM$ $Z'$ to accept
$L(M)$ as in that proof. Since there are $n$ possible positions in 
the input $w$ to make the split, systematically checking each of these for a ``valid'' split will add a factor of $n$ in the  complexity.  It follows that $Z'$ will have time complexity 
$n^{k+1}T(n)$.
\end{proof}
Since we know that every $\NPCM$ language is in $\P$ 
by Proposition \ref{rough}, we have:
\begin{corollary} Every language accepted by a finite-flip $\NPCM$ is in $\P$.
\end{corollary}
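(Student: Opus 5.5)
The statement to prove is the corollary that every finite-flip $\NPCM$ language is in $\P$. I would get it directly from the preceding proposition, with $T(n)$ a polynomial supplied by Proposition \ref{rough}.

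First, fix an $\NPCM$-recognizable bound. By Proposition \ref{rough}, every $\NPCM$ language is in $\P$. This is not quite enough as stated: the preceding proposition needs a single time-constructible $T(n)$ that works for every $\NPCM$ language, while Proposition \ref{rough} only gives each language its own polynomial. I would avoid the uniformity issue by running the induction of the previous proof on a fixed machine.

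Let $M$ be a finite-flip $\NPCM$. By the standing convention it makes exactly $k$ flips for some fixed $k$. Applying Lemma \ref{npcm} $k$ times produces a chain of machines $M = M_k, M_{k-1}, \ldots, M_0$. Each $M_{j-1}$ is a $(j-1)$-flip $\NPCM$ accepting $L(M_j)_R$, and $M_0$ is an ordinary $\NPCM$.

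By Proposition \ref{rough}, $L(M_0)$ is accepted by a $\DTM$ in time $n^{c}$ for some constant $c$; polynomials are time-constructible. I then climb back up the chain as in the proof of Proposition \ref{flip1}. Membership of $w$ in $L(M_j)$ is decided by trying all $|w|+1$ splits $w=uv$, writing $u\$v^R$ on a worktape, and running the decider for $L(M_{j-1})$ on it. The string $u\$v^R$ has length $|w|+1$, so each level multiplies the running time by $O(n)$, plus $O(n)$ overhead for copying each split. After $k$ levels the running time is $O(n^{k+c+1})$, which is polynomial because $k$ and $c$ are fixed by $M$.

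The only delicate point is the growth of input lengths across levels: each level adds one $\$$, so at level $0$ the input has length $n+k$. Since $(n+k)^c = O(n^c)$ for fixed $k$, this does not affect the polynomial bound. Equivalently, the preceding proposition with $T(n)=n^c$ for the specific $M_0$ gives $n^kT(n)$ time directly.
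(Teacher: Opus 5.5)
Your proposal is correct and is essentially the paper's argument. The paper gets the corollary in one line by feeding the polynomial bound from Proposition~\ref{rough} into the preceding proposition, whose proof is the same split-and-recurse induction via Lemma~\ref{npcm} that you carry out on the fixed chain $M_k,\ldots,M_0$. Your machine-by-machine version is also more careful than the paper's derivation. Proposition~\ref{rough} gives each $\NPCM$ language its own polynomial, and the paper's conclusion admits that no single exponent is known, so the uniform hypothesis of the preceding proposition is not literally available; your version avoids needing it.
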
 
We can then generalize the above result (using the two-way to one-way conversion technique) to:
\begin{corollary} Every language accepted by a poly-reversal finite-flip $2\NPCM$ is in $\P$. \label{polyreversalflip}
\end{corollary}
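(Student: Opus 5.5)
The plan is to combine the two-way to one-way reduction from the proof of Proposition~\ref{iturn} with the polynomial-time bound for finite-flip $\NPCM$ from the preceding corollary. Let $M$ be an $n^c$-reversal $k$-flip $2\NPCM$.

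\textbf{Step 1 (sweeping).} First make $M$ sweeping, as in Lemma~\ref{tosweeping}. When a turn occurs away from an end marker, the head continues to the marker while counting the moves on the pushdown, then pops back to the turning point. This adds stack turns but no flips. It can be done between flips, because the extra symbols are pushed and popped before the next flip occurs. So the result is still $k$-flip and still $O(n^c)$-reversal, since each original reversal becomes at most two reversals.

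\textbf{Step 2 (one-way machine).} Next build the one-way machine $M'$ exactly as in Proposition~\ref{iturn}. On input $x_0\#x_1\#\cdots\#x_j\#$, it simulates sweep $i$ of $M$ on segment $x_i$, reading the odd segments as the reversed input. Flips and counter operations are copied faithfully, so $M'$ is a $k$-flip $\NPCM$. Hence $M'$ accepts a language in $\P$ by the previous corollary, via some $\DTM$ $Z$ running in time $O(N^d)$ on inputs of length $N$.

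\textbf{Step 3 (deterministic simulation).} Finally, as in the proof of Proposition~\ref{time}, build a $\DTM$ $Z'$ that on input $x_0$ of length $n$ does the following for each even $j \le 2n^c+1$:
\begin{itemize}
\item it writes $x_0^{(j)}$ on a worktape, of length $O(n^{c+1})$;
\item it runs $Z$ on that string and accepts if $Z$ accepts.
\end{itemize}
Correctness is the same argument as in Proposition~\ref{iturn}: $M$ accepts $x_0$ iff some $x_0^{(j)}$ with $j$ at most the reversal bound lies in $L(M')$. The total time is $O(n^c\cdot n^{(c+1)d})$, a polynomial.

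The main point to check is Step~1: that the conversion to sweeping preserves finite-flipness. It does, because the repositioning is bracketed push/pop work done entirely between flips. Step~2 is routine given faithfulness, and Step~3 only needs $n^c$ to be computable in polynomial time.
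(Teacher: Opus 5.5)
Your proof is correct and follows the route the paper indicates: it justifies this corollary only by citing the two-way to one-way conversion of Proposition~\ref{iturn}, combined with the preceding corollary (finite-flip $\NPCM$ languages are in $\P$) and the polynomial-time $Z'$ construction from the second part of Proposition~\ref{time}, and that is exactly what you carry out. Your explicit check that the sweeping conversion of Lemma~\ref{tosweeping} adds no flips fills in a detail the paper leaves implicit. The reversal count does not actually grow there either, since each off-marker turn is just moved to the end marker, but your factor-of-two bound is harmless.
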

Clearly, the proposition above holds for $k$-flip NPDA languages.  Since every $\NPDA$ language ($\CFL$) can be accepted by a $\DTM$ in 
$O(n^{3})$ time \cite{HU1969}, we have:
\begin{corollary}  Every language accepted by a $k$-flip $\NPDA$ can be 
accepted by a $\DTM$ in $n^{k+3}$ time.
\end{corollary}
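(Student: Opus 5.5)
The statement follows directly from the preceding time-complexity proposition for $k$-flip machines, instantiated for $\NPDA$ rather than $\NPCM$. So the plan has two parts. First, check that the preceding proposition remains valid when $\NPCM$ is replaced by $\NPDA$ throughout. Second, plug in $T(n)=n^3$, the known time bound for recognizing context-free languages on a $\DTM$ \cite{HU1969}.

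For the first step, I will argue by induction on $k$, exactly as in the preceding proposition, but using Lemma \ref{npda} in place of Lemma \ref{npcm}.

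\emph{Base case} ($k=0$): a $0$-flip $\NPDA$ is just an $\NPDA$, so its language is context-free and is accepted by a $\DTM$ in $O(n^3)$ time.

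\emph{Inductive step}: assume every $k$-flip $\NPDA$ language is accepted by a $\DTM$ in time $n^k\cdot n^3$. Let $M$ be a $(k+1)$-flip $\NPDA$; as noted earlier, we may assume it makes exactly $k+1$ flips on every accepting computation.
\begin{itemize}
\item By Lemma \ref{npda}, there is a $k$-flip $\NPDA$ $M'$ accepting $L(M)_R=\{u\$v^R\}$.
\item By the induction hypothesis, some $\DTM$ $Z$ accepts $L(M')$ in time $N^{k+3}$ on inputs of length $N$.
\item On input $w$ of length $n$, build a $\DTM$ $Z'$ that loops over the $n+1$ splits $w=uv$. For each split it writes $u\$v^R$ (length $n+1$) on a worktape in $O(n)$ time and runs $Z$ on it, accepting if any run accepts.
\end{itemize}
Correctness is the same argument as in Proposition \ref{flip1}: $w\in L(M)$ exactly when, for the split at the final flip, $u\$v^R\in L(M)_R$. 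The total time is $(n+1)\bigl(O(n)+O((n+1)^{k+3})\bigr)=O(n^{k+4})$ at level $k+1$, which is $n^{(k+1)+3}$ as required.

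The only delicate point is bookkeeping. The input passed to $Z$ has length $n+1$, not $n$, and a multitape $\DTM$ must copy $u\$v^R$ onto a worktape, which costs linear time. Both effects vanish within the $O(\cdot)$. So the bound $n^{k+3}$ should be read up to constant factors, as in the paper's statement of the $O(n^3)$ bound, or else absorbed by standard linear speedup. With that understood, the corollary follows immediately from the inductive proposition with $T(n)=n^3$.
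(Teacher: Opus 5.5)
Your proposal is correct and matches the paper's argument. The paper simply observes that the preceding inductive proposition carries over verbatim to $\NPDA$ via Lemma~\ref{npda}: split $w=uv$ at the last flip, test $u\$v^R$ with the $k$-flip machine, and pay a factor of $n$ per flip. It then plugs in the $O(n^3)$ bound for context-free languages; your handling of the input length $n+1$ and the linear copying cost just makes explicit what the paper leaves implicit, with $n^{k+3}$ read up to constant factors.
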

This generalizes to:
\begin{corollary} Every language accepted by a finite-reversal
$k$-flip $2\NPDA$ can be accepted by a $\DTM$ in $n^{k+3}$ time.
\end{corollary}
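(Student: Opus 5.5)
The plan is to combine the two-way to one-way reduction from the proof of Proposition \ref{iturn} with the $k$-flip time bound from the preceding corollary. Let $M$ be a finite-reversal $k$-flip $2\NPDA$, making at most $r$ input head reversals on accepted inputs, for a constant $r$.

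\textbf{Step 1 (sweeping).} Make $M$ sweeping, as in Lemma \ref{tosweeping}. The turn-repositioning trick pushes and pops a marked segment on top of the stack. A flip between pushing and popping would scramble this segment, so I would restrict the repositioning to stretches containing no flip. Alternatively, $M$ can delay each flip until the head has been repositioned. Since flips do not read the input, both approaches should preserve exactly $k$ flips. The number of extra stack turns is finite, but turns are not restricted here anyway.

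\textbf{Step 2 (unrolling).} Form the one-way machine $M'$ from the proof of Proposition \ref{iturn}. On input $x_0\#x_1\#\cdots\#x_j\#$ with $j \le r$, it simulates the successive sweeps of $M$. The simulation is faithful, so $M'$ is again a $k$-flip $\NPDA$. By the corollary just stated, $L(M')$ is accepted by a $\DTM$ $Z$ in time $N^{k+3}$, where $N=|w|$.

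\textbf{Step 3 (wrapping).} Build a $\DTM$ $Z'$ as in Proposition \ref{time}. On input $x_0$, for each $j\le r$ with the right parity, $Z'$ writes $x_0^{(j)}$ on a worktape in $O(n)$ time and runs $Z$ on it. Here $N=(n+1)(j+1)=O(n)$ because $r$ is constant, so each run costs $O(n^{k+3})$. There are only constantly many runs, giving an $O(n^{k+3})$ bound overall, and linear speedup absorbs the constant factor.

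\textbf{Main obstacle.} I expect Step 1 to be the hardest. The sweeping conversion must interact correctly with flips: the repositioning symbols must never be present in the stack when a flip occurs. The fix I would try first is to have $M$ nondeterministically postpone each flip until it is back at the original head position with the auxiliary segment already popped.
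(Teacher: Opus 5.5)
Your proposal is correct and takes essentially the paper's route. The paper gets this corollary by combining the two-way to one-way unrolling of Proposition~\ref{iturn}, the wrapping of Proposition~\ref{time}, and the $n^{k+3}$ bound for one-way $k$-flip $\NPDA$. Two small points. The Step~1 worry is milder than you fear, since no move of $M$ is simulated during a repositioning excursion, so you only need to perform any flip before or after the excursion. In Step~2, have $M'$ count flips in its finite control, so that it is $k$-flip even on inputs not of the form $x_0^{(j)}$.
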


\section{Conclusions}
In this paper, we provided characterizations of one-way nondeterministic finite-turn pushdown automata and finite-turn pushdown automata augmented with reversal-bounded counters ($\NPCM$). The first characterization uses a restriction of multi-tape nondeterministic finite automata, and the second uses multi-tape reversal-bounded multi-counter machines. This result generalizes a known characterization for linear context-free grammars.

We then studied the complexity of the languages accepted by $\NPCM$. We showed that
finite-turn $\NPCM$ languages are in $\NLOG$. For (unrestricted) $\NPCM$, the languages they accept are in $\P$, and they also can be  accepted by $\log^2 n$-space-bounded deterministic Turing machines, matching the space complexity of
the languages accepted by pushdown automata. $\NPCM$ is an interesting class as it accepts more languages than both reversal-bounded multicounter machines and pushdown automata, but also has a $\coNP$-complete boundedness and emptiness problems.
We then studied extensions to one- and two-way $\NPCM$, where the pushdown can flip a finite number of times.

We summarize the complexity results for the models studied in this paper within Table \ref{complexitytable}.
In \cite{decisionProblemsNCM}, the complexity of some of the same machine models studied here was also studied. The decision problems that were analyzed included the non-emptiness problem and the general membership problem. The general membership problem (sometimes called the uniform membership problem in the literature) is, ``given both the machine $M$ and $w$ as inputs, determine whether $M$ accepts $w$?''. In \cite{decisionProblemsNCM}, 
 it was shown that for any type of machine model, the complexity of general membership for the two-way machine model is polynomial-time equivalent to the complexity of general membership for the one-way machine model which is in turn polynomial-time equivalent to the non-emptiness problem for the one-way machine model. This allows us to fill in the complexity for the general membership problem for most machine models studied by using the complexity for the non-emptiness problem. Since the complexity of the general membership problem for both $\NCM$ and $\NPCM$ are $\NP$-complete \cite{decisionProblemsNCM}, it follows that it must be also for finite-turn $\NPCM$ as well. 
The problem is open for finite-flip $\NPDA$ and its two-way restrictions.
 For the space complexity results, it is known that if all $\NPDA$ languages are in $\DSPACE(\log^2 n)$ \cite{CFLslogsquared}; further, they are all in $\DLOG$ if and only if $\NLOG = \DLOG$ \cite{SUDBOROUGH}; certainly all $\NPCM$ being in $\DLOG$ would also imply $\NLOG = \DLOG$. 
\begin{table}[htp]
\caption{For some different machine models (in column 1), we provide the time complexity (column 2) and space complexity (column 3) of languages accepted by the models, and the complexity of the general membership problem (column 4). The previous paper or result in this paper that provides a proof is given beside each entry.}
\begin{center}
{\footnotesize
\begin{tabular}{|c||c|c|c|}\hline
{\bf Model}				& time & space  & general membership 	\\\hline
$\NPDA$					& $\P$ \cite{HU}	& $\DSPACE(\log^2 n)$ \cite{CFLslogsquared} & $\P$ \cite{GeneralMembership} \\\hline
finite-turn $\NPDA$					& $\P$ \cite{HU} & $\NLOG$ \cite{finiteturn} & $\P$ \cite{GeneralMembership}  \\\hline
$\NCM$					& $\P$ \cite{decisionProblemsNCM}	& $\NLOG$ \cite{decisionProblemsNCM} & $\NP$-complete \cite{HagueLin2011} \\\hline
$2\NCM$					& $\P$ \cite{Gurari}	& $\NLOG$ \cite{Gurari} & $\NP$-complete \cite{decisionProblemsNCM} \\\hline
finite-turn $\NPCM$					& $\P$ Prop \ref{BB4}&  $\NLOG$ Prop \ref{BB4} & $\NP$-complete \cite{decisionProblemsNCM} \\\hline
$\NPCM$					& $\P$ Prop \ref{rough}	& $\DSPACE(\log^2 n)$ Prop \ref{spaceNPCM} & $\NP$-complete \cite{HagueLin2011} \\\hline
poly-reversal & & & \\ $2\NPCM$ & $\P$ Prop \ref{time} & $\DSPACE(\log^2 n))$  Cor \ref{cor1} & $\NP$-complete \cite{HagueLin2011,decisionProblemsNCM}  \\\hline
$2\NPCM$					& $\NP$-complete \cite{decisionProblemsNCM} & $\PSPACE$ \cite{decisionProblemsNCM} & $\NP$-complete \cite{decisionProblemsNCM}\\\hline
poly-reversal & & & \\ finite-flip $2\NPCM$ & $\P$ Corollary \ref{polyreversalflip} & $\DSPACE(\log^2 n))$  Prop \ref{spaceflip} & ?  \\\hline
\end{tabular}}
\end{center}
\label{complexitytable}
\end{table}%

For future directions, although we know that all $\NPCM$ languages are in $\P$, we do not know if we can decide whether or not a word of length $n$ is accepted in $O(n^k)$ for some fixed $k$, as is the case with $\NPDA$ \cite{HU}. This would be especially important towards their practical utility. Also we do not know a good upper bound on the space needed to accept $2\NPCM$ languages. We know
that it is in $\PSPACE$ and if the input head reversals are restricted 
to $c^n$ (where $n$ is the length of the input, $c$ a constant), the
languages are in $\DSPACE(n^2)$ which matches the space upper bound
for languages accepted by $2\NPDA$ (i.e., $2\NPCM$ without the
reversal-bounded counters). 
Also, the time complexity of general membership for finite-flip $\NPDA$ (and its extensions to two-way machines) requires further study.

\section*{Acknowledgements}
Supported, in part, by Natural Sciences and Engineering Research Council of Canada Grant 2022-05092 (Ian McQuillan).
We also thank the reviewers for the thoughtful reading of the manuscript.

\bibliography{bounded}{}
\bibliographystyle{ws-ijfcs}

\end{document}